 \documentclass{sig-alternate-m}

\usepackage{latexsym}


%


\usepackage{amssymb,amsmath,amsfonts,bm}
\usepackage{mathtools}
\usepackage{subfigure}
\usepackage{epic}
\usepackage{eepic}
\usepackage{enumerate}

\usepackage{amsthm}
\usepackage{vinayak}
\usepackage{graphicx, color}
\usepackage{verbatim}
\usepackage[mathscr]{eucal}
\usepackage{xspace}
\usepackage{stmaryrd}
\usepackage{paralist}
\makeatletter
\newif\if@restonecol
\makeatother

\usepackage[boxruled]{algorithm2e}
 \SetAlgoInsideSkip{smallskip}
\SetAlCapSkip{9pt}
\SetAlFnt{\small\rm}

\setlength{\textfloatsep}{5pt}
\renewcommand{\baselinestretch}{0.99}

\makeatletter
\let\oldabs\abs
\def\abs{\@ifstar{\oldabs}{\oldabs*}}
\let\oldnorm\norm
\def\norm{\@ifstar{\oldnorm}{\oldnorm*}}
\makeatother

\makeatletter
\g@addto@macro \normalsize {%
 \setlength\abovedisplayskip{5pt plus 2pt minus 2pt}%
 \setlength\belowdisplayskip{4pt plus 2pt minus 2pt}%
}
\makeatother

\begin{document}

\newcommand{\RM}[1]{{\textcolor{red}{ \textbf{RM:} #1 }}}
\newcommand{\VP}[1]{{\textcolor{blue}{ \textbf{VP:} #1 }}}

\title{
Computing Distances between Reach Flowpipes
}
\author{
Rupak Majumdar
\qquad\qquad
Vinayak S. Prabhu
\normalsize
}

\sloppy

\maketitle

\begin{abstract}
We investigate quantifying the difference between two hybrid dynamical systems
under noise and initial-state uncertainty.
While the set of traces  for these systems is infinite, it is possible to symbolically 
approximate trace sets using \emph{reachpipes} that compute
upper and lower bounds on the evolution of the reachable sets with time.
We estimate distances between corresponding sets of trajectories of two systems
in terms of distances between the reachpipes.

In case of two individual traces, the Skorokhod distance has been proposed as a robust
and efficient notion of distance which captures both value and timing distortions.
In this paper, we extend the computation of the Skorokhod distance to reachpipes, and
provide algorithms to compute upper and lower bounds on the 
distance between two sets of traces.
Our algorithms use new geometric insights that are used to 
compute the worst-case and best-case distances
between two polyhedral sets evolving with time.

\end{abstract}



\section{Introduction}

The quantitative conformance problem between two dynamical systems asks 
how close the traces of the two systems are under a given metric on hybrid
traces \cite{GeorgiosHFDKU14,AbbasMF14,DMP15}.
If the systems are deterministic and start from unique initial conditions, each
has exactly one trace, and the quantitative conformance problem computes the distance
between these two traces.
In this case,  we have shown in previous work
 that the \emph{Skorokhod metric} between traces
provides a robust and efficiently computable distance that captures the intuitive
notion of closeness of two systems \cite{MajumdarP15,DMP15}. 
However, if there is  uncertainty in the initial states and noise in the inputs,
each system defines not just a single trace but a set of traces.
In this work, we investigate algorithms to compute 
distances between  sets of trajectories of 
two dynamical systems under initial state and
input uncertainties.

Given two sets $F_1, F_2$ of trajectories of two dynamical systems, 
the natural generalization of the Skorokhod distance between traces
is to ask what is the farthest  a trajectory in one set can be from
a trajectory in the other, \ie, to compute 
\[
\dvar(F_1, F_2) = \sup_{f_1, f_2} \dist_{\tr}(f_1, f_2)
\]
where $\dist_{\tr}$
is the  given Skorokhod metric on traces\footnote{
In comparing sets, we use the term ``distance'' for similarity/dissimilarity functions $\dvar$
satisfying the triangle inequality; these functions 
are not necessarily metrics, as  $\dvar(F,F)$ need  not be  zero.
}.

Unfortunately, due to the continuous nature of systems, trace sets $F_1$ and $F_2$
are not available in closed form for most kinds of  systems.
Instead, given a trace set $F$,
one approximates it  using a \emph{reachpipe},
a function $R: [0, T]\rightarrow 2^{\reals^d}$, such that 
$R(t) = \cup_{f\in F}\set{f(t)}$, \ie, $R(t)$ is the set of all trace values that can be observed
at time $t$. 
A  reachpipe $R$ can be viewed as an approximation $\fpipe(R)$
to the original set of traces, the approximation $\fpipe(R)$ includes every
 trace $f$ such that $f(t) \in R(t)$,
 not just those allowed by the dynamics.
In practice, even the reachpipe may not have an exact representation, 
and instead, one computes over- or under-approximations to the reachpipe by
computing a  sequence of \emph{reach set} samples at discrete timepoints $t_0, t_1, \dots$.
Indeed, there are several techniques to compute such approximations of
reach sets 
\cite{ChutinanK03,KurzhanskiV06,Girard05,GirardGM06,GuernicG10,FrehseGDCRLRGDM11,SankaranarayananDI08,ColonS11,ChenAS12},
differing in the quality of the approximation, the efficiency of computation, or the representation of the reach set approximations.

We consider the problem of estimating  trajectory set distances when we only have
the sampled sequences of over- and under-approximations of reach sets. 
As a first step, we define a lower and an upper bound on the distance between $F_1$ and $F_2$ based on the
reach set approximations.

Second, we show how to compute these bounds. 
To compute the distance, we re-formulate reachpipes as set-valued traces, \ie, as traces over the
time interval $[0,T]$ where the trace value at time $t$ is the set $R(t)\subseteq \reals^d$.
This alternative viewpoint allows us to define  trace distances $\dist^{\dagger}$
between reachpipes by viewing them as set-valued traces.
We derive relationships between the distances 
$\dist^{\dagger}$ under this alternative viewpoint, 
and distances bounding the trace set distance (obtained using 
approximations to the reachpipes).

Finally, we derive  algorithms to compute the 
$\dist^{\dagger}$ distances between reachpipes in case the underlying metric on traces
is given by the Skorokhod distance and 
the reach set sequences are given as polytopes in $\reals^d$.
The Skorokhod distance on traces takes into account both timing distortions and value differences;
our algorithms lift the metric to reach sets viewed as time-varying polytopes.
The algorithms allows for timing distortions,  
and generalize the Skorokhod distance algorithm
over polygonal lines to  polytopes which vary with time. 
The main technical constructions in  our algorithms  are two novel geometric routines
in  a core part of the  Skorokhod distance algorithm which allow us to move to
the domain of time-varying polytopes for the set distances under consideration.

Putting everything together, we obtain  polynomial time
 algorithms which compute
bounds on traceset distances where the tracesets are 
observed only as reachset sample-polytopes at discrete timepoints.

\smallskip\noindent\textbf{Outline of the Paper.}
In Section~\ref{section:Metrics}, we recall  the Skorokhod trace metric, and the related
\frechet metric.
In Section~\ref{section:Pipes}, we formally present tracepipes and reachpipes,  distances
between trace sets, and bounds on these set distances.
In Section~\ref{section:FrechetPipe} we explore the alternative viewpoint of reachpipes 
being 
set valued traces, and relate  distances under this viewpoint and
distances between reachpipes viewed as trace sets.
In Section~\ref{section:ComputeFrechet}, we solve for the  distance
decision problems  between
reachpipes viewed as time-varying polytopes of $\reals^d$.
In Section~\ref{section:FinalAlgo} we put everything together and present 
various 
algorithms to compute bounds on  Skorokhod traceset distances.

\section{Preliminaries: Trace Metrics}
\label{section:Metrics}

A (finite) \emph{trace} 
$f: [T_i,T_e] \rightarrow  \reals^d $ is a  continuous mapping from a
finite closed interval $[T_i,T_e]$ of $\reals_+$,
with $0 \leq T_i < T_e$, to  $\reals^d$.

\subsection{The Skorokhod Trace Metric}

We define a metric on the space of traces corresponding to a given metric on $\reals^d$. 
A \emph{retiming} $\retime: I \mapsto  I' $, for closed
intervals $I, I'$ of $\reals_+$, 
is an order-preserving (i.e., monotone) continuous bijective function from
$I$ to $I'$; 
thus if $t<t'$ then $\retime(t) < \retime(t')$.
Let $\retimeclass_{I \mapsto  I' }$
be the class of retiming functions from $I$ to $ I' $
and let $\iden$ be the identity retiming.
Given a trace  $f: I_{f} \rightarrow \reals^d$, and a retiming
$\retime: I \mapsto   I_{f} $; the function
$f\circ \retime$ is another trace from $I$ to $\reals^d$.

\begin{definition}[Skorokhod Metric]
Given a retiming $\retime:  I \mapsto I' $, define 
\[
||\retime-\iden||_{\sup} := \sup_{t\in I }|\retime(t)-t|.\]
Given two traces $f: I_{f}\mapsto \reals^d $ and $f': I_{f'} \mapsto \reals^d$,
a norm  $L$ on $\reals^d$,
and a retiming $\retime:  I_{f} \mapsto   I_{f'}$, define
\[
\norm{f\,-\, f'\circ \retime}_{\sup}
:=
\sup\nolimits_{t\in I_{f}} \norm{ f(t) -  f'\left(\retime(t)\right)}_L.
\]
The \emph{Skorokhod metric}\footnote{
	The two components of the Skorokhod metric (the retiming, and the value difference components) can
	be weighed with different weights -- this simply corresponds to a change of scale.}
between the traces  $f$ and $f'$  is defined to be:
\begin{equation*}
\label{equation:Skoro}
\dist_{\skoro}(f,f') := \inf_{r\in  \retimeclass_{ I_{f} \mapsto   I_{f'}}}
\max\left(\norm{\retime-\iden}_{\sup} \, ,\,  \norm{f\,-\, f'\circ \retime}_{\sup}\right).\qed
\end{equation*}
\end{definition}

Intuitively, the  Skorokhod metric
incorporates two components: the first component quantifies
the {\em timing discrepancy} of the timing distortion required to ``match'' the two traces,
and the second quantifies the  \emph{value mismatch}  (in the vector space 
$(\reals^d, \norm{\cdot}_L)$)
of the values under the timing distortion.
In the retimed trace $f\circ \retime$, we see exactly the same values as in $f$, in
exactly the same order, but the times at which the values are seen can be different.


\subsection{The \frechet Trace Metric}
\label{subsection:MovingFrechet}

We showed in~\cite{MajumdarP15} that the Skorokhod metric is related to another metric,
the \frechet metric, over traces.
We recall the definition and the relationship.

\begin{definition}[\frechet metric]
\label{def:frechet}
Let $\curve_1:I_1 \rightarrow \reals^d$ and $\curve_2:I_2 \rightarrow \reals^d$ be traces.
The \frechet metric between the two traces $\curve_1,\curve_2$ (given a norm $L$ on
$\reals^d$) is defined to be
\[
\mspace{-5mu}
\dist_{\fre}(\curve_1,\curve_2) :=\! \inf_{\substack{
\alpha_1: [0,1] \rightarrow I_1 \\
\alpha_2: [0,1] \rightarrow I_2 }}\quad
\max_{0\leq \theta \leq 1} \norm{\curve_1\left(\alpha_1(\theta)\right) - \curve_2\left(\alpha_2(\theta)\right)}_L
\]
where $\alpha_1, \alpha_2$ range over  continuous and strictly increasing bijective functions
onto $I_1$ and  $I_2$, respectively.\qed
\end{definition}
Intuitively, the \emph{reparameterizations} $\alpha_1, \alpha_2$ control the ``speed'' of traversal along the
two traces by two entities.
The positions of the two entities in the two traces at ``time'' $\theta$  is given by
$\alpha_1(\theta)$ and $\alpha_2(\theta)$ respectively; with the value of the traces
at those positions being
$\curve_1\left(\alpha_1(\theta)\right) $, and  $\curve_2\left(\alpha_2(\theta)\right) $.
The two entities always have a speed strictly greater than $0$.

Given a trace $f: [T_i,T_e] \rightarrow  \reals^d $, we define 
the \emph{time-explicit trace} $\curve_f: [T_i,T_e] \rightarrow  \reals^{d}\times \reals $ where we 
add the time value as an  extra dimension, that is, $\curve_f(t) = (f(t), t)$ for all $t\in [T_i,T_e]$.
Given a value $\tuple{\bp, t} \in \reals^{d}\times \reals$, and a 
a norm $L$
 over
$\reals^d$, define the norm 
\begin{equation}
\label{equation:LMaxDef}
\norm{\tuple{\bp, t}}_{L^{\max}} = \max\left( \norm{\bp}_L, \abs{t}\right).
\end{equation}

\begin{proposition}[From Skorokhod to \frechet~\cite{MajumdarP15}]
\label{proposition:SkoroToFrechet}
Let $f: [T_i^f, T_e^f] \rightarrow \reals^d$ and $g: [T_i^g, T_e^g] \rightarrow \reals^d$ be
two  continuous traces.
Consider the corresponding time-explicit traces  $C_f : [T_i^f, T_e^f]\rightarrow \reals^{d+1}$ and
$C_g : [T_i^g, T_e^g]\rightarrow \reals^{d+1}$.
Consider the Skorokhod distance $\dist_{\skoro}(f,g)$  with respect to  a given norm $L$ over 
$\reals^d$.
We have
\[
\dist_{\skoro}(f,g) =\dist_{\fre}(\curve_f, \curve_g),
 \]
where the \frechet distance $\dist_{\fre}(\curve_f, \curve_g)$ is with respect to
the norm $L^{\max}$ over $ \reals^{d+1}$.\qed
\end{proposition}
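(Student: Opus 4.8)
The plan is to prove the two inequalities $\dist_{\fre}(\curve_f, \curve_g) \le \dist_{\skoro}(f,g)$ and $\dist_{\skoro}(f,g) \le \dist_{\fre}(\curve_f, \curve_g)$ separately, in each case by exhibiting an explicit correspondence between a retiming $\retime$ on the Skorokhod side and a pair of reparameterizations $(\alpha_1,\alpha_2)$ on the \frechet side that preserves the value of the respective objective functions. The single algebraic fact that makes the two objectives coincide is that, since $\norm{\tuple{\bp,t}}_{L^{\max}} = \max(\norm{\bp}_L, \abs{t})$, the pointwise \frechet cost at parameter $\theta$ is exactly $\max\bigl(\norm{f(\alpha_1(\theta)) - g(\alpha_2(\theta))}_L,\ \abs{\alpha_1(\theta) - \alpha_2(\theta)}\bigr)$, which pairs the value-mismatch term with the timing term of the Skorokhod metric. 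Throughout I will use the elementary identity $\sup_\theta \max(a(\theta),b(\theta)) = \max(\sup_\theta a,\ \sup_\theta b)$ to exchange the supremum with the max of the two components.

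For the first inequality, given any retiming $\retime \in \retimeclass_{I_f \mapsto I_g}$, I would fix an arbitrary reparameterization $\alpha_1:[0,1]\to I_f$ (a continuous strictly increasing bijection) and set $\alpha_2 := \retime \circ \alpha_1$. Since $\retime$ is an order-preserving continuous bijection onto $I_g$, the composite $\alpha_2$ is again a valid reparameterization onto $I_g$. Writing $t=\alpha_1(\theta)$, the pointwise \frechet cost becomes $\max(\norm{f(t)-g(\retime(t))}_L, \abs{t-\retime(t)})$, and taking the supremum over $\theta$ (equivalently over $t\in I_f$) and applying the $\max$-of-sup identity shows that the \frechet objective for $(\alpha_1,\alpha_2)$ equals exactly $\max(\norm{\retime-\iden}_{\sup}, \norm{f-g\circ\retime}_{\sup})$. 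Hence $\dist_{\fre}(\curve_f,\curve_g)$ is at most this quantity for every $\retime$, and taking the infimum over $\retime$ yields $\dist_{\fre}(\curve_f,\curve_g)\le \dist_{\skoro}(f,g)$.

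For the reverse inequality, given any admissible pair $(\alpha_1,\alpha_2)$, I would define the retiming $\retime := \alpha_2 \circ \alpha_1^{-1}$. Because $\alpha_1$ is a continuous strictly increasing bijection from $[0,1]$ onto $I_f$, its inverse $\alpha_1^{-1}:I_f\to[0,1]$ is again continuous and strictly increasing, so $\retime$ is an order-preserving continuous bijection from $I_f$ onto $I_g$, i.e.\ a genuine element of $\retimeclass_{I_f \mapsto I_g}$. Substituting $\theta=\alpha_1^{-1}(t)$ and again invoking the $\max$-of-sup identity shows that the Skorokhod objective for $\retime$ equals the \frechet objective for $(\alpha_1,\alpha_2)$; taking the infimum over all pairs gives $\dist_{\skoro}(f,g)\le \dist_{\fre}(\curve_f,\curve_g)$, and combining the two bounds completes the proof.

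The individual steps are routine, so the main thing to get right is the bookkeeping of domains and the verification that the two constructed maps genuinely lie in the correct optimization classes -- in particular that composing a retiming with a reparameterization, and composing a reparameterization with the inverse of another, again yields order-preserving continuous bijections between the intended intervals, with endpoints matched up correctly. The only real insight is the encoding of the timing-distortion component of the Skorokhod metric as an extra coordinate measured in the $L^{\max}$ norm; once that is in place, equality of the two infima follows because the correspondence $\retime \leftrightarrow (\alpha_1,\alpha_2)$ is objective-preserving in both directions.
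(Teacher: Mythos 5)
Your argument is correct. The paper itself does not reprove Proposition~\ref{proposition:SkoroToFrechet} (it is imported from~\cite{MajumdarP15}), and your proof is exactly the standard one underlying that result: the objective-preserving correspondence $\retime \mapsto (\alpha_1, \retime\circ\alpha_1)$ in one direction and $(\alpha_1,\alpha_2)\mapsto \alpha_2\circ\alpha_1^{-1}$ in the other, together with the observation that the $L^{\max}$ norm on the time-explicit traces makes the pointwise \frechet cost equal to $\max\bigl(\norm{f(t)-g(\retime(t))}_L,\ \abs{t-\retime(t)}\bigr)$ and the identity $\sup_\theta\max(a,b)=\max(\sup_\theta a,\sup_\theta b)$.
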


\section{Pipes \& Pipe-Variation Distances}
\label{section:Pipes}

\subsection{Tracepipes, Reachpipes and Set Distances}

A \emph{tracepipe} $F$  is a nonempty collection of traces over some closed interval $[T_i,T_e]$.
A \emph{reachpipe} $R: [T_i, T_e] \rightarrow 2^{\reals^d}\setminus \emptyset$ maps
a finite closed interval $[T_i,T_e]$  of $\reals_+$, denoted $\tdom(R)$, 
to non-empty subsets of $\reals^d$.
To a reachpipe $R$, we associate a tracepipe $\fpipe(R)$ consisting of all continuous traces
$f$  over $\tdom(R) $ such that $f(t) \in R(t)$ for all $t\in \tdom(R)$.
Dually, corresponding to each tracepipe $F$, we associate the reachpipe $\rpipe(F)$, 
over the same time-domain, defined by 
$\rpipe\!\left(F\right)\left(t\right) =\cup_{f\in F} \set{f(t)}$.
Note that $F \subseteq \fpipe\left(\rpipe\left(F\right)\right)$,
but equality need not hold:
$ \fpipe\left(\rpipe\left(F\right)\right)$ may contain more traces than $F$.

A reachpipe $R': [T_i,T_e] \rightarrow 2^{\reals^d}$ is an \emph{over-approximation}
(respectively, \emph{under-approximation})
of a reachpipe $R: [T_i,T_e] \rightarrow 2^{\reals^d}$ if for each $t\in [T_i, T_e]$, we have 
$R(t) \subseteq R'(t)$
(respectively, $R'(t) \subseteq R(t)$).

\begin{example}
Consider a linear dynamical system in $\reals$ described by
$\dot{x} = a x$, for $a > 0$ with initial state $x_0\in [0, 0.1]$
over the time interval $[0,10]$.
For a fixed value of $x_0$, we get a trace $x_0 e^{at}$.
Let $F = \set{ f_{x_0} \mid x_0\in [0, 0.1] \mbox{ and } f_{x_0}(t) = x_0 e^{at}\mbox{ for }t\in[0,10]}$ be a tracepipe.
The reachpipe $\rpipe(F)$ corresponding to the tracepipe $F$ is given by
$\rpipe(F) (t) = [0, 0.1e^{at}]$ for $t\in [0,10]$.
Observe that $\fpipe\left(\rpipe(F) \right)$ contains the more traces than 
$F$, for instance, the constant trace $f(t)= 0.1$.\qed
\end{example}

Let $\dist_{\tr}$ be a given metric on traces.
We define the \emph{variation distance}
$\dvar(F_1, F_2)$ 
between
two tracepipes $F_1$ and $F_2$ corresponding to the trace metric  $\dist_{\tr}$  as 
\begin{equation}
\dvar(F_1, F_2) := \sup_{f_1\in F_1, f_2\in F_2} \dist_{\tr}(f_1, f_2)
\end{equation}
The value $\dvar(F_1, F_2)$ gives us the maximum possible inter-trace distance
if one trace is  from $F_1$ and the other from $F_2$.
Notice that for all tracepipes $F_1, F_2, F_3$, we have that
\begin{compactenum}
\item 
$\dvar(F_1, F_2) \geq 0$;
\item $\dvar(F_1, F_2) = \dvar(F_2, F_1)$; and
\item 
$\dvar(F_1, F_3) \leq \dvar(F_1, F_2) + \dvar(F_2, F_3) $.
\end{compactenum}
We may however have $\dvar(F, F) > 0$, thus, $\dvar$ need not be a metric over
tracepipes.
The value $\dvar(F, F) $ gives us the maximum
distance amongst traces in $F$ according to the original trace metric $\dist_{\tr}$.




Tracepipes cannot be constructed for most dynamical systems.
However, \emph{reachpipe} sets can be over/under-approximated at desired 
timepoints using analytic techniques.
In the next subsection, we present a  framework for bounding the tracepipe
distance $\dvar(F_1, F_2) $ using over/under-approximated reachpipes.

\subsection{Approximating the Variation Distance}
Let $F_1$ and $F_2$ be tracepipes. 
Since $F\subseteq \fpipe(\rpipe(F))$ for any tracepipe $F$, and $\rpipe$, $\fpipe$, and the variation distance $\dvar$
are all monotonic, we have that
\begin{equation}
\label{equation:OverApproxReach}
\dvar(F_1, F_2) \leq \dvar\big(\fpipe\left(\ceil{\rpipe\left(F_1\right)}\right), 
\fpipe\left(\ceil{\rpipe\left(F_2\right)}\right)
 \big)
\end{equation} 
for any over-approximations
$\ceil{\rpipe(F_1)}$ and $\ceil{\rpipe(F_2)}$ of the reachpipes
$\rpipe(F_1)$ and $\rpipe(F_2)$.
Thus, in order to get  an upper bound on  $\dvar(F_1, F_2)$  we can use over-approximations
of the corresponding reachpipes.

Define the \emph{minimum set distance}:
\begin{equation}
\dist_{\min}(F_1, F_2) := \inf_{f_1\in F_1, f_2\in F_2} \dist(f_1, f_2)
\end{equation}
For this distance, it is clear that
\[
 \dist_{\min}\big( \fpipe\left( \rpipe(F_1)\right),  \fpipe\left( \rpipe(F_2)\right)\big)
\leq   \dvar(F_1, F_2) 
\]
Combining this with Equation~\eqref{equation:OverApproxReach}, we get the following
Proposition for bounding the variation distance.

\begin{proposition}[Tracepipe Variation Distance Bounds]
\label{proposition:OverUnder}
Let $F_1$ and $F_2$ be tracepipes, and let
$\ceil{\rpipe(F_1)}$ and $\ceil{\rpipe(F_2)}$ be over-approximations of the 
reachpipes $\rpipe(F_1)$ and $\rpipe(F_2)$.
We have
\begin{gather*}
\dist_{\min}\!\Big( \fpipe\left( \ceil{\rpipe(F_1)}\right),  \fpipe\left( \ceil{\rpipe(F_2)}\right)\!\Big) 
\ \leq\    \dvar(F_1, F_2)\\
 \dvar(F_1, F_2)
\ \leq\  \dvar\!\Big( \fpipe\left( \ceil{\rpipe(F_1)}\right),  \fpipe\left( \ceil{\rpipe(F_2)}\right)\!\Big) 
\qedhere\qed
\end{gather*}
\end{proposition}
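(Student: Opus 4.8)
The plan is to derive both inequalities from a single chain of set inclusions, together with the monotonicity properties of the sup-based distance $\dvar$ and the inf-based distance $\dist_{\min}$ (where $\dist = \dist_{\tr}$ is the given trace metric in both definitions).

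First I would fix $i \in \set{1,2}$ and record the basic inclusion $F_i \subseteq \fpipe(\rpipe(F_i))$ noted just before the example. Because $\ceil{\rpipe(F_i)}$ over-approximates $\rpipe(F_i)$, we have $\rpipe(F_i)(t) \subseteq \ceil{\rpipe(F_i)}(t)$ at every time $t$; and since enlarging a reachpipe pointwise can only admit more traces, $\fpipe$ is monotone with respect to pointwise inclusion, so $\fpipe(\rpipe(F_i)) \subseteq \fpipe(\ceil{\rpipe(F_i)})$. Composing the two inclusions yields the key fact
\[
F_i \subseteq \fpipe\left(\ceil{\rpipe(F_i)}\right) \qquad (i = 1,2).
\]

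For the upper bound, I would use that $\dvar$ is monotone in both arguments: being a supremum of $\dist_{\tr}$ over all pairs drawn from the two sets, enlarging either set can only increase its value. Instantiating this with the inclusions above gives $\dvar(F_1, F_2) \leq \dvar\big(\fpipe(\ceil{\rpipe(F_1)}), \fpipe(\ceil{\rpipe(F_2)})\big)$, which is the second claimed inequality. For the lower bound, I would use the dual fact that $\dist_{\min}$ is \emph{anti}-monotone: being an infimum over all pairs, enlarging either set can only decrease it, so the same inclusions give $\dist_{\min}\big(\fpipe(\ceil{\rpipe(F_1)}), \fpipe(\ceil{\rpipe(F_2)})\big) \leq \dist_{\min}(F_1, F_2)$. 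Finally, since tracepipes are nonempty by definition, the infimum over all pairs $(f_1, f_2)$ is bounded above by the supremum over the same pairs, so $\dist_{\min}(F_1, F_2) \leq \dvar(F_1, F_2)$; chaining the two inequalities delivers the first claim.

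The argument is entirely a matter of keeping the monotonicity directions straight — the supremum grows with the sets while the infimum shrinks — so I do not anticipate any real obstacle. The only points requiring a moment of care are that the sets involved are nonempty, which is needed to compare the infimum with the supremum over the same index set, and that $\fpipe$ is read as monotone under pointwise inclusion of reachpipes; both are immediate from the definitions.
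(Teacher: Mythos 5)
Your proposal is correct and follows essentially the same route as the paper: the inclusion $F_i \subseteq \fpipe(\ceil{\rpipe(F_i)})$ combined with monotonicity of the sup-based $\dvar$, anti-monotonicity of the inf-based $\dist_{\min}$, and the trivial bound $\dist_{\min}(F_1,F_2) \leq \dvar(F_1,F_2)$ over the nonempty index set. The paper merely states these steps more tersely (passing through the exact reachpipes before the over-approximations), so your write-up is a faithful, slightly more explicit version of the same argument.
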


\smallskip\noindent\textbf{Remark: Hausdorff Metric.}
A natural candidate for under-approximating the variation distance is the
\emph{Hausdorff} set metric, defined as:
\begin{equation}
\mspace{-8mu}\dist_H(F_1, F_2) = \max\!\left\{
\sup_{f_1 \in F_1}\inf_{f_2\in F_2}\dist(f_1, f_2)\, , \, \sup_{f_2 \in F_2}\inf_{f_1\in F_1}
\dist(f_1, f_2)\!\right\}
\end{equation}
Intuitively, if  $\sup_{f_1 \in F_1}\inf_{f_2\in F_2}\dist(f_1, f_2)$ is less than $\delta$,
then given any trace $f_1\in F_1$, there exists a trace $f_2\in F_2$ such that
$ \dist(f_1, f_2) < \delta$.
Note that $\sup_{f_1 \in F_1}\inf_{f_2\in F_2}\dist(f_1, f_2) \leq \dvar(F_1, F_2)$ and also
 $\sup_{f_2 \in F_2}\inf_{f_1\in F_1}\dist(f_1, f_2) \leq \dvar(F_1, F_2)$, thus, we have
\begin{equation}
\dist_H(F_1, F_2)  \leq \dvar(F_1, F_2)
\end{equation}
Thus, on first glance,  the Hausdorff metric appears to be a good candidate for under-approximating the variation distance.
As mentioned earlier, obtaining tracepipe sets is usually not possible; we have to work with
over or under-approximations obtained by way of reachpipes.
Unfortunately, there is no obvious relationship between
$\dist_H(A, B) $ and $\dist_H(A', B')$ for $A\subseteq A'$ and $B\subseteq B'$.
This can be seen pictorially in Figure~\ref{figure:Hausdorff}.
The sets $A, B, A', B'$ are subsets of the interval $[0,10]$.
In the first case, we have $\dist_H(A, B)  > \dist_H(A', B')$ and in the second,
 $\dist_H(A, B)  < \dist_H(A', B')$.

\begin{figure}[h]
\vspace*{-0.5em}
\strut\centerline{\setlength{\unitlength}{0.00034996in}
\begingroup\makeatletter\ifx\SetFigFont\undefined%
\gdef\SetFigFont#1#2#3#4#5{%
  \reset@font\fontsize{#1}{#2pt}%
  \fontfamily{#3}\fontseries{#4}\fontshape{#5}%
  \selectfont}%
\fi\endgroup%
{\renewcommand{\dashlinestretch}{30}
\begin{picture}(4527,1056)(0,-10)
\path(15,420)(4515,420)
\path(15,420)(4515,420)
\thicklines
\texture{44555555 55aaaaaa aa555555 55aaaaaa aa555555 55aaaaaa aa555555 55aaaaaa 
	aa555555 55aaaaaa aa555555 55aaaaaa aa555555 55aaaaaa aa555555 55aaaaaa 
	aa555555 55aaaaaa aa555555 55aaaaaa aa555555 55aaaaaa aa555555 55aaaaaa 
	aa555555 55aaaaaa aa555555 55aaaaaa aa555555 55aaaaaa aa555555 55aaaaaa }
\shade\path(2715,645)(3615,645)(3615,420)
	(2715,420)(2715,645)
\path(2715,645)(3615,645)(3615,420)
	(2715,420)(2715,645)
\whiten\path(915,645)(1815,645)(1815,420)
	(915,420)(915,645)
\path(915,645)(1815,645)(1815,420)
	(915,420)(915,645)
\put(15,15){\makebox(0,0)[lb]{\smash{{\SetFigFont{6}{7.2}{\familydefault}{\mddefault}{\updefault}0}}}}
\put(915,15){\makebox(0,0)[lb]{\smash{{\SetFigFont{6}{7.2}{\familydefault}{\mddefault}{\updefault}2}}}}
\put(1815,15){\makebox(0,0)[lb]{\smash{{\SetFigFont{6}{7.2}{\familydefault}{\mddefault}{\updefault}4}}}}
\put(2670,15){\makebox(0,0)[lb]{\smash{{\SetFigFont{6}{7.2}{\familydefault}{\mddefault}{\updefault}6}}}}
\put(3525,15){\makebox(0,0)[lb]{\smash{{\SetFigFont{6}{7.2}{\familydefault}{\mddefault}{\updefault}8}}}}
\put(1005,870){\makebox(0,0)[lb]{\smash{{\SetFigFont{6}{7.2}{\familydefault}{\mddefault}{\updefault}$A$}}}}
\put(2895,870){\makebox(0,0)[lb]{\smash{{\SetFigFont{6}{7.2}{\familydefault}{\mddefault}{\updefault}$B$}}}}
\end{picture}
}}
\hspace*{-3mm}
\begin{minipage}[t]{0.22\textwidth}
\setlength{\unitlength}{0.00034996in}
\begingroup\makeatletter\ifx\SetFigFont\undefined%
\gdef\SetFigFont#1#2#3#4#5{%
  \reset@font\fontsize{#1}{#2pt}%
  \fontfamily{#3}\fontseries{#4}\fontshape{#5}%
  \selectfont}%
\fi\endgroup%
{\renewcommand{\dashlinestretch}{30}
\begin{picture}(4527,1056)(0,-10)
\path(15,420)(4515,420)
\path(15,420)(4515,420)
\thicklines
\whiten\path(915,645)(2220,645)(2220,420)
	(915,420)(915,645)
\path(915,645)(2220,645)(2220,420)
	(915,420)(915,645)
\texture{44555555 55aaaaaa aa555555 55aaaaaa aa555555 55aaaaaa aa555555 55aaaaaa 
	aa555555 55aaaaaa aa555555 55aaaaaa aa555555 55aaaaaa aa555555 55aaaaaa 
	aa555555 55aaaaaa aa555555 55aaaaaa aa555555 55aaaaaa aa555555 55aaaaaa 
	aa555555 55aaaaaa aa555555 55aaaaaa aa555555 55aaaaaa aa555555 55aaaaaa }
\shade\path(2310,645)(3615,645)(3615,420)
	(2310,420)(2310,645)
\path(2310,645)(3615,645)(3615,420)
	(2310,420)(2310,645)
\put(15,15){\makebox(0,0)[lb]{\smash{{\SetFigFont{6}{7.2}{\familydefault}{\mddefault}{\updefault}0}}}}
\put(915,15){\makebox(0,0)[lb]{\smash{{\SetFigFont{6}{7.2}{\familydefault}{\mddefault}{\updefault}2}}}}
\put(1815,15){\makebox(0,0)[lb]{\smash{{\SetFigFont{6}{7.2}{\familydefault}{\mddefault}{\updefault}4}}}}
\put(2670,15){\makebox(0,0)[lb]{\smash{{\SetFigFont{6}{7.2}{\familydefault}{\mddefault}{\updefault}6}}}}
\put(3525,15){\makebox(0,0)[lb]{\smash{{\SetFigFont{6}{7.2}{\familydefault}{\mddefault}{\updefault}8}}}}
\put(1005,870){\makebox(0,0)[lb]{\smash{{\SetFigFont{6}{7.2}{\familydefault}{\mddefault}{\updefault}$A'$}}}}
\put(2895,870){\makebox(0,0)[lb]{\smash{{\SetFigFont{6}{7.2}{\familydefault}{\mddefault}{\updefault}$B'$}}}}
\end{picture}
}
\end{minipage}
$\qquad\ $
\begin{minipage}[t]{0.22\textwidth}
\setlength{\unitlength}{0.00034996in}
\begingroup\makeatletter\ifx\SetFigFont\undefined%
\gdef\SetFigFont#1#2#3#4#5{%
  \reset@font\fontsize{#1}{#2pt}%
  \fontfamily{#3}\fontseries{#4}\fontshape{#5}%
  \selectfont}%
\fi\endgroup%
{\renewcommand{\dashlinestretch}{30}
\begin{picture}(4608,1056)(0,-10)
\path(54,420)(4554,420)
\path(54,420)(4554,420)
\thicklines
\whiten\path(54,645)(1854,645)(1854,420)
	(54,420)(54,645)
\path(54,645)(1854,645)(1854,420)
	(54,420)(54,645)
\texture{44555555 55aaaaaa aa555555 55aaaaaa aa555555 55aaaaaa aa555555 55aaaaaa 
	aa555555 55aaaaaa aa555555 55aaaaaa aa555555 55aaaaaa aa555555 55aaaaaa 
	aa555555 55aaaaaa aa555555 55aaaaaa aa555555 55aaaaaa aa555555 55aaaaaa 
	aa555555 55aaaaaa aa555555 55aaaaaa aa555555 55aaaaaa aa555555 55aaaaaa }
\shade\path(2754,645)(4554,645)(4554,420)
	(2754,420)(2754,645)
\path(2754,645)(4554,645)(4554,420)
	(2754,420)(2754,645)
\put(54,15){\makebox(0,0)[lb]{\smash{{\SetFigFont{6}{7.2}{\familydefault}{\mddefault}{\updefault}0}}}}
\put(954,15){\makebox(0,0)[lb]{\smash{{\SetFigFont{6}{7.2}{\familydefault}{\mddefault}{\updefault}2}}}}
\put(1854,15){\makebox(0,0)[lb]{\smash{{\SetFigFont{6}{7.2}{\familydefault}{\mddefault}{\updefault}4}}}}
\put(2709,15){\makebox(0,0)[lb]{\smash{{\SetFigFont{6}{7.2}{\familydefault}{\mddefault}{\updefault}6}}}}
\put(3564,15){\makebox(0,0)[lb]{\smash{{\SetFigFont{6}{7.2}{\familydefault}{\mddefault}{\updefault}8}}}}
\put(1044,870){\makebox(0,0)[lb]{\smash{{\SetFigFont{6}{7.2}{\familydefault}{\mddefault}{\updefault}$A'$}}}}
\put(2934,870){\makebox(0,0)[lb]{\smash{{\SetFigFont{6}{7.2}{\familydefault}{\mddefault}{\updefault}$B'$}}}}
\end{picture}
}
\end{minipage}
\vspace*{-1em}
\caption{Sets $A, B$, and two cases of $A\subseteq A'$, $B\subseteq B'$}
\label{figure:Hausdorff}
\end{figure}
Thus, we cannot use the reachpipe over-approximations  
$\ceil{\rpipe(F_1)}$ and $\ceil{\rpipe(F_2)}$
to get a lower (or upper) bound on $\dist_H(F_1, F_2)$.
This problem occurs even  even in the case of  exact reachpipes
$\rpipe(F_1), \rpipe(F_2)$ as we may have $F_1\subsetneq \fpipe\left( \rpipe(F_1)\right)$ and 
$F_2\subsetneq \fpipe\left( \rpipe(F_2)\right)$

For the special case where $F_1= \set{f_1}$ is a singleton set, we have
\begin{equation}
\dist_H(\set{f_1}, F_2) = \dvar(\set{f_1}, F_2)
\end{equation}
Thus, in case of a singleton $F_1=\set{f_1}$, the value
$\dist_H\big( \fpipe\left( \rpipe(F_1)\right),  \fpipe\left( \rpipe(F_2)\right)\big)$ is equal to the
RHS of Equation~\eqref{equation:OverApproxReach}, and hence
only gives an upper bound on $\dvar(F_1, F_2)$.

We note that even if we under-approximate the reach sets to obtain
$\fpipe\left(\floor{\rpipe(F_1)}\right)$, and $\fpipe\left(\floor{\rpipe(F_2)}\right)$,
we still do not have a lower bound for the
Hausdorff distance as we cannot tell in which direction the distance changes on taking
subsets (Figure~\ref{figure:Hausdorff}).
In addition,  we may have $F \subsetneq \fpipe\left(\floor{\rpipe(F)}\right)$ as
for a traceset $F$, as $\fpipe\left(\rpipe(F)\right)$ over-approximate $F$,
and competes with the fact that
$\floor{\rpipe(F)}$ under-approximates $\rpipe(F)$.

\subsection{Constructing Reachpipes}
\label{subsection:ReachpipeConstruction}

For most dynamical systems, one cannot get a closed-form representation for 
the set of all traces.
However, reachpipe sets can be over/under-approximated at desired 
timepoints using analytic techniques 
\cite{ChutinanK03,KurzhanskiV06,Girard05,GirardGM06,GuernicG10,FrehseGDCRLRGDM11,SankaranarayananDI08,ColonS11,ChenAS12}.
The  procedure for bounding the tracepipe variation distance in this
paper   operates on reachpipes (the bounding quantities are as in 
Proposition~\ref{proposition:OverUnder}).
As a result it is necessary to choose an appropriate representation
of reachpipes so that the 
distance computation procedure remains tractable.

\smallskip\noindent\textbf{Reachpipe Completion.}
Typically, reachset computation tools give us reach sets at sampled time-points, 
\ie, the tools give us  reachpipe samples $R(t_0), \dots, R(t_m)$
at discrete time-points $t_0, \dots, t_m$.
We need to ``complete'' the reachpipes for intermediate time values.
We do this completion by generalizing linear interpolation using
scaling and Minkowski sums.
%
%
Specifically, we define an over-approximated completion of $R$ in between $t_k, t_{k+1}$ as follows
for $t_k \leq t \leq t_{k+1}$:
\[
\mspace{-10mu}
\ceil{R}(t) = \left\{ 
\bp + \frac{t-t_k}{t_{k+1}-t_k}
\vdot (\bq- \bp)\  \Big\arrowvert\ 
\bp\in  R(t_k) \text{ and } \bq\in  R(t_{k+1})  \right\}.
\]
For a set $A\subseteq \reals^d$,
given $\lambda \in \reals$,  let $\lambda\cdot A$ denote
$\set{\lambda\cdot \bp \mid \bp\in A}$.
The Minkowski sum of two sets $A, B$ is defined as
$A+B = \set{\bp+\bq \mid \bp \in A \text{ and } \bq \in B}$.
We also denote $-1 \cdot A$ by $-A$.
Under this notation, we have
\begin{equation}
\label{equation:CompletionOver}
\ceil{R}(t) = R(t_k) + \frac{t-t_k}{t_{k+1}-t_k}\cdot\left(R(t_{k+1})  - R(t_{k})\right).
\end{equation}

Alternately, one can observe individual traces of the system at discrete times and
complete the trace by linear interpolation at intermediate points.
That is, 
suppose we observe a trace $f$ at discrete points $t_k$ and $t_{k+1}$: $f(t_k) = \bp$ and
$f(t_{k+1}) = \bp'$ and complete the trace as $f(t) = \bp + \frac{t - t_k}{t_{k+1} - t_k}(\bp' - \bp)$
for all points $t_k \leq t \leq t_{k+1}$.
We explain why Equation~\eqref{equation:CompletionOver} is an over-approximation for linearly 
interpolated completions of observed trace samples.
Recall that 
\[R(t)  = \set{\bp \mid \text{ there exists some trace } f \text{ such that }
f(t) = \bp}.\]
Under linear interpolation completion of traces, this set is
\begin{equation}
\label{equation:CompletionExact}
R(t)  = \left\{\bp + \frac{t-t_k}{t_{k+1}-t_k}
\vdot (\bq- \bp)\  \bigg\arrowvert
\begin{array}{l}
\text{ there exists a trace } f\\
 \text{ such that }
f(t_k) = \bp \text{ and }\\
 \ f(t_{k+1}) = \bq
\end{array}
\mspace{-15mu}
\right\}
\end{equation}

In general $R(t) $ as defined in
Equation~\eqref{equation:CompletionExact}
 can be a strict subset of $\ceil{R}(t)$ as defined in 
Equation~\eqref{equation:CompletionOver}.
\begin{figure}[h]
\begin{minipage}[t]{0.2\textwidth}
\setlength{\unitlength}{0.00034996in}
\begingroup\makeatletter\ifx\SetFigFont\undefined%
\gdef\SetFigFont#1#2#3#4#5{%
  \reset@font\fontsize{#1}{#2pt}%
  \fontfamily{#3}\fontseries{#4}\fontshape{#5}%
  \selectfont}%
\fi\endgroup%
{\renewcommand{\dashlinestretch}{30}
\begin{picture}(3354,1819)(0,-10)
\texture{44555555 55aaaaaa aa555555 55aaaaaa aa555555 55aaaaaa aa555555 55aaaaaa 
	aa555555 55aaaaaa aa555555 55aaaaaa aa555555 55aaaaaa aa555555 55aaaaaa 
	aa555555 55aaaaaa aa555555 55aaaaaa aa555555 55aaaaaa aa555555 55aaaaaa 
	aa555555 55aaaaaa aa555555 55aaaaaa aa555555 55aaaaaa aa555555 55aaaaaa }
\shade\path(99,687)(3249,687)(3249,12)
	(99,462)(99,687)
\path(99,687)(3249,687)(3249,12)
	(99,462)(99,687)
\shade\path(102,1117)(3252,1117)(3252,1792)
	(192,1297)(192,1117)(102,1117)
\path(102,1117)(3252,1117)(3252,1792)
	(192,1297)(192,1117)(102,1117)
\blacken\path(57,1342)(192,1342)(192,1117)
	(57,1117)(57,1342)
\path(57,1342)(192,1342)(192,1117)
	(57,1117)(57,1342)
\blacken\path(3207,712)(3342,712)(3342,37)
	(3207,37)(3207,712)
\path(3207,712)(3342,712)(3342,37)
	(3207,37)(3207,712)
\blacken\path(3207,1792)(3342,1792)(3342,1117)
	(3207,1117)(3207,1792)
\path(3207,1792)(3342,1792)(3342,1117)
	(3207,1117)(3207,1792)
\blacken\path(12,712)(147,712)(147,487)
	(12,487)(12,712)
\path(12,712)(147,712)(147,487)
	(12,487)(12,712)
\end{picture}
}
\end{minipage}
\begin{minipage}[t]{0.2\textwidth}
\setlength{\unitlength}{0.00034996in}
\begingroup\makeatletter\ifx\SetFigFont\undefined%
\gdef\SetFigFont#1#2#3#4#5{%
  \reset@font\fontsize{#1}{#2pt}%
  \fontfamily{#3}\fontseries{#4}\fontshape{#5}%
  \selectfont}%
\fi\endgroup%
{\renewcommand{\dashlinestretch}{30}
\begin{picture}(3354,1819)(0,-10)
\texture{44555555 55aaaaaa aa555555 55aaaaaa aa555555 55aaaaaa aa555555 55aaaaaa 
	aa555555 55aaaaaa aa555555 55aaaaaa aa555555 55aaaaaa aa555555 55aaaaaa 
	aa555555 55aaaaaa aa555555 55aaaaaa aa555555 55aaaaaa aa555555 55aaaaaa 
	aa555555 55aaaaaa aa555555 55aaaaaa aa555555 55aaaaaa aa555555 55aaaaaa }
\shade\path(99,687)(3249,687)(3249,12)
	(99,462)(99,687)
\path(99,687)(3249,687)(3249,12)
	(99,462)(99,687)
\shade\path(147,487)(3252,1117)(3252,1792)
	(57,667)(57,487)(147,487)
\path(147,487)(3252,1117)(3252,1792)
	(57,667)(57,487)(147,487)
\shade\path(102,1117)(3252,1117)(3252,1792)
	(192,1297)(192,1117)(102,1117)
\path(102,1117)(3252,1117)(3252,1792)
	(192,1297)(192,1117)(102,1117)
\blacken\path(57,1342)(192,1342)(192,1117)
	(57,1117)(57,1342)
\path(57,1342)(192,1342)(192,1117)
	(57,1117)(57,1342)
\shade\path(192,1117)(3252,37)(3207,667)
	(192,1342)(192,1117)(192,1117)
\path(192,1117)(3252,37)(3207,667)
	(192,1342)(192,1117)(192,1117)
\blacken\path(3207,712)(3342,712)(3342,37)
	(3207,37)(3207,712)
\path(3207,712)(3342,712)(3342,37)
	(3207,37)(3207,712)
\blacken\path(3207,1792)(3342,1792)(3342,1117)
	(3207,1117)(3207,1792)
\path(3207,1792)(3342,1792)(3342,1117)
	(3207,1117)(3207,1792)
\blacken\path(12,712)(147,712)(147,487)
	(12,487)(12,712)
\path(12,712)(147,712)(147,487)
	(12,487)(12,712)
\end{picture}
}
\end{minipage}
\caption{Reachpipe Completion (i) $R(t) $; (ii) $\ceil{R}(t)$}
\label{figure:ReachpipeCompletion}
\end{figure}
For an example, see Figure~\ref{figure:ReachpipeCompletion}, where
$R(t_k) \subseteq \reals$ and $ R(t_{k+1})\subseteq \reals$ are the 
disjoint black line segments
at the ends, and the shaded portions are the completions for $t\in (t_k, t_{k+1})$.
The left side shows $R(t)$.
The traces evolve from the top (resp. bottom) left black bars to the top (resp. bottom)
right black bars.
The figure on the right shows that   $\ceil{R}$ over-approximates by
assuming traces from the top left black bar to the bottom right black bar
(and similarly from the bottom left bar).
The strict inclusion can hold even if $R(t_k) $ and $ R(t_{k+1})$ are convex sets.

\smallskip\noindent\textbf{Reachpipe Sample Sets.}
We now look at choosing appropriate forms of 
reachpipe sample sets $R(t_k)$.
In hybrid systems literature the common forms of reach sets are
(i)~ellipsoids~\cite{KurzhanskiV06}, 
(ii)~support functions~\cite{GuernicG10},
(iii)~zonotopes~\cite{Girard05,GirardG08},
(iv)~polyhedra and 
polytopes~\cite{FrehseGDCRLRGDM11,HanK06,ChutinanK03,SankaranarayananDI08,SankaranarayananDI08,ColonS11},
(v)~polynomial approximations~\cite{PrabhakarV11, ChenAS12}.

In this work we use  convex polytopes as reachpipe sample sets. 
A \emph{polyhedron} is specified as:
$A\cdot \bx \leq  \bb$, where
$A$ is a $n \times d$ real-valued matrix, 
$\bx = [x_1, \dots, x_d]^{\transpose}$ is a column vector of $d$ variables,
$\bb =  [b_1, \dots, b_d]^{\transpose}$ is a column vector with $b_k\in \reals$ for 
every $k$, and ``$\cdot$" denotes the standard matrix product.
The polyhedron $A\cdot \bx \leq  \bb$ consists of all points
$(p_1, \dots, p_d) \in \reals^d$ such that for all $1\leq i \leq n$, we have 
$\sum_{k=1}^d A_{i,k}\cdot p_k \leq b_k$.
A polyhedron is thus the intersection of $n$ halfspaces, namely, the halfspaces
$\sum_{k=1}^d A_{i,k}\cdot x_k \leq b_k$ for $1\leq i \leq n$.
We use $\ba_i\cdot \bx \leq b_i$ as a shorthand to denote the $i$-th halfspace, where
$\ba_i$ is the $i$-th row vector of $A$.
A \emph{polytope} is a bounded polyhedron.
Polytopes can also be specified as convex hulls of  a finite set of 
points~\cite{ZieglerBook}
(unfortunately, polynomial time algorithms are not known to obtain  one representation
from the other~\cite{WelzlABS97}).
We use the halfspace representation as  it has been shown to be
 amenable to computing
over-approximations of reach sets of hybrid systems using the template polyhedra 
approach~\cite{HanK06,ChutinanK03,SankaranarayananDI08,SankaranarayananDI08,ColonS11}, in which
the reachsets at  sampled  timepoints are over-approximated by polytopes
by varying the constants in $\bb$ 
(the matrix $A$ stays unchanged).
Zonotopes are special forms of polytopes, the algorithms developed in this work are also
applicable for these special polytopes.

We note the property that if  $R(t_k)$ and $R(t_{k+1})$ are polytopes (resp. zonotopes)
in Equation~\eqref{equation:CompletionOver}, 
the completions $\ceil{R}(t)$ for every $t$ are also  polytopes (resp. zonotopes).
This follows from the facts that for $P_1$ and $P_2$ polytopes (resp. zonotopes),
(i)~$\lambda\cdot P_1$ and  $\lambda\cdot P_2$ are  polytopes (resp. zonotopes)
for $\lambda$ a constant; and
(ii)~the Minkowski sum $P_1 + P_2$ is also a polytope (resp. zonotope)~\cite{ZieglerBook}.

\smallskip\noindent\textbf{Polygonal Polytope-Reachpipe (\ppr).}
A \emph{polygonal polytope-reachpipe} (\ppr) is a reachpipe specified
by reachpipe time-samples $R(0),\dots R(m)$, such that for $k\in \set{0,1,\dots m-1}$ 
(a)~each $R(k)$ is a polytope in $\reals^{d+1}$; and
(b)~$R(t)$ for $k < t < k+1$  is taken to be
 the linear interpolation as specified in Equation~\eqref{equation:CompletionOver}.
Note that we take the reachpipe samples to occur at integer parameter values,
this is WLOG as the actual time value can be  added as an extra dimension as
discussed in Subsection~\ref{subsection:MovingFrechet} with a slight
modification: for a polygonal trace $f$ consisting of affine
segments 
starting at times $t_0, t_1,\dots$,
we let the corresponding (polygonal) time-explicit
trace $C$ be such that $C(k) = \left(f(t_k), t_k\right)$ for $k\in \set{0,1,\dots m}$ 
(for non-integer $\rho \in [0, m]$, the trace $C$ is specified by linear interpolation
of the integer endpoints).
Next,  we study the variation distance between  time-explicit \ppr{s}
with respect to the \frechet trace metric in order to bound the Skorokhod distance between
the corresponding tracepipes.

\section{Fr{\'e}chet  Distances between Polytope-Reachpipes}
\label{section:FrechetPipe}

We now investigate computing the pipe variation distance
bounds given in Proposition~\ref{proposition:OverUnder}
in the case of the Skorokhod trace metric.
As a first step, we show it suffices to consider the \frechet metric
as the trace metric in the pipe variation distance.


Consider the setting of  Subsection~\ref{subsection:ReachpipeConstruction}, which
presented linear interpolation completion of sampled trace values.
The traces so obtained by completion are continuous.
We can define corresponding time-explicit traces $C_{f}: [T_i^f, T_e^f] \rightarrow
\reals^d\times \reals$ for the 
 traces $f: [T_i^f, T_e^f] \rightarrow \reals^d$ obtained by completing the 
time sampled traces by linear interpolation.
This makes Proposition~\ref{proposition:SkoroToFrechet} applicable.
Corresponding to a tracepipe $F$ over $\reals^d$, we can define a 
time-explicit tracepipe $F^*$ over  $\reals^d\times \reals$ with traces $f\in F$ corresponding
to time-explicit traces $C_f$ in $F^*$.
We then have (referring to  trace metrics $\skoro$ or $\fre$ explicitly in
the variation distance through the notation 
 ${\dist_{\skoro}}_{\var}$ or ${\dist_{\fre}}_{\var}$):
\[
\begin{array}{ll}
{\dist_{\skoro}}_{\var}(F_1, F_2)  & = \sup_{f_1\in F_1, f_2\in F_2} \dist_{\skoro}(f_1, f_2)\\
& =  \sup_{\curve_{f_1}\in F_1^*, \curve_{f_2}\in F_2^*} \dist_{\fre}(\curve_{f_1},\curve_{f_2})\\
& = {\dist_{\fre}}_{\var}(F_1^*, F_2^*) 
\end{array}
\]
Thus we focus on computing the pipe variation distances with respect to the \frechet trace metric.

In Section~\ref{section:Pipes}, we considered distances between \emph{sets} of traces,
and investigated bounding the variation distance between sets of traces (\ie, between 
tracepipes) using over-approximate tracesets obtained through reachpipes.
In the next two subsections, we define a notion of
\frechet distance \emph{directly} on reachpipes, by viewing a reachpipe as a 
trace from $[0,T]$ to polytopes of $\reals^{d+1}$.
%

Let $R_1, R_2$ be \pprs from $[0, m_1]$ and $[0, m_2]$ to polytopes over  $\reals^{d+1}$.
Our objective is to 
bound the tracepipe variation distance with respect to the \frechet trace  metric.
From Proposition~\ref{proposition:OverUnder}, we
need to compute 
(a)~$\dvarfre\left(\fpipe(R_1), \fpipe(R_2)\right)$
and 
(b)~$\dminfre\left(\fpipe(R_1), \fpipe(R_2)\right)$.

\subsection{Variation Distance on PPRs}
\label{subsection:PipeToPoly-One}

In this subsection, we consider $\dvarfre\left(\fpipe(R_1), \fpipe(R_2)\right)$.
Recall that this value is defined as:
\begin{equation}
\label{equation:TracepipeRef}
\mspace{-0mu}\dvarfre\!\left(\fpipe(R_1), \fpipe(R_2)\right) =
 \sup_{f_1\in \fpipe(R_1), f_2\in \fpipe(R_2)} \!\!\dist_{\fre}(f_1, f_2)
\end{equation}
We define a new variation distance on reachpipes as follows.
\begin{definition}
\label{definition:ReachpipeFre}
Let $R_1, R_2$ be \pprs from $[0, m_1]$ and $[0, m_2]$ to
polytopes over $\reals^{d+1}$, and let $L$ be a given norm on
$\reals^{d+1}$.
The reachpipe variation distance
$\dvarfre^{\dagger}\left(R_1, R_2\right) $ is defined as:
\begin{equation}
\label{equation:ReachpipeFre}
 \inf_{\substack{
\alpha_1: [0,1] \rightarrow [0, m_1] \\
\alpha_2: [0,1] \rightarrow [0, m_2] }}
\max_{0\leq \theta \leq 1} \quad
\max_{
\substack{\bp_1\in R_1\left(\alpha_1(\!\theta\!)\right)\\
\bp_2\in R_2\left(\alpha_2(\!\theta\!)\right)}}
\norm{\bp_1-\bp_2}_{L}
\end{equation}
where $\alpha_1, \alpha_2$ range over  continuous and strictly increasing bijective functions
onto  $[0, m_1]$ and $[0, m_2]$ respectively.\qed
\end{definition}
Note that $\dvarfre^{\dagger}$ is defined over \emph{reachpipes} $R$,
as compared to $\dvarfre$ which is defined over tracepipes
$F$ or $\fpipe(R)$.
Also note that for any reparameterizations $\alpha_1, \alpha_2$, the sets 
$R_1\left(\alpha_1(\theta)\right)$ and $R_2\left(\alpha_2(\theta)\right)$
are closed and bounded.
Thus, $\max_{\bp_1\in R_1\left(\alpha_1(\!\theta\!)\right),\ 
\bp_2\in R_2\left(\alpha_2(\!\theta\!)\right)}
\norm{\bp_1-\bp_2}_{L}$ is well defined.
The function $\dvarfre^{\dagger}$, like the function $\dvarfre$, is not a metric
(notably, we can have $\dvarfre^{\dagger}(R,R) > 0$).

Informally, we go along the \pprs $R_1$ and $R_2$ according to our chosen
reparameterizations $\alpha_1, \alpha_2$, and compare the \emph{polytopes}
$R_1\left(\alpha_1(\theta)\right)$ and $R_2\left(\alpha_2(\theta)\right)$
for each value of $0\leq \theta\leq 1$. 
If we view a \ppr $R$ as a mapping from $[0, m]$ to the set of
polytopes of $\reals^{d+1}$, then 
Definition~\ref{definition:ReachpipeFre} seems similar to the 
definition of the \frechet distance over 
traces (Definition~\ref{def:frechet}), where we use the following function to compare
polytopes $P_1, P_2$:
\begin{equation}
\label{equation:PhiMax}
  \Phi_{\max}(P_1, P_2) = \max_{\bp_1\in P_1, \bp_2\in P_2}\norm{\bp_1- \bp_2}_L
\end{equation}
Using $\Phi_{\max}$, 
Equation~\eqref{equation:ReachpipeFre} can be written as:
\begin{equation}
\label{equation:PhiMaxReframe}
\dvarfre^{\dagger}\!\left(R_1, R_2\right) =
 \inf_{\substack{
\alpha_1: [0,1] \rightarrow [0, m_1] \\
\alpha_2: [0,1] \rightarrow [0, m_2] }}
\max_{0\leq \theta \leq 1} \Phi_{\max}\Big(
R_1\!\left(\alpha_1(\!\theta\!)\right),\,
R_2\!\left(\alpha_2(\!\theta\!)\right)
\!\!\!\Big)
\end{equation}

The following theorem shows that $\dvarfre^{\dagger}\!\left(R_1, R_2\right)$
over-approximates the tracepipe distance
$\dvarfre\left(\fpipe(R_1), \fpipe(R_2)\right)$.
\begin{theorem}
\label{theorem:ReachpipeFreVar}
Let $R_1, R_2$ be \pprs from $[0, m_1]$ and $[0, m_2]$ to
polytopes over $\reals^{d+1}$, and let $L$ be a given norm on
$\reals^{d+1}$.
We have
\[
\dvarfre^{\dagger}\!\left(R_1, R_2\right) \ \geq \ 
\dvarfre\big(\fpipe(R_1),\,  \fpipe(R_2)\big)
\]
where the tracepipe distance 
$\dvarfre\big(\fpipe(R_1),\,  \fpipe(R_2)\big)$
is as defined in Equation
\eqref{equation:TracepipeRef},
and the reachpipe distance $\dvarfre^{\dagger}\!\left(R_1, R_2\right) $
is as defined in Definition~\ref{definition:ReachpipeFre}.
\end{theorem}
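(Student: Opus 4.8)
The plan is to prove the inequality by fixing an arbitrary pair of completed traces and comparing them under a common reparameterization. Since the right-hand side $\dvarfre(\fpipe(R_1),\fpipe(R_2))$ is a supremum over $f_1 \in \fpipe(R_1)$ and $f_2 \in \fpipe(R_2)$, it suffices to show $\dist_{\fre}(f_1,f_2) \le \dvarfre^{\dagger}(R_1,R_2)$ for every such pair; taking the supremum over $f_1,f_2$ then yields the claim. So I would begin by fixing $f_1 \in \fpipe(R_1)$ and $f_2 \in \fpipe(R_2)$, recalling that by definition of $\fpipe$ we have $f_1(t) \in R_1(t)$ for all $t \in [0,m_1]$ and $f_2(t) \in R_2(t)$ for all $t \in [0,m_2]$.

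The key observation is a pointwise domination under a \emph{shared} reparameterization. Fix any pair of continuous strictly increasing bijections $\alpha_1 : [0,1] \to [0,m_1]$ and $\alpha_2 : [0,1] \to [0,m_2]$. For each $\theta$ we have $f_1(\alpha_1(\theta)) \in R_1(\alpha_1(\theta))$ and $f_2(\alpha_2(\theta)) \in R_2(\alpha_2(\theta))$, so by the definition of $\Phi_{\max}$ in Equation~\eqref{equation:PhiMax} the value $\norm{f_1(\alpha_1(\theta)) - f_2(\alpha_2(\theta))}_L$ is at most $\Phi_{\max}(R_1(\alpha_1(\theta)), R_2(\alpha_2(\theta)))$, and taking $\max_\theta$ preserves this. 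Crucially, because the domains $[0,m_1]$ and $[0,m_2]$ of the reparameterizations in $\dvarfre^{\dagger}$ coincide exactly with the domains of $f_1$ and $f_2$, the very same pair $(\alpha_1,\alpha_2)$ is admissible in the infimum defining $\dist_{\fre}(f_1,f_2)$. Substituting it gives $\dist_{\fre}(f_1,f_2) \le \max_\theta \Phi_{\max}(R_1(\alpha_1(\theta)), R_2(\alpha_2(\theta)))$.

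To finish, I would note that this last inequality holds for \emph{every} admissible pair $(\alpha_1,\alpha_2)$, while its left-hand side $\dist_{\fre}(f_1,f_2)$ does not depend on that pair. Hence the left-hand side is bounded by the infimum of the right-hand side over all $(\alpha_1,\alpha_2)$, which is precisely $\dvarfre^{\dagger}(R_1,R_2)$ by Equation~\eqref{equation:PhiMaxReframe}. This establishes $\dist_{\fre}(f_1,f_2) \le \dvarfre^{\dagger}(R_1,R_2)$, and taking the supremum over $f_1,f_2$ completes the argument.

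There is no deep technical obstacle here; the proof is essentially a quantifier-management exercise, and the one point that must be handled with care is the direction in which each infimum is used. The infimum in $\dist_{\fre}$ gives us the freedom to substitute \emph{any} reparameterization, so we reuse the one appearing in $\dvarfre^{\dagger}$, whereas the infimum in $\dvarfre^{\dagger}$ is exploited only at the end, after the bound has been shown uniformly in $(\alpha_1,\alpha_2)$. I would also confirm the well-definedness already noted after Definition~\ref{definition:ReachpipeFre}, namely that each $R_i(\alpha_i(\theta))$ is closed and bounded so that $\Phi_{\max}$ and its maximum over $\theta$ are attained, ensuring the displayed maxima are genuine finite quantities.
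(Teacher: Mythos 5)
Your proposal is correct and follows essentially the same route as the paper's proof: fix $f_1\in\fpipe(R_1)$, $f_2\in\fpipe(R_2)$, use $f_j(\alpha_j(\theta))\in R_j(\alpha_j(\theta))$ to dominate the pointwise norm by $\Phi_{\max}$, pass the bound through the infimum over reparameterizations, and finish by taking the supremum over trace pairs. Your extra care about the direction in which each infimum is exploited is a faithful elaboration of the same argument, not a different one.
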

\begin{proof}
Consider any $f_1\in  \fpipe(R_1)$, and any  $f_2\in  \fpipe(R_2)$.
We have 
\[
\mspace{-20mu}
\dist_{\fre}(f_1, f_2) =  \inf_{\substack{
\alpha_1: [0,1] \rightarrow [0, m_1] \\
\alpha_2: [0,1] \rightarrow [0, m_2] }}\ 
\max_{0\leq \theta \leq 1} \norm {f_1\left(\alpha_1(\theta)\right) - f_2\left(\alpha_2(\theta)\right)}_L
\]
Observe that
$f_j\left(\alpha_j(\theta)\right) \in R_j\left(\alpha_j(\theta)\right)$ for $j\in\set{1,2}$.
Thus, for every $\alpha_1, \alpha_2, \theta$, 
\[
\norm {f_1\left(\alpha_1(\theta)\right) - f_2\left(\alpha_2(\theta)\right)}_L
\leq 
 \Phi_{\max}\left(
R_1\left(\alpha_1(\theta)\right), \, R_2\left(\alpha_2(\theta)\right)
\right)
\]
Thus, we have
\[
\dist_{\fre}(f_1, f_2) \leq  \inf_{\substack{
\alpha_1: [0,1] \rightarrow [0, m_1] \\
\alpha_2: [0,1] \rightarrow [0, m_2] }}\ 
\max_{0\leq \theta \leq 1}
\Phi_{\max}\Big(
R_1\!\left(\alpha_1(\!\theta\!)\right),\,
R_2\!\left(\alpha_2(\!\theta\!)\right)
\Big)
\]
That is, for every $f_1\in  \fpipe(R_1)$ and  $f_2\in  \fpipe(R_2)$,
we have
$\dist_{\fre}(f_1, f_2)  \leq \dvarfre^{\dagger}\!\left(R_1, R_2\right)$.
This implies that 
$\sup_{f_1\in \fpipe(R_1),\, f_2\in \fpipe(R_2)} \dist_{\fre}(f_1, f_2)
\leq \dvarfre^{\dagger}\!\left(R_1, R_2\right)$.
%
%
\qedhere
\end{proof}
The above theorem can be applied with  $R_1=\ceil{\rpipe(F_1)}$ and
 $R_2=\ceil{\rpipe(F_2)}$ in order to obtain the upper bound in 
Proposition~\ref{proposition:OverUnder} using the reachpipe
variation distance $\dvarfre^{\dagger}$ between $\ceil{\rpipe(F_1)}$ and
$\ceil{\rpipe(F_2)}$.
We next consider the lower bound.

\subsection{Minimum Distance on PPRs}
\label{subsection:PipeToPoly-Two}

We now consider $\dminfre\left(F^{R_1}, F^{R_2}\right)$
for \pprs $R_1, R_2$  from $[0, m_1]$ and $[0, m_2]$ to
polytopes over  $\reals^{d+1}$ respectively.
This distance is defined as:
\begin{equation}
\label{equation:TracepipeMinRef}
\dminfre\big(\fpipe(R_1),\,  \fpipe(R_2)\big) =
 \inf_{f_1\in \fpipe(R_1),\,  f_2\in \fpipe(R_2)} \dist_{\fre}(f_1, f_2)
\end{equation}
Analogous to the $\dvarfre$ function of
Definition~\ref{definition:ReachpipeFre},
we define a minimum set distance  $\dminfre$ over reachpipes.
We use the following function to compare polytopes
(given a norm $L$ over $\reals^{d+1}$):
\begin{equation}
\label{equation:PhiMin}
  \Phi_{\min}(P_1, P_2) = \min_{\bp_1\in P_1,\, \bp_2\in P_2}\norm{\bp_1- \bp_2}_L
\end{equation}
Using this function, we define $\dminfre$ as follows.
\begin{definition}
\label{definition:ReachpipeMinFre}
Let $R_1, R_2$ be \pprs from $[0, m_1]$ and $[0, m_2]$ to
polytopes over $\reals^{d+1}$, and let $\Phi_{\min}$ be the
polytope comparison function as described previously.
The reachpipe minimum set distance
$\dminfre^{\dagger}\left(R_1, R_2\right) $ is defined as:
\begin{equation}
\dminfre^{\dagger}\!\left(R_1, R_2\right) =
\! \!\inf_{\substack{
\alpha_1: [0,1] \rightarrow [0, m_1] \\
\alpha_2: [0,1] \rightarrow [0, m_2] }}
\max_{0\leq \theta \leq 1} \Phi_{\min}\Big(
R_1\!\left(\alpha_1(\!\theta\!)\right),\,
R_2\!\left(\alpha_2(\!\theta\!)\right)
\!\!\Big)
\end{equation}
where $\alpha_1, \alpha_2$ range over  continuous and strictly increasing bijective functions
onto  $[0, m_1]$ and $[0, m_2]$ respectively.\qed
\end{definition}
The following theorem shows that $\dminfre^{\dagger}\!\left(R_1, R_2\right)$
is equal to the  tracepipe distance
$\dminfre\left(\fpipe(R_1), \fpipe(R_2)\right)$.
The proof of the theorem can be found in the Appendix.
\begin{theorem}
\label{theorem:ReachpipeFreMin}
Let $R_1, R_2$ be \pprs from $[0, m_1]$ and $[0, m_2]$ to
polytopes over $\reals^{d+1}$, and let $L$ be a given norm on
$\reals^{d+1}$.
We have
\[
\dminfre^{\dagger}\!\left(R_1, R_2\right) \ = \ 
\dminfre\big(\fpipe(R_1), \fpipe(R_2)\big)
\]
where the tracepipe distance $\dminfre\big(\fpipe(R_1), \fpipe(R_2)\big)$
is as defined in
Equation \eqref{equation:TracepipeMinRef},
and the reachpipe distance $\dminfre^{\dagger}\!\left(R_1, R_2\right)$
  is as defined in Definition~\ref{definition:ReachpipeMinFre}.
\qed
\end{theorem}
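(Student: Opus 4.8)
The plan is to prove the two inequalities separately. The direction $\dminfre^{\dagger}(R_1,R_2) \le \dminfre(\fpipe(R_1),\fpipe(R_2))$ is immediate and mirrors the proof of Theorem~\ref{theorem:ReachpipeFreVar}. For any traces $f_1\in\fpipe(R_1)$, $f_2\in\fpipe(R_2)$ and any reparameterizations $\alpha_1,\alpha_2$, the matched values satisfy $f_j(\alpha_j(\theta))\in R_j(\alpha_j(\theta))$, so $\Phi_{\min}\bigl(R_1(\alpha_1(\theta)),R_2(\alpha_2(\theta))\bigr) \le \norm{f_1(\alpha_1(\theta))-f_2(\alpha_2(\theta))}_L$ pointwise in $\theta$, since the minimum over all points of the two polytopes cannot exceed the value at these particular points. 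Taking the maximum over $\theta$ and then the infimum over $\alpha_1,\alpha_2$ yields $\dminfre^{\dagger}(R_1,R_2)\le \dist_{\fre}(f_1,f_2)$, and the infimum over $f_1,f_2$ gives the inequality.

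For the reverse inequality I would fix $\epsilon>0$ and pick $\alpha_1,\alpha_2$ with $\max_{\theta}\Phi_{\min}\bigl(R_1(\alpha_1(\theta)),R_2(\alpha_2(\theta))\bigr)\le \dminfre^{\dagger}(R_1,R_2)+\epsilon$. The idea is to manufacture genuine traces that realize this matching. Suppose I can find continuous selections $\bp_1,\bp_2:[0,1]\to\reals^{d+1}$ with $\bp_j(\theta)\in R_j(\alpha_j(\theta))$ and $\norm{\bp_1(\theta)-\bp_2(\theta)}_L \le \Phi_{\min}\bigl(R_1(\alpha_1(\theta)),R_2(\alpha_2(\theta))\bigr)+\epsilon$. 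Then $f_j(s):=\bp_j(\alpha_j^{-1}(s))$ defines continuous traces, and $f_j\in\fpipe(R_j)$ because $\alpha_j^{-1}$ is continuous and $\bp_j(\theta)\in R_j(\alpha_j(\theta))$. Using the pair $\alpha_1,\alpha_2$ as a (not necessarily optimal) witness in the Fr\'echet infimum gives $\dist_{\fre}(f_1,f_2)\le \max_{\theta}\norm{\bp_1(\theta)-\bp_2(\theta)}_L \le \dminfre^{\dagger}(R_1,R_2)+2\epsilon$, so $\dminfre(\fpipe(R_1),\fpipe(R_2))\le \dminfre^{\dagger}(R_1,R_2)+2\epsilon$, and letting $\epsilon\to 0$ closes the argument.

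The crux, and the step I expect to be the main obstacle, is producing the continuous near-minimizing selection $(\bp_1,\bp_2)$: the pointwise minimizers of $\Phi_{\min}$ need not be unique nor vary continuously with $\theta$, so a naive pointwise choice fails. I would obtain the selection from a continuous-selection argument that exploits convexity. Since $R_1,R_2$ are \pprs and $\alpha_1,\alpha_2$ are continuous, the maps $\theta\mapsto R_1(\alpha_1(\theta))$ and $\theta\mapsto R_2(\alpha_2(\theta))$ are continuous families (in Hausdorff distance) of compact convex polytopes, and $\theta\mapsto\Phi_{\min}\bigl(R_1(\alpha_1(\theta)),R_2(\alpha_2(\theta))\bigr)$ is continuous. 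Consider the set-valued map $G(\theta)=\bigl\{(x,y)\in R_1(\alpha_1(\theta))\times R_2(\alpha_2(\theta)) : \norm{x-y}_L < \Phi_{\min}(\ldots)+\epsilon\bigr\}$. Each value is nonempty (it contains an exact minimizer, whose distance is strictly below the threshold) and convex (it is the intersection of the convex product with a strict sublevel set of the jointly convex function $(x,y)\mapsto\norm{x-y}_L$). Lower semicontinuity of $G$ follows from the inner semicontinuity of the two polytope-valued maps together with continuity of $\Phi_{\min}$: given $(x,y)\in G(\theta_0)$ one can track nearby points $x(\theta)\in R_1(\alpha_1(\theta))$, $y(\theta)\in R_2(\alpha_2(\theta))$ converging to $(x,y)$, and the strict inequality persists for $\theta$ near $\theta_0$. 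Michael's selection theorem (applied, if needed, to the closure $\overline{G}$ to meet the closed-value hypothesis) then delivers a continuous selection, which is precisely the pair $(\bp_1,\bp_2)$ sought. I expect verifying lower semicontinuity and marshalling convexity so that a \emph{continuous} rather than merely measurable selection exists to be the technical heart of the proof; the surrounding manipulation of infima and maxima is routine.
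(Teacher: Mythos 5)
Your proposal is correct and follows the same overall decomposition as the paper: the easy direction is argued identically (pointwise $\Phi_{\min}\le\norm{f_1(\alpha_1(\theta))-f_2(\alpha_2(\theta))}_L$, then pass to max and inf), and the hard direction in both cases reduces to exhibiting, for a fixed pair of reparameterizations, continuous traces $f_j\in\fpipe(R_j)$ that realize (near-)closest point pairs of the polytopes $R_1(\alpha_1(\theta))$ and $R_2(\alpha_2(\theta))$ at every $\theta$. Where you differ is in how that continuous selection is obtained, and your route is arguably the more defensible one. The paper's (sketched) argument selects directly from the set $\minpairs(\theta)$ of \emph{exact} minimizing pairs and asserts that continuity of the selection follows because the polytopes ``change smoothly''; but argmin correspondences of this kind are in general only upper semicontinuous, not lower semicontinuous, so a continuous selection from the exact minimizers is not automatic and the paper leaves this as an unproved technical claim. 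Your $\epsilon$-relaxation replaces the argmin set by the convex, nonempty, lower semicontinuous sublevel correspondence $G(\theta)$, to which Michael's selection theorem applies, and the resulting $2\epsilon$ slack disappears in the limit. This buys a complete proof of the step the paper only sketches, at the cost of invoking a selection theorem; the paper's version, if its smoothness claim were fully justified, would yield exact minimizing traces rather than $\epsilon$-approximate ones, but nothing in the theorem statement requires that. One small point to keep in mind: when you pass to $\overline{G}$ to meet the closed-value hypothesis of Michael's theorem, the selection satisfies the non-strict bound $\norm{\bp_1(\theta)-\bp_2(\theta)}_L\le\Phi_{\min}(\cdot)+\epsilon$, which is all your subsequent estimate needs.
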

Theorems~\ref{theorem:ReachpipeFreVar} and~\ref{theorem:ReachpipeFreMin}
allow us to bound to the tracepipe variation distance $\dvarfre$ using the 
reachpipe distances $\dvarfre^{\dagger}$ and $\dminfre^{\dagger}$ that were defined
in the current section.
In the next section we present algorithms for computing these two reachpipe
distances over \pprs.

\section{Fr{\'e}chet  Distances between Polytope-Traces}
\label{section:ComputeFrechet}

Theorems~\ref{theorem:ReachpipeFreVar} and~\ref{theorem:ReachpipeFreMin} show that
the distance functions 
$\dvarfre^{\dagger}$ and $\dminfre^{\dagger}$ over PPRs can be used to bound
the tracepipe distances  $\dvarfre$ and $\dminfre$.
We now present procedures for computing
$\dvarfre^{\dagger}$ and $\dminfre^{\dagger}$ as follows.
In Subsection~\ref{subsection:FreeSpace} we extend the geometric 
free space concept 
used in \cite{AltG95,MajumdarP15} to compute the Fr\'echet distance between two traces to the case of 
\pprs, and show how the \ppr distance decision problem
can be reduced to a two-dimensional reachability problem.
In Subsection~\ref{subsection:Decision} we present algorithms for the
reachability problems corresponding to $\dvarfre^{\dagger}$ and $\dminfre^{\dagger}$.

\subsection{The Free Space for Polytope-Traces}
\label{subsection:FreeSpace}

Let $\ptopereals$ denote the set of all polytopes in $\reals^{d+1}$.
A \ppr $R$ defined over the time interval $[0,m]$
can be viewed as a polytope-trace, defined as a function
from $[0,m]$ to $\ptopereals$.
Recall that a \ppr $R$  is specified
by reachpipe time-samples $R(0),\dots R(m)$, such that for $k\in \set{0,1,\dots m-1}$
the portion of $R$ in between $(k, k+1)$ is assumed to be completed
according to linear interpolation using  $R(k)$
and $R(k+1)$. 
We denote this portion of $R$ between $R(k)$ and $R(k+1)$ as
$R^{[k]}$, \ie, the portion of $R$ defined over $k\leq t \leq k+1$.

Alt and Godau introduced \emph{free spaces} \cite{AltG95} to compute the \frechet distance between
piecewise affine and continuous curves in $\reals^d$. 
We show free spaces can also be used to compute the functions $\dvarfre^{\dagger}$ 
and $\dminfre^{\dagger}$.
First, we  show how to extend free spaces to the domain of \pprs.
\begin{definition}[Free Space]
\label{definition:FreeSpace}
Given \pprs
$R_1:[0, m_1] \rightarrow \ptopereals$ and $R_2:[0, m_2] \rightarrow \ptopereals$,  
a real number $\delta \geq 0$, 
and a polytope comparison function 
$\Phi: \ptopereals \times \ptopereals \rightarrow \reals_+$,
the \emph{$\delta$-Free Space} of $R_1, R_2$
with respect to $\Phi$
 is defined as  the set
$\free_{\delta}^{\Phi}(R_1,R_2)  = $
\[
\left\{(\rho_1, \rho_2) \in [0, m_1]\times [0,m_2]\,  \  \left\arrowvert\ 
\Phi\Big(
R_1(\rho_1), R_2(\rho_2)
\Big)
\leq 
\delta\right.\right\} \qed
\]
\end{definition}

The free space for \pprs serves a similar role as in the case of the free space for traces.
The tuples $(\rho_1, \rho_1) $ belonging to $\free_{\delta}^{\Phi}(R_1, R_2) $
denote the positions in the two reparameterizations such that
the $\Phi$ value for those position pairs is at most $\delta$.
Thus  $\free_{\delta}^{\Phi}(R_1, R_2) $ collects the pairs $(\rho_1, \rho_2) $ which could
be used in valid reparameterizations of Definition~\ref{definition:ReachpipeFre}
or~\ref{definition:ReachpipeMinFre}.
A pictorial representation of the free space is referred to as the
\emph{free space diagram}.
The space $[0, m_1]\times [0,m_2]$ can be viewed as consisting of
$m_1m_2$  \emph{cells}, with  cell $i,j$ being $[i,i+1]\times [j,j+1]$
for $0\leq i<m_1$, and $0\leq j<m_1$.
Observe that $\free_{\delta}^{\Phi}(R_1, R_2)$ intersected with cell $i,j$ is just the free space
corresponding to the \ppr segments  $R_1^{[i]}, R_2^{[j]}$;
\emph{i.e.,} the intersection of the cell $i,j$ with $\free_{\delta}^{\Phi}(R_1, R_2) $
is equal to $\free_{\delta}^{\Phi}(R_1^{[i]}, R_2^{[j]}) $.

\begin{proposition}[Free Space \&  Reparameterizations]
\label{proposition:FreeSpaceCurve}
Given two \pprs  $R_1,R_2$ from
$[0,m_1]$ and $[0,m_2]$ to $\ptopereals$, we have 
$\dvarfre^{\dagger}\!\left(R_1, R_2\right)\! \leq \! \delta$
(resp., $\dminfre^{\dagger}\!\left(R_1, R_2\right)\! \leq\! \delta$) iff
there is a non-decreasing (in both dimensions)
curve $\alpha: [0,1]\! \rightarrow\!  [0, m_1]\!\times\! [0,m_2]$
 in $\free_{\delta}^{\Phi_{\max}}(R_1,R_2) $
(resp. $\free_{\delta}^{\Phi_{\min}}(R_1,R_2) $)
 from $(0,0)$ to $(m_1, m_2)$.
 \qed
\end{proposition}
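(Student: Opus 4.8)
The plan is to run the Alt--Godau free-space argument \cite{AltG95} in the polytope setting, treating $\Phi_{\max}$ and $\Phi_{\min}$ uniformly as an arbitrary continuous polytope comparison function $\Phi$. The whole argument rests on one observation: since $R_1,R_2$ are \pprs, the maps $\rho_1 \mapsto R_1(\rho_1)$ and $\rho_2 \mapsto R_2(\rho_2)$ vary continuously (in the Hausdorff metric) because the linear-interpolation completion of Equation~\eqref{equation:CompletionOver} is continuous, and both $\Phi_{\max}$ and $\Phi_{\min}$ are continuous (indeed $1$-Lipschitz) with respect to the Hausdorff metric on their arguments. Hence $(\rho_1,\rho_2)\mapsto \Phi(R_1(\rho_1),R_2(\rho_2))$ is continuous, so $\free_{\delta}^{\Phi}(R_1,R_2)$ is a closed subset of the compact square $[0,m_1]\times[0,m_2]$, and for any reparameterizations $\alpha_1,\alpha_2$ the function $\theta\mapsto\Phi(R_1(\alpha_1(\theta)),R_2(\alpha_2(\theta)))$ is continuous, so its maximum is attained. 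First I would record the basic correspondence: a pair of strictly increasing continuous bijections $\alpha_1:[0,1]\to[0,m_1]$, $\alpha_2:[0,1]\to[0,m_2]$ induces the curve $\theta\mapsto(\alpha_1(\theta),\alpha_2(\theta))$, which is continuous, non-decreasing in both coordinates, and runs from $(0,0)$ to $(m_1,m_2)$; and $\max_{\theta}\Phi(R_1(\alpha_1(\theta)),R_2(\alpha_2(\theta)))\le\delta$ holds exactly when the image of this curve lies in $\free_{\delta}^{\Phi}(R_1,R_2)$.

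For the ``only if'' direction, suppose $\dvarfre^{\dagger}(R_1,R_2)\le\delta$ (the $\dminfre^{\dagger}$ case is identical with $\Phi_{\min}$ in place of $\Phi_{\max}$). Because this is an infimum that need not be attained over strict bijections, for each $n$ I would pick reparameterizations with $\max_{\theta}\Phi\le\delta+1/n$, giving non-decreasing corner-to-corner curves $\gamma_n$ whose images lie in $\free_{\delta+1/n}^{\Phi}(R_1,R_2)$. Using the Blaschke selection theorem (the images are compact subsets of the square) together with the fact that a Hausdorff limit of monotone corner-to-corner staircases is again the image of a monotone curve from $(0,0)$ to $(m_1,m_2)$, I would extract a limiting monotone curve $\gamma$. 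Every point of $\gamma$ is a limit of points $q_n$ with $\Phi(q_n)\le\delta+1/n$, so continuity of $\Phi$ gives $\Phi\le\delta$ on the image of $\gamma$, i.e.\ $\gamma$ lies in $\free_{\delta}^{\Phi}(R_1,R_2)$, as required.

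For the ``if'' direction, suppose there is a non-decreasing curve $\alpha=(\beta_1,\beta_2)$ from $(0,0)$ to $(m_1,m_2)$ with image in $\free_{\delta}^{\Phi}(R_1,R_2)$. The coordinate maps $\beta_1,\beta_2$ are only non-decreasing, so they need not be the strict bijections required by Definition~\ref{definition:ReachpipeFre}. I would therefore perturb: for $\epsilon\in(0,1)$ set $\alpha_1^{\epsilon}(\theta)=(1-\epsilon)\beta_1(\theta)+\epsilon m_1\theta$ and $\alpha_2^{\epsilon}(\theta)=(1-\epsilon)\beta_2(\theta)+\epsilon m_2\theta$, each of which is a strictly increasing continuous bijection onto its interval and hence a legal reparameterization. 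As $\epsilon\to0^{+}$ we have $\alpha_j^{\epsilon}\to\beta_j$ uniformly, so by continuity $\max_{\theta}\Phi(R_1(\alpha_1^{\epsilon}(\theta)),R_2(\alpha_2^{\epsilon}(\theta)))\to\max_{\theta}\Phi(R_1(\beta_1(\theta)),R_2(\beta_2(\theta)))\le\delta$. Since the left-hand maximum is an upper bound for $\dvarfre^{\dagger}(R_1,R_2)$ for every $\epsilon$, taking $\epsilon\to0$ yields $\dvarfre^{\dagger}(R_1,R_2)\le\delta$ (and likewise $\dminfre^{\dagger}$).

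The main obstacle is the mismatch between the strictly increasing bijective reparameterizations in Definitions~\ref{definition:ReachpipeFre} and~\ref{definition:ReachpipeMinFre} and the merely non-decreasing curves living in the free space, compounded by the fact that the defining infimum need not be attained by any strict reparameterization. The perturbation trick handles the ``if'' direction cleanly, while the ``only if'' direction requires the compactness (Blaschke/Arzel\`a--Ascoli) argument together with closedness of the free space to pass from the $\delta+1/n$ approximants down to a genuine $\delta$-feasible monotone path; verifying that monotonicity and the corner-to-corner endpoints survive the Hausdorff limit is the technical heart of that step.
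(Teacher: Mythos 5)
Your argument is correct. The paper itself states Proposition~\ref{proposition:FreeSpaceCurve} without proof, treating it as the standard Alt--Godau free-space correspondence and only remarking afterwards on the one delicate point --- that the curve in the free space is merely non-decreasing while Definitions~\ref{definition:ReachpipeFre} and~\ref{definition:ReachpipeMinFre} quantify over strictly increasing bijections --- which it resolves by declaring that ``the values do not change'' and omitting the technicalities (deferring to the trace case in \cite{MajumdarP15}). Your write-up follows exactly this intended route but actually supplies the omitted content: the Hausdorff-continuity of $\rho\mapsto R(\rho)$ for a \ppr and the Lipschitz continuity of $\Phi_{\max},\Phi_{\min}$, which give closedness of the free space; the perturbation $\alpha_j^{\epsilon}=(1-\epsilon)\beta_j+\epsilon m_j\theta$ for the ``if'' direction, which is precisely the mechanism justifying the paper's claim that non-decreasing reparameterizations can be allowed without changing the infimum; and the Blaschke/chain-limit argument for the ``only if'' direction, needed because the defining infimum may not be attained. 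All three steps check out (in particular, a Hausdorff limit of compact connected chains containing both corners is again such a chain, hence the image of a monotone corner-to-corner curve, and pointwise $\Phi\le\delta$ passes to the limit by continuity). So this is a correct proof that is strictly more detailed than what the paper provides; the only stylistic difference is that one could replace the Blaschke argument by the per-cell convexity of Lemma~\ref{lemma:Convexity} plus closedness of the finitely many cell-boundary intervals, reducing the ``only if'' direction to a finite reachability statement, which is closer in spirit to how the paper's algorithms actually use the proposition.
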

The curve $\alpha$ can be thought of as a pair of
parameterized curves $(\alpha_1, \alpha_2)$,
with
$\alpha_1:  [0,1] \rightarrow  [0, m_1]$ and $\alpha_2:  [0,1] \rightarrow  [0, m_2] $.
The functions $\alpha_1, \alpha_2$ can be viewed as the reparameterization functions
in Definitions~~\ref{definition:ReachpipeFre} and~\ref{definition:ReachpipeMinFre}.
The general shape of the free space for two \pprs is depicted in
Figure~\ref{figure:FreeSpace}. 
The unshaded portion is the free space.
The figure also includes a continuous curve which is
non-decreasing  in both coordinates, from $(0,0)$ to $(m_1, m_2)$.
\begin{figure}[t]
\vspace{-1em}
\strut\centerline{\setlength{\unitlength}{0.00026247in}
\begingroup\makeatletter\ifx\SetFigFont\undefined%
\gdef\SetFigFont#1#2#3#4#5{%
  \reset@font\fontsize{#1}{#2pt}%
  \fontfamily{#3}\fontseries{#4}\fontshape{#5}%
  \selectfont}%
\fi\endgroup%
{\renewcommand{\dashlinestretch}{30}
\begin{picture}(8148,6088)(0,-10)
\path(915,4240)(8115,4240)
\path(2715,6040)(2715,640)
\path(4515,6040)(4515,640)
\path(6315,6040)(6315,640)
\texture{55888888 88555555 5522a222 a2555555 55888888 88555555 552a2a2a 2a555555 
	55888888 88555555 55a222a2 22555555 55888888 88555555 552a2a2a 2a555555 
	55888888 88555555 5522a222 a2555555 55888888 88555555 552a2a2a 2a555555 
	55888888 88555555 55a222a2 22555555 55888888 88555555 552a2a2a 2a555555 }
\shade\path(915,6040)(8115,6040)(8115,640)
	(915,640)(915,6040)
\path(915,6040)(8115,6040)(8115,640)
	(915,640)(915,6040)
\path(915,2440)(8115,2440)
\whiten\path(915,1540)(1365,1990)(2265,2440)
	(2715,1990)(2715,1315)(2265,865)
	(1590,640)(915,640)(915,1540)
\path(915,1540)(1365,1990)(2265,2440)
	(2715,1990)(2715,1315)(2265,865)
	(1590,640)(915,640)(915,1540)
\whiten\path(2715,1990)(3840,2440)(4290,2440)
	(4515,1990)(4515,1090)(3390,640)
	(2715,1315)(2715,1990)
\path(2715,1990)(3840,2440)(4290,2440)
	(4515,1990)(4515,1090)(3390,640)
	(2715,1315)(2715,1990)
\whiten\path(4515,1990)(4965,2215)(6090,1540)
	(5415,865)(4515,1090)(4515,1990)
\path(4515,1990)(4965,2215)(6090,1540)
	(5415,865)(4515,1090)(4515,1990)
\whiten\path(4515,3790)(5415,4240)(6315,4240)
	(6315,3115)(5640,2665)(4515,2890)(4515,3790)
\path(4515,3790)(5415,4240)(6315,4240)
	(6315,3115)(5640,2665)(4515,2890)(4515,3790)
\whiten\path(6315,3115)(7215,2665)(8115,3340)
	(7440,4240)(6315,4240)(6315,3115)
\path(6315,3115)(7215,2665)(8115,3340)
	(7440,4240)(6315,4240)(6315,3115)
\whiten\path(2715,3790)(2715,3340)(3840,2440)
	(4290,2440)(4515,2890)(4515,3790)
	(3840,4240)(3165,4240)(2715,3790)
	(2715,3340)(2715,3790)
\path(2715,3790)(2715,3340)(3840,2440)
	(4290,2440)(4515,2890)(4515,3790)
	(3840,4240)(3165,4240)(2715,3790)
	(2715,3340)(2715,3790)
\whiten\path(6315,4915)(7215,6040)(8115,6040)
	(8115,5140)(7440,4240)(6315,4240)(6315,4915)
\path(6315,4915)(7215,6040)(8115,6040)
	(8115,5140)(7440,4240)(6315,4240)(6315,4915)
\whiten\path(5415,4240)(4515,4915)(4515,6040)
	(5865,6040)(6315,4915)(6315,4240)(5415,4240)
\path(5415,4240)(4515,4915)(4515,6040)
	(5865,6040)(6315,4915)(6315,4240)(5415,4240)
\whiten\path(3165,4240)(3615,6040)(4515,6040)
	(4515,4915)(3840,4240)(3165,4240)
\path(3165,4240)(3615,6040)(4515,6040)
	(4515,4915)(3840,4240)(3165,4240)
\thicklines
\path(915,640)(2715,1540)(3840,1765)
	(4065,2440)(4515,3115)(5640,4240)
	(6315,4465)(7440,4690)(8115,6040)
\thinlines
\path(2715,640)(2715,6040)
\path(6315,6040)(6315,640)
\path(915,4240)(8115,4240)
\path(4515,640)(4515,6040)
\put(15,865){\makebox(0,0)[lb]{\smash{{\SetFigFont{8}{9.6}{\familydefault}{\mddefault}{\updefault}$\rho_2$}}}}
\put(1140,190){\makebox(0,0)[lb]{\smash{{\SetFigFont{8}{9.6}{\familydefault}{\mddefault}{\updefault}$\rho_1$}}}}
\put(1860,145){\makebox(0,0)[lb]{\smash{{\SetFigFont{8}{9.6}{\familydefault}{\mddefault}{\updefault}$\longrightarrow$}}}}
\put(105,1360){\makebox(0,0)[lb]{\smash{{\SetFigFont{8}{9.6}{\familydefault}{\mddefault}{\updefault}$\uparrow$}}}}
\end{picture}
}}
\vspace*{-6mm}
 \caption{The Free Space $\free_{\delta}^{\Phi}(R_1,R_2)$.}
\label{figure:FreeSpace}
\end{figure}

Note  that the curve $\alpha$ (and hence also each of $\alpha_1, \alpha_2$)
in  Proposition~\ref{proposition:FreeSpaceCurve} is non-decreasing;
whereas the reparameterizations in
 Definitions~\ref{definition:ReachpipeFre}
and~\ref{definition:ReachpipeMinFre} are strictly increasing.
This is to account for the fact that optimal reparameterizations
in  Definitions~\ref{definition:ReachpipeFre}
and~\ref{definition:ReachpipeMinFre} might not exist, as we have an ``$\inf$''.
It can be shown that 
$\dminfre^{\dagger} $ and $\dvarfre^{\dagger}$ values do not change over \pprs  if we
allow non-decreasing reparameterizations  since  \pprs change smoothly due to the
linear interpolation scheme.
This issue also arises in the case of traces, and is discussed (for the case of traces) 
in more detail
in~\cite{MajumdarP15}.
We omit the technicalities, and henceforth assume that non-decreasing
reparameterizations are allowed in 
 Definitions~\ref{definition:ReachpipeFre}
and~\ref{definition:ReachpipeMinFre}.

\subsection{The Polytope-Trace $\bm{\dist^{\dagger}}$  Decision Problems}
\label{subsection:Decision}

In this section, we  solve for the decision problems
$\dvarfre^{\dagger}(R_1, R_2) \leq \delta$ and $\dminfre^{\dagger}(R_1, R_2) \leq \delta$,
given a $\delta \geq 0$ and \pprs $R_1, R_2$.
We use the free space reduction of Proposition~\ref{proposition:FreeSpaceCurve}
for these decision problems.
The first step in this procedure is to compute the free space.
Towards this step, we first show that the free spaces for 
the polytope comparison functions $\Phi_{\min}$ and $\Phi_{\max}$
are convex in individual cells of the free space diagram.
This is done in Subsection~\ref{subsubsection:Convexity}.
Using this convexity property, we show in 
Subsection~\ref{subsubsection:ComputeFree} that 
in order to obtain the free space of a cell, it
suffices to obtain the free space at the cell boundaries.
We obtain algorithms to compute the free space cell boundaries
in Subsection~\ref{subsubsection:PhiMin} (for $\Phi_{\min}$), and
in~\ref{subsubsection:PhiMaxOne} (for $\Phi_{\max}$).
The procedure of Subsection~\ref{subsubsection:PhiMaxOne} has a high
time complexity, we present a polynomial time algorithm which works
in case the \pprs satisfy certain conditions in Subsection~\ref{subsubsection:PhiMaxPoly}.
The results of the section are summarized in 
Propositions~\ref{proposition:DecisionMin},~\ref{proposition:DecisionVarOne} 
and~\ref{proposition:DecisionVarPoly}.

\subsubsection{Convexity of Free Space}
\label{subsubsection:Convexity}

The following lemma proves that the free space in the first cell (over $[0, 1]\times [0,1]$)
is convex for both  the set comparison functions $\Phi_{\min}$ and $\Phi_{\max}$.
Other cells are translations
and have a similar proof.

\begin{lemma}[Convexity of Free Space of Individual Cells]
\label{lemma:Convexity}
Let $P_a^0, P_a^1$, and $P_b^0, P_b^1$ be polytopes in $\reals^{d+1}$.
Let $R_a: [0,1]\rightarrow \ptopereals$ and $R_b: [0,1]\rightarrow \ptopereals$
be (single-segment) \pprs constructed from  
the polytopes  $P_a^0, P_a^1$ and $P_b^0, P_b^1$ respectively, via linear interpolation
(as described in Equation~\eqref{equation:CompletionOver}),
taking $P_a^0 = R_a(0)$ and  $P_a^1 = R_a(1)$ and
$P_b^0 = R_b(0)$ $P_b^1 = R_b(1)$, respectively.

The free space of  $R_a, R_b$ given a $\delta \geq 0$
for both $\Phi_{\min}$ and  $\Phi_{\max}$ is convex.
That is,   $\free_{\delta}^{\Phi_{\min}}(R_a,R_b)$ and
$\free_{\delta}^{\Phi_{\max}}(R_a,R_b)$ are both convex sets.
\end{lemma}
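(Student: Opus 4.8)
The plan is to treat only the first cell, i.e.\ the case where $R_a, R_b$ are single segments over $[0,1]$ exactly as stated; every other cell is a translate of this situation (shifting $\rho_1,\rho_2$ by integers), so the identical argument applies. The starting point is to package the two polytopes $R_a(\rho_1)$ and $R_b(\rho_2)$ into a single Minkowski-difference polytope $M(\rho_1,\rho_2) := R_a(\rho_1) - R_b(\rho_2)$. Using the interpolation rule of Equation~\eqref{equation:CompletionOver} and writing $D_a := P_a^1 - P_a^0$ and $D_b := P_b^1 - P_b^0$, one gets $M(\rho_1,\rho_2) = (P_a^0 - P_b^0) + \rho_1 D_a - \rho_2 D_b$, which is a polytope for each $(\rho_1,\rho_2)$ (scalings and Minkowski sums of polytopes are polytopes). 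Under this reformulation $\Phi_{\max}(R_a(\rho_1),R_b(\rho_2)) = \max_{\mathbf{z}\in M(\rho_1,\rho_2)}\norm{\mathbf{z}}_L$ and $\Phi_{\min}(R_a(\rho_1),R_b(\rho_2)) = \min_{\mathbf{z}\in M(\rho_1,\rho_2)}\norm{\mathbf{z}}_L$, so the two free spaces become sub-level sets of the $\max$- and $\min$-distance of $M(\rho_1,\rho_2)$ to the origin.

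For $\Phi_{\max}$ I would argue that the map $(\rho_1,\rho_2)\mapsto\Phi_{\max}(R_a(\rho_1),R_b(\rho_2))$ is a convex function, whence its $\delta$-sub-level set, intersected with the convex cell $[0,1]^2$, is convex. Indeed, every $\mathbf{z}\in M(\rho_1,\rho_2)$ has the form $\mathbf{z} = \mathbf{c} + \rho_1\mathbf{d}_a - \rho_2\mathbf{d}_b$ for fixed generators $\mathbf{c}\in P_a^0 - P_b^0$, $\mathbf{d}_a\in D_a$, $\mathbf{d}_b\in D_b$; for each fixed triple the expression is affine in $(\rho_1,\rho_2)$, so $\norm{\mathbf{c} + \rho_1\mathbf{d}_a - \rho_2\mathbf{d}_b}_L$ is convex in $(\rho_1,\rho_2)$. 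Since $\Phi_{\max}$ is the pointwise supremum of these convex functions over all generator triples, it is itself convex, and convexity of $\free_{\delta}^{\Phi_{\max}}(R_a,R_b)$ follows.

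The $\min$ case needs a different idea, since $\Phi_{\min}$ is not convex in general. The crux --- and the step I expect to be the main obstacle --- is a graph-convexity claim: the set $G_a := \{(\rho_1,\bp_1)\mid \rho_1\in[0,1],\ \bp_1\in R_a(\rho_1)\}$ is convex, and likewise $G_b$. To prove it I would write $\bp_1 = \mathbf{a} + \rho_1\mathbf{d}$ with $\mathbf{a}\in P_a^0$ and $\mathbf{d}\in D_a$, take two such points with parameters $\rho_1,\rho_1'$, and check that their $\lambda$-combination lies over $\bar\rho := \lambda\rho_1 + (1-\lambda)\rho_1'$ by reweighting: the affine part stays in $P_a^0$ by convexity, while the directional part $\lambda\rho_1\mathbf{d} + (1-\lambda)\rho_1'\mathbf{d}'$ equals $\bar\rho\,\bar{\mathbf{d}}$ with $\bar{\mathbf{d}} = \tfrac{\lambda\rho_1}{\bar\rho}\mathbf{d} + \tfrac{(1-\lambda)\rho_1'}{\bar\rho}\mathbf{d}'\in D_a$, a genuine convex combination (the weights are nonnegative and sum to one) provided $\bar\rho>0$; the degenerate case $\bar\rho=0$ forces $\rho_1=\rho_1'=0$ and is immediate. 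This uses only that $D_a = P_a^1 - P_a^0$ is convex (a Minkowski sum of polytopes) and that $R_a(\bar\rho) = P_a^0 + \bar\rho D_a$.

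Granting graph-convexity, the $\Phi_{\min}$ free space follows by a lifting-and-projection argument. Consider $S := \{(\rho_1,\rho_2,\bp_1,\bp_2)\mid (\rho_1,\bp_1)\in G_a,\ (\rho_2,\bp_2)\in G_b,\ \norm{\bp_1-\bp_2}_L\le\delta\}$. Each of the three constraints is the preimage of a convex set ($G_a$, $G_b$, and the $\delta$-ball of $\norm{\cdot}_L$) under a linear map, so $S$ is convex. Since $\Phi_{\min}(R_a(\rho_1),R_b(\rho_2))\le\delta$ holds exactly when some pair $(\bp_1,\bp_2)$ with $\norm{\bp_1-\bp_2}_L\le\delta$ witnesses it, the free space $\free_{\delta}^{\Phi_{\min}}(R_a,R_b)$ is precisely the image of $S$ under the linear projection $(\rho_1,\rho_2,\bp_1,\bp_2)\mapsto(\rho_1,\rho_2)$, and a linear image of a convex set is convex. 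Throughout, the only properties of $L$ used are that norms and their balls are convex, so the argument is valid for the arbitrary norm $L$ (in particular for $L^{\max}$).
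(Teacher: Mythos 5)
Your proof is correct, but it is packaged differently from the paper's. The paper argues directly that the free space is closed under convex combinations: given two points $(\rho_a,\rho_b)$ and $(\rho_a',\rho_b')$ in the free space, it establishes the Minkowski identity $R_a(\lambda\rho_a+(1-\lambda)\rho_a') = \lambda\cdot R_a(\rho_a)+(1-\lambda)\cdot R_a(\rho_a')$ (Equation~\eqref{equation:PolytopeInterpolate}) and then, for $\Phi_{\min}$, convex-combines witness pairs and applies the triangle inequality, while for $\Phi_{\max}$ it uses the reverse inclusion of that identity to decompose an arbitrary point of the interpolated polytope and again applies the triangle inequality. Your route replaces this two-point verification with two structural observations: for $\Phi_{\max}$, that $(\rho_1,\rho_2)\mapsto\Phi_{\max}\big(R_a(\rho_1),R_b(\rho_2)\big)$ is a convex \emph{function} (a supremum of norms of affine maps, indexed by generator triples), so its sub-level sets are convex; for $\Phi_{\min}$, that the graphs $G_a$, $G_b$ are convex and the free space is a linear projection of a convex lifted set. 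Both proofs ultimately rest on the same fact --- that Equation~\eqref{equation:CompletionOver} makes $R_a(\rho)=P_a^0+\rho\cdot(P_a^1-P_a^0)$ affine in $\rho$ in the Minkowski sense --- and your graph-convexity claim is exactly the ``$\supseteq$'' half of the paper's identity, while your generator parameterization quietly supplies the ``$\subseteq$'' half needed for $\Phi_{\max}$. What your version buys is a slightly stronger conclusion in the $\Phi_{\max}$ case (convexity of the comparison function itself, not merely of its sub-level set, which would let one compute the boundary points $\lambda^{\min},\lambda^{\max}$ of Subsection~\ref{subsubsection:ComputeFree} by convex programming) and a clean separation of the two cases; the paper's version is more elementary and handles both cases with essentially one computation. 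One small thing to make explicit if you write this up: the suprema and infima in $\Phi_{\max}$ and $\Phi_{\min}$ are attained because the interpolated sets are compact, which is what lets you pass from ``$\Phi_{\min}\leq\delta$'' to the existence of a witness pair in your projection step.
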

\begin{proof}
Let $\Phi $ be  $\Phi_{\min}$ or $\Phi_{\max}$.
Suppose two points (in $[0,1]\times [0,1]$)
belong to  $\free_{\delta}^{\Phi}(R_a,R_b)$.
Let these points be $\rho= (\rho_a, \rho_b)$ and 
$\rho'= (\rho_a', \rho_b')$.
We show that for any $0\leq \lambda\leq 1$,
the point $\rho^* = \lambda \cdot \rho + (1-\lambda)\cdot \rho'$ also belongs
to  $\free_{\delta}^{\Phi}(R_a,R_b)$.
The point  $\rho^*$ is the tuple
\begin{equation}
\label{equation:ConvexityMin}
(\rho_a^*, \rho_b^*) = 
\big( \lambda \cdot \rho_a + (1-\lambda)\cdot \rho_a'\ , \ 
\lambda \cdot \rho_b + (1-\lambda)\cdot \rho_b'
\big).
\end{equation}
To show $(\rho_a^*, \rho_b^*) \in \free_{\delta}^{\Phi}(R_a,R_b)$, we need to show that
\begin{equation}
\label{equation:ConvexityPhi}
\Phi\big( R_a(\rho_a^*), R_b(\rho_b^*) \big) \leq \delta
\end{equation}
We show this individually for   $\Phi_{\min}$ and $\Phi_{\max}$.

\smallskip\noindent
\textbf{(1)}~$\bm{\Phi_{\min}}$.\\
By the definition of $\Phi_{\min}$ (Equation~\eqref{equation:PhiMin}), and the facts that
$(\rho_a, \rho_b)$ and $(\rho_a', \rho_b')$ are in $\free_{\delta}^{\Phi_{\min}}(R_a,R_b)$,
we have that:
\begin{compactitem}
\item  There exist points $\bp_a\in R_a(\rho_a)$ and $\bp_b\in R_b(\rho_b)$ such that
$\norm{\bp_a - \bp_b} \leq \delta$.
\item There exist points $\bp_a'\in R_a(\rho_a')$ and $\bp_b'\in R_b(\rho_b')$ such that
$\norm{\bp_a' - \bp_b'} \leq \delta$.
\end{compactitem}
Consider the points $\bp_a^* = \lambda \cdot \bp_a + (1-\lambda) \cdot \bp_a'$; and
$\bp_b^* = \lambda \cdot \bp_b + (1-\lambda) \cdot \bp_b'$
(where $\lambda$ is the same value as that used in
Equation~\eqref{equation:ConvexityMin}).
We have 
\begin{align*}
\mspace{-5mu}\norm{\bp_a^* - \bp_b^*} &  = 
\norm{ \Big(\lambda \cdot \bp_a + (1-\lambda) \cdot \bp_a'\Big) -
\Big( \lambda \cdot \bp_b + (1-\lambda) \cdot \bp_b'\Big)}\\
& = \norm{
\lambda\cdot \left( \bp_a-  \bp_b\right) +
(1-\lambda) \cdot \left( \bp_a'-  \bp_b'\right) 
}\\
& \leq  \lambda\cdot \norm{\bp_a-  \bp_b} + (1-\lambda) \cdot  \norm{\bp_a'-  \bp_b'}\\
&\mspace{180mu} \text{(by basic norm properties)}\\
& \leq \lambda \cdot \delta + (1-\lambda) \cdot   \delta\\
& = \delta
\end{align*}
We now show  $\bp_a^* \in R_a(\rho_a^*)$, and $\bp_b^* \in R_b(\rho_b^*)$
Observe  that the polytope $R_a(\rho_a^*)$ which is defined to be the polytope
\begin{align}
\label{equation:PolytopeInterpolate}
\begin{split}
& R_a(0) + \rho_a^*\cdot \left(R_a(0) - R_a(1)\right)\\
& = R_a(0) + \Big(\lambda \cdot \rho_a + (1-\lambda)\cdot \rho_a'\Big)\cdot
\left(R_a(0) - R_a(1)\right)\\
& = \lambda \cdot \left(R_a(0) +  \rho_a\cdot \left(R_a(0) - R_a(1)\right) \right)\ + \\
& \qquad\qquad \qquad\qquad
 (1-\lambda) \cdot \left(R_a(0) +  \rho_a'\cdot \left(R_a(0) - R_a(1)\right) \right)\\
& = \lambda \cdot R_a(\rho_a) + (1-\lambda) \cdot R_a(\rho_a') 
\end{split}
\end{align}
%
Thus, $R_a(\rho_a^*)$ equals the polytope 
$ \lambda \cdot R_a(\rho_a) + (1-\lambda) \cdot R_a(\rho_a') $.
Since $\bp_a^* = \lambda \cdot \bp_a + (1-\lambda) \cdot \bp_a'$ for
$\bp_a\in  R_a(\rho_a) $ and $ \bp_a'\in R_a(\rho_a') $,
this means that $\bp_a^* \in R_a(\rho_a^*)$.
Similarly, $\bp_b^* \in R_b(\rho_b^*)$.
Since we have demonstrated that $\norm{\bp_a^* - \bp_b^*} \leq \delta$, this means
that $\Phi_{\min}\left(R_a(\rho_a^*), R_b(\rho_b^*)\right) \leq \delta$.
This shows that Equation~\eqref{equation:ConvexityPhi} holds for $\Phi_{\min}$.

\smallskip\noindent
\textbf{(2)}~$\bm{\Phi_{\max}}$.\\
Now we show that  Equation~\eqref{equation:ConvexityPhi} holds for $\Phi_{\max}$.
By the definition of $\Phi_{\max}$ (Equation~\eqref{equation:PhiMax}), and the facts that
$(\rho_a, \rho_b)$ and $(\rho_a', \rho_b')$ are in $\free_{\delta}^{\Phi_{\min}}(R_a,R_b)$,
we have that:
\begin{compactitem}
\item  For all points $\bp_a\in R_a(\rho_a)$ and $\bp_b\in R_b(\rho_b)$ we have that 
$\norm{\bp_a - \bp_b} \leq \delta$.
\item For all points $\bp_a'\in R_a(\rho_a')$ and $\bp_b'\in R_b(\rho_b')$ we have that
$\norm{\bp_a' - \bp_b'} \leq \delta$.
\end{compactitem}
Consider any point $\bp_a^* $ which belongs to
$ R_a(\rho_a^*)$ and any point $\bp_b^* $ which belongs to $ R_b(\rho_b^*)$.
By Equation~\eqref{equation:PolytopeInterpolate}, we have
$ R_a(\rho_a^*) = \lambda \cdot R_a(\rho_a) + (1-\lambda) \cdot R_a(\rho_a') $;
and similarly for $ R_b(\rho_b^*) $
Thus, by definition, 
\begin{compactitem}
\item 
$\bp_a^* = \lambda \cdot \bp_a + (1-\lambda) \cdot \bp_a'$
for some $\bp_a\in R_a(\rho_a) $ and $\bp_a'\in R_a(\rho_a') $; and
\item  
$\bp_b^* = \lambda \cdot \bp_b + (1-\lambda) \cdot \bp_b'$
for some $\bp_b\in R_a(\rho_b) $ and $\bp_b'\in R_b(\rho_b') $
\end{compactitem}
It can be shown (as in the $\Phi_{\min}$ case)  
using the above two facts that
$\norm{\bp_a^* - \bp_b^* } \leq \delta$.
That is, we have that for any  point $\bp_a^* \in R_a(\rho_a^*)$, and any
point $\bp_b^* \in R_b(\rho_b^*)$, the value $\norm{\bp_a^* - \bp_b^* }$
does not exceed $\delta$.
This means that
\[
\sup_{\bp_a^* \in R_a(\rho_a^*), \, \bp_b^* \in R_b(\rho_b^*)}
\norm{\bp_a^* - \bp_b^* } \leq \delta
\]
Thus, $\Phi_{\max}\left( R_a(\rho_a^*), R_b(\rho_b^*)\right) \leq \delta$.
This shows that Equation~\eqref{equation:ConvexityPhi} holds also for $\Phi_{\max}$
(in addition to $\Phi_{\min}$).
\end{proof}

\subsubsection{Computing the Free Space}
\label{subsubsection:ComputeFree}

The convexity demonstrated  by Lemma~\ref{lemma:Convexity} simplifies the problem of
computing a non-decreasing curve in the free space.
As a result of the convexity of the free space for a cell, it suffices to only compute the
free space boundaries at the cell boundaries.
\begin{figure}[t]
\strut\centerline{\setlength{\unitlength}{0.00021872in}
\begingroup\makeatletter\ifx\SetFigFont\undefined%
\gdef\SetFigFont#1#2#3#4#5{%
  \reset@font\fontsize{#1}{#2pt}%
  \fontfamily{#3}\fontseries{#4}\fontshape{#5}%
  \selectfont}%
\fi\endgroup%
{\renewcommand{\dashlinestretch}{30}
\begin{picture}(6195,6768)(0,-10)
\texture{55888888 88555555 5522a222 a2555555 55888888 88555555 552a2a2a 2a555555 
	55888888 88555555 55a222a2 22555555 55888888 88555555 552a2a2a 2a555555 
	55888888 88555555 5522a222 a2555555 55888888 88555555 552a2a2a 2a555555 
	55888888 88555555 55a222a2 22555555 55888888 88555555 552a2a2a 2a555555 }
\shade\path(1320,5739)(5820,5739)(5820,1239)
	(1320,1239)(1320,5739)
\path(1320,5739)(5820,5739)(5820,1239)
	(1320,1239)(1320,5739)
\whiten\path(2220,1239)(1320,2139)(1320,3939)
	(2220,5289)(3570,5739)(4020,5739)
	(4470,5739)(4920,5739)(5820,4839)
	(5820,3039)(4920,1689)(3570,1239)(2220,1239)
\path(2220,1239)(1320,2139)(1320,3939)
	(2220,5289)(3570,5739)(4020,5739)
	(4470,5739)(4920,5739)(5820,4839)
	(5820,3039)(4920,1689)(3570,1239)(2220,1239)
\path(2220,1014)(2220,1419)
\path(3570,1014)(3570,1419)
\path(6045,3039)(5640,3039)
\path(6045,4839)(5595,4839)
\path(1095,3939)(1545,3939)
\path(1095,2139)(1545,2139)
\path(3570,5964)(3570,5559)
\path(4920,5964)(4920,5559)
\thicklines
\dottedline{240}(1320,3444)(3570,5739)
\dottedline{240}(2985,1239)(4920,5739)
\put(3435,6234){\makebox(0,0)[lb]{\smash{{\SetFigFont{10}{12.0}{\familydefault}{\mddefault}{\updefault}$\mya^2$}}}}
\put(4830,6234){\makebox(0,0)[lb]{\smash{{\SetFigFont{10}{12.0}{\familydefault}{\mddefault}{\updefault}$\myb^2$}}}}
\put(6180,4794){\makebox(0,0)[lb]{\smash{{\SetFigFont{10}{12.0}{\familydefault}{\mddefault}{\updefault}$\myb^1$}}}}
\put(6135,2994){\makebox(0,0)[lb]{\smash{{\SetFigFont{10}{12.0}{\familydefault}{\mddefault}{\updefault}$\mya^1$}}}}
\put(3615,204){\makebox(0,0)[lb]{\smash{{\SetFigFont{10}{12.0}{\familydefault}{\mddefault}{\updefault}$\myb^0$}}}}
\put(2085,249){\makebox(0,0)[lb]{\smash{{\SetFigFont{10}{12.0}{\familydefault}{\mddefault}{\updefault}$\mya^0$}}}}
\put(15,3984){\makebox(0,0)[lb]{\smash{{\SetFigFont{10}{12.0}{\familydefault}{\mddefault}{\updefault}$\myb^3$}}}}
\put(15,2139){\makebox(0,0)[lb]{\smash{{\SetFigFont{10}{12.0}{\familydefault}{\mddefault}{\updefault}$\mya^3$}}}}
\end{picture}
}}
\vspace*{-2em}
 \caption{\hspace*{-1mm}Cell Crossing with Non-Decreasing Curves.}
\label{figure:Cell}
\end{figure}
We refer to Figure~\ref{figure:Cell}.
The dotted lines are example non-decreasing curves that  cross the cell.
As can be seen, to check if we can 
go from the left free space boundary to the top free space boundary of the cell,
we only need the  top free space boundary (and the precondition that the left free
space boundary is non-empty).
A similar situation arises for checking traversal from the bottom to top or bottom to right
boundaries via non-decreasing curves.
Convexity makes the internal shape of the free space inside a cell irrelevant.
Invoking convexity again, we actually only need to compute the
points $\mya^k, \myb^k$ for $k\in\set{0,3}$.
We present the computation procedure next.

We compute the bottom free space boundaries of cells (the other boundaries 
have similar algorithmic solutions).
We need to compute the points $\mya^0, \myb^0$ in Figure~\ref{figure:Cell}.
We do this for the first cell (over $[0, 1]\times [0,1]$), other cells are translations and are
similar.
The point $\mya^0 = \tuple{\lambda^{\min}, 0}$, and the 
point $\myb^0 = \tuple{\lambda^{\max}, 0}$ for some $\lambda^{\min}$ and
$ \lambda^{\max}$ in $[0,1]$.
It hence suffices to  compute $ \lambda^{\min}$ and $ \lambda^{\max}$.
We solve for $ \lambda^{\min}$ (the solution for  $ \lambda^{\max}$ is similar) . 
This value  $ \lambda^{\min}$  is the solution of the following optimization problem
(where $R_1(0), R_1(1), R_2(0)$ are given polytope samples of \pprs $R_1$ and $R_2$) :
 { 
\begin{alignat*}{2}
  \text{minimize  }\  &\  \lambda\\
  \text{subject to }\  &
 \Phi\left(R_1(\lambda), R_2(0)\right) \leq \delta\\
 & 0 \leq  \lambda \leq 1
   \end{alignat*}
}
Expanding $R_1(\lambda)$, we get:
 {  
\vspace*{-1mm}
\begin{equation}
\label{opt:Phi}
\begin{alignedat}{2} 
  \text{minimize  }\  &\  \lambda\\
  \text{subject to }\  &
 \Phi\big(\lambda\vdot R_1(0) + (1-\lambda) \vdot R_1(1),\  R_2(0)\big) \leq \delta\\ 
 & 0 \leq  \lambda \leq 1
\end{alignedat}
\end{equation}}

The solution to the above problem depends on the function $\Phi$.
We solve each case $\Phi_{\min}$ and $\Phi_{\max}$ individually.


\subsubsection{Free Space Cell Boundaries for $\bm{\Phi_{\min}}$}
\label{subsubsection:PhiMin}

\noindent In this subsection, we compute the  bottom free space boundary of the first
cell (over $[0, 1]\times [0,1]$).
The optimization problem~\eqref{opt:Phi}  
for $\Phi= \Phi_{\min}$ has the same solution as:
{ 
\vspace*{-1mm}
\begin{alignat*}{2} 
  \text{minimize  }\  &\  \lambda\\[-1mm]
  \text{such that }\  &
\begin{array}{l}
\exists \text{ point } \bp\in   \lambda\cdot R_1(0) + (1-\lambda) \cdot R_1(1),\\
\exists \text{ point } \bq\in  R_2(0)
\end{array}\\
& \qquad \qquad \text{ s.t. } \norm{\bp - \bq} \leq \delta\\
 & 0 \leq  \lambda \leq 1
   \end{alignat*}
}
Let $R_1(0)$ be the polytope $A_1^0\cdot \bx \leq \bb_1^0$,
$R_1(1)$ be the polytope $A_1^1\cdot \bx \leq \bb_1^1$, and
$R_2(0)$ be the polytope $A_2\cdot \bx \leq \bb_2$;
where the $A$s are $n\times (d+1)$ matrices of given constants, and
$\bb$s are column vectors  of size $d+1$ containing given constants;
and $\bx$s  are column vectors of  variables.
The previous optimization problem can be stated using these polytopes as:
{ 
\begin{equation}
\label{opt:MinOrig}
\begin{alignedat}{2} 
  \text{minimize  }\  &\  \lambda\\[-1mm]
  \text{subject to }\  &
\norm{\lambda \cdot \bx^0 + (1-\lambda) \cdot \bx^1 - 
\by} \leq \delta\\ 
& A_1^0\cdot \bx^0 \leq \bb_1^0\\
& A_1^1\cdot \bx^1 \leq \bb_1^1\\
& A_2\cdot \by \leq \bb_2\\
& 0 \leq  \lambda \leq 1
\end{alignedat}
\end{equation}
}
The optimization above is over the variables $\lambda, \bx^0, \bx^1, \by$.
The values for $A_1^0, A_1^1, A_2, \bb_1^0, \bb_1^1, \bb_2, \delta$ are given.
We would like to reduce the problem to Linear Programming (LP), however we note that,
as stated, the problem is an instance of quadratic programming due to the
multiplication of the parameter $\lambda$ with parameter column 
vectors  $ \bx^0$ and $ \bx^1$.
We show  that these multiplicative constraints can be removed.
Towards this, we need the following lemma.
\begin{lemma}
\label{lemma:PolytopeZero}
Suppose $A\cdot \bx \leq \bb$ is a non-empty polytope in $\reals^{d+1}$ and
$\bb \neq \bm{0}$.
Then $A\cdot \bx \leq \bm{0}$ either has no solution, or  contains the only point
 $\bx =  \bm{0}$.
\qed
\end{lemma}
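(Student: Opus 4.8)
The plan is to exploit the fact that, by the definition given earlier in the paper, a \emph{polytope} is a \emph{bounded} polyhedron, and to recognize that the system $A\cdot \bx \leq \bm{0}$ describes exactly the recession directions of that polytope, which must be trivial precisely because the polytope is bounded. First I would observe that $\bm{0}$ is always a solution of $A\cdot \bx \leq \bm{0}$, since $A\cdot \bm{0} = \bm{0} \leq \bm{0}$; hence the solution set is never empty, the ``no solution'' alternative in the statement is vacuous, and the real content is to show that $\bm{0}$ is the \emph{only} solution. (I would also note that the hypothesis $\bb \neq \bm{0}$ is not actually needed for this conclusion; boundedness alone drives the argument.)

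I would argue by contradiction. Suppose there were some $\bx \neq \bm{0}$ with $A\cdot \bx \leq \bm{0}$. Since $A\cdot \bx \leq \bb$ is non-empty, I fix a point $\bp$ with $A\cdot \bp \leq \bb$. Then for every scalar $\lambda \geq 0$ I would compute
\[
A\cdot(\bp + \lambda\cdot \bx) = A\cdot \bp + \lambda\cdot(A\cdot \bx) \leq \bb + \bm{0} = \bb,
\]
where the inequality combines $A\cdot \bp \leq \bb$ with the observation that $\lambda \geq 0$ and $A\cdot \bx \leq \bm{0}$ together give $\lambda\cdot(A\cdot \bx) \leq \bm{0}$. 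Thus the entire ray $\{\bp + \lambda\cdot \bx \mid \lambda \geq 0\}$ lies inside the polytope $A\cdot \bx \leq \bb$.

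Since $\bx \neq \bm{0}$, this ray is unbounded, so the polytope would be unbounded, contradicting the assumption that $A\cdot \bx \leq \bb$ is a polytope (a bounded polyhedron). Hence no nonzero solution of $A\cdot \bx \leq \bm{0}$ can exist, and $\bm{0}$ is its only solution, as claimed.

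The argument is short and has no serious obstacle; the two points requiring care are (i)~remembering that the word ``polytope'' carries the boundedness assumption that makes the whole proof work, and (ii)~keeping the inequality directions straight when scaling $A\cdot \bx \leq \bm{0}$ by the nonnegative factor $\lambda$. Ultimately this is just the standard fact that a bounded polyhedron has trivial recession cone, i.e.\ $\{\bx \mid A\cdot \bx \leq \bm{0}\} = \{\bm{0}\}$.
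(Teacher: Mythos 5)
Your proof is correct. Note that the paper itself offers no proof of this lemma at all---it is stated with a terminal \qed and treated as evident---so there is no argument to compare yours against; what you have written is the standard recession-cone fact that the paper is implicitly relying on, and it fills the gap cleanly. The key step is exactly the one you identify: a nonzero $\bx$ with $A\cdot \bx \leq \bm{0}$ yields an unbounded ray $\{\bp + \lambda\cdot \bx \mid \lambda \geq 0\}$ inside the non-empty polyhedron $A\cdot \bx \leq \bb$, contradicting the boundedness built into the paper's definition of ``polytope.'' Your two side observations are also accurate and worth recording: $\bm{0}$ always satisfies $A\cdot \bm{0} = \bm{0} \leq \bm{0}$, so the ``no solution'' branch of the lemma is vacuous, and the hypothesis $\bb \neq \bm{0}$ plays no role in the conclusion (it matters only for how Lemma~\ref{lemma:MinNew} case-splits before invoking this lemma, not for the lemma itself).
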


Using the above lemma, the following result can be shown 
(the proof is in the Appendix).
\begin{lemma}
\label{lemma:MinNew}
Let $A_1^0\cdot \bx^0 \leq \bb_1^0$, and $A_1^1\cdot \bx^1 \leq \bb_1^1$, and
$A_2\cdot \by \leq \bb_2$  be non-empty polytopes in $\reals^{d+1}$.
The following optimization problem has the same solution as
Problem~\eqref{opt:MinOrig}.
{ 
\begin{equation}
\label{opt:MinNew}
\begin{alignedat}{2} 
  \text{minimize  }\  &\  \lambda\\[-1mm]
  \text{subject to }\  &
\norm{\bz^0 + \bz^1  - 
\by} \leq \delta\\ 
& A_1^0\cdot \bz^0 \leq \lambda\cdot \bb_1^0\\
& A_1^1\cdot \bz^1 \leq (1-\lambda)\cdot \bb_1^1\\
& A_2\cdot \by \leq \bb_2\\
& 0 \leq  \lambda \leq 1 \qquad\qquad\qquad\qquad\qquad\qed
\end{alignedat}
\end{equation}
}
\end{lemma}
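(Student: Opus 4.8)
The plan is to show that Problems~\eqref{opt:MinOrig} and~\eqref{opt:MinNew} have the same optimal value by proving that a given value $\lambda\in[0,1]$ is \emph{feasible} for one problem (meaning the remaining variables can be chosen to meet all constraints) if and only if it is feasible for the other. Since both problems carry the identical objective ``minimize $\lambda$'', equality of the two feasible $\lambda$-sets immediately yields equality of the minima. The bridge between the formulations is the substitution $\bz^0 = \lambda\cdot\bx^0$ and $\bz^1 = (1-\lambda)\cdot\bx^1$, which is precisely what linearizes the offending products $\lambda\bx^0$ and $(1-\lambda)\bx^1$.

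For the forward direction I would take any feasible tuple $(\lambda, \bx^0, \bx^1, \by)$ of Problem~\eqref{opt:MinOrig} and set $\bz^0 = \lambda\cdot\bx^0$ and $\bz^1 = (1-\lambda)\cdot\bx^1$. Because $\lambda\geq 0$, scaling $A_1^0\cdot\bx^0\leq \bb_1^0$ by $\lambda$ gives $A_1^0\cdot\bz^0 \leq \lambda\cdot\bb_1^0$, and symmetrically $A_1^1\cdot\bz^1\leq(1-\lambda)\cdot\bb_1^1$; moreover $\bz^0+\bz^1 = \lambda\cdot\bx^0+(1-\lambda)\cdot\bx^1$, so the norm constraint and the constraint $A_2\cdot\by\leq\bb_2$ transfer verbatim. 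Hence $\lambda$ remains feasible for Problem~\eqref{opt:MinNew}. This direction is a routine scaling of the defining inequalities and requires no extra hypotheses.

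The reverse direction is where the real work lies, since it tries to invert the substitution via $\bx^0 = \bz^0/\lambda$ and $\bx^1 = \bz^1/(1-\lambda)$. When $0<\lambda<1$ this succeeds directly: dividing $A_1^0\cdot\bz^0\leq\lambda\cdot\bb_1^0$ by $\lambda>0$ shows $\bx^0\in R_1(0)$, similarly $\bx^1\in R_1(1)$, and $\lambda\cdot\bx^0+(1-\lambda)\cdot\bx^1 = \bz^0+\bz^1$ preserves the norm constraint. The hard part will be the endpoint cases $\lambda=0$ and $\lambda=1$, where the division is illegal. This is exactly the role I expect Lemma~\ref{lemma:PolytopeZero} to play: when $\lambda=0$ the constraint degenerates to $A_1^0\cdot\bz^0\leq\bm{0}$, and since $R_1(0)$ is a non-empty (bounded) polytope the lemma forces $\bz^0=\bm{0}$. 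I would then pick $\bx^0$ to be any point of the non-empty polytope $R_1(0)$ and set $\bx^1=\bz^1$ (feasible since $A_1^1\cdot\bz^1\leq\bb_1^1$); because $\bz^0=\bm{0}$, the expression $\lambda\cdot\bx^0+(1-\lambda)\cdot\bx^1 = \bx^1 = \bz^1 = \bz^0+\bz^1$ matches, so the norm bound holds and $\lambda$ is feasible for Problem~\eqref{opt:MinOrig}. The case $\lambda=1$ is handled symmetrically, interchanging the roles of $\bz^0,R_1(0)$ and $\bz^1,R_1(1)$.

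Putting the two directions together, the feasible $\lambda$-sets of the two problems coincide, so their minima are equal, which is the claim. The only delicate step, and the one I expect to demand the most care, is the degenerate endpoint handling through Lemma~\ref{lemma:PolytopeZero}; the interior case and the entire forward direction amount to straightforward rescalings of the polytope constraints.
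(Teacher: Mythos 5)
Your proposal is correct and matches the paper's own proof essentially step for step: the same substitution $\bz^0=\lambda\cdot\bx^0$, $\bz^1=(1-\lambda)\cdot\bx^1$, the same routine forward direction by scaling the inequalities, and the same use of Lemma~\ref{lemma:PolytopeZero} to resolve the degenerate endpoints $\lambda\in\{0,1\}$ where the inverse substitution breaks down. The only cosmetic difference is that the paper separately treats the sub-case $\bb_1^0=\bm{0}$ (where Lemma~\ref{lemma:PolytopeZero} does not apply) before invoking the lemma, whereas you absorb that case by appealing to boundedness of the polytope; both resolutions work.
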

We thus can take $\lambda_{\min}$ to be the solution of the optimization 
problem~\eqref{opt:MinNew}.
Consider the norms $L_1^{\max}$ 
(recall the derived norms   given in Equation~\eqref{equation:LMaxDef});  or 
$L_{\infty}^{\max} $ (which is just the same as the
$L_{\infty}$ norm). 
Let us use any of  these  norms as the  norm in  $\norm{\bz^0 + \bz^1  - 
\by}$.
The optimization problem~\eqref{opt:MinNew} as stated is not a LP instance.
However, we showed in~\cite{MajumdarP15} how constraint problems involving
the $L_1^{\max}$, or $L_{\infty}$ norms can be framed as LP by 
doubling the number of variables.
A similar approach works here, thus, Problem~\eqref{opt:MinNew} can be solved using
linear programming.
We solved for the minimal $\lambda$. 
We can employ the same techniques for finding the maximal $\lambda$.
This gives us the following result.
\begin{proposition}[Free Space Cell Boundaries for $\Phi_{\min}$]
Given two \pprs $R_1, R_2$,  the set $\free_{\delta}^{\Phi_{\min}}(R_1,R_2)$ at 
cell-$(i,k)$ boundaries can be computed in time
$O\left(\LP\left(S_1^i+S_1^{i+1}+S_2^k+S_2^{k+1}\right)\right)$, where
$S_j^l$ denotes the halfspace representation size of polytope
$R_j(l)$, and $\LP(\cdot)$ is the (polynomial time) upper bound for solving
linear programming instances.\qed
\end{proposition}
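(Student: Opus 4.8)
The plan is to combine three facts already in hand: the per-cell convexity of the $\Phi_{\min}$ free space (Lemma~\ref{lemma:Convexity}), the elimination of the bilinear $\lambda$-terms from the boundary optimization (Lemma~\ref{lemma:MinNew}), and the norm-linearization trick of~\cite{MajumdarP15}; and then to count the size of the resulting linear program.

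First I would argue that computing the free space on the boundaries of cell $(i,k)$ reduces to a bounded number of scalar optimizations. By Lemma~\ref{lemma:Convexity} the set $\free_{\delta}^{\Phi_{\min}}(R_1,R_2)$ restricted to this cell is convex, so its intersection with each of the four cell edges is an interval, fully determined by its two endpoints. For the bottom edge these endpoints are exactly the values $\lambda^{\min}$ and $\lambda^{\max}$ of Subsection~\ref{subsubsection:ComputeFree}, each obtained by minimizing (respectively maximizing) $\lambda$ subject to $\Phi_{\min}\big(R_1(\lambda), R_2(k)\big) \leq \delta$, i.e.\ an instance of Problem~\eqref{opt:Phi}; the left, top, and right edges give symmetric instances. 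Thus a constant number (at most eight) of optimizations of this form describe all the boundary data of the cell.

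Next I would show that each such optimization is an LP of the claimed size, treating the bottom edge as the representative case. Unfolding $R_1(\lambda)$ turns Problem~\eqref{opt:Phi} into Problem~\eqref{opt:MinOrig}, with $\bx^0, \bx^1, \by$ ranging over $R_1(i), R_1(i+1), R_2(k)$ (of halfspace sizes $S_1^i, S_1^{i+1}, S_2^k$). This is quadratic because $\lambda$ multiplies $\bx^0$ and $\bx^1$. Applying Lemma~\ref{lemma:MinNew} (whose $\lambda\in\set{0,1}$ degenerate cases are handled by Lemma~\ref{lemma:PolytopeZero}) replaces it by the equivalent Problem~\eqref{opt:MinNew}, in which $\lambda$ now appears only linearly, multiplying the \emph{constant} vectors $\bb_1^0, \bb_1^1$ on the right-hand sides. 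Choosing the norm $L_1^{\max}$ or $L_\infty^{\max}$ and encoding $\norm{\bz^0 + \bz^1 - \by} \leq \delta$ via the variable-doubling of~\cite{MajumdarP15} linearizes every remaining constraint, so Problem~\eqref{opt:MinNew} becomes a bona fide LP.

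Finally I would bound its size: the variables are $\lambda$ together with the (doubled copies of the) vectors $\bz^0, \bz^1, \by$, and the constraints are $A_1^0\bz^0 \leq \lambda\bb_1^0$, $A_1^1\bz^1 \leq (1-\lambda)\bb_1^1$, $A_2\by \leq \bb_2$, the linearized norm constraint, and $0 \leq \lambda \leq 1$. Hence the LP has $O(S_1^i + S_1^{i+1} + S_2^k)$ variables and constraints for the bottom edge, and the symmetric counts for the remaining edges are each bounded by $O(S_1^i + S_1^{i+1} + S_2^k + S_2^{k+1})$. Solving one costs $\LP(\cdot)$ of this size, and since only a constant number of LPs are solved per cell, the total is $O\big(\LP(S_1^i + S_1^{i+1} + S_2^k + S_2^{k+1})\big)$, as claimed. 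The one point the argument genuinely rests on --- and hence the main obstacle --- is that the reformulation of Lemma~\ref{lemma:MinNew} is exact, including at the endpoints $\lambda=0$ and $\lambda=1$; granting that (and the norm linearization of~\cite{MajumdarP15}), the remainder is assembly and a routine variable count.
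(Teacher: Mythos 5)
Your proposal is correct and follows essentially the same route as the paper: per-cell convexity reduces the boundary computation to a constant number of endpoint optimizations of the form~\eqref{opt:Phi}, the bilinear terms are removed exactly via Lemma~\ref{lemma:MinNew} (with Lemma~\ref{lemma:PolytopeZero} covering $\lambda\in\{0,1\}$), and the $L_1^{\max}$/$L_\infty$ norm constraint is linearized by the variable-doubling of~\cite{MajumdarP15}, yielding one LP of size $O(S_1^i+S_1^{i+1}+S_2^k+S_2^{k+1})$ per endpoint. Your size accounting and the identification of the exactness of Lemma~\ref{lemma:MinNew} as the crux match the paper's argument.
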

After computing the free space cell boundaries, we can employ a dynamic programming
algorithm to check if there is a non-decreasing curve travelling through the free space 
from the point $(0,0)$ to $(m_1, m_2)$.
\begin{proposition}[$\dminfre^{\dagger}$ Decision Problem]
\label{proposition:DecisionMin}
Given  \pprs $R_1, R_2$ represented as $m_1$, $m_2$  polytopes respectively ,
and a $\delta \geq 0$, we can decide the question
$\dminfre^{\dagger}(R_1, R_2) \leq \delta$ in time 
$O\left(m_1\vdot m_2\vdot \LP(S_{\max})\right)$
for both $L_1^{\max}$ and  $L_{\infty}$  norms on $\reals^{d+1}$,
where
$S_{\max}$ is the maximum of the halfspace representation sizes of the given
polytopes, and $\LP(\cdot)$ is the (polynomial time) upper bound for solving
linear programming.
\qed
\end{proposition}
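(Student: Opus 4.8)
The plan is to reduce the decision problem $\dminfre^{\dagger}(R_1,R_2) \leq \delta$ to a monotone reachability question in the free space diagram, in the spirit of the Alt--Godau algorithm, and then separately bound the cost of computing the free space and of searching it. By Proposition~\ref{proposition:FreeSpaceCurve}, the question $\dminfre^{\dagger}(R_1,R_2) \leq \delta$ is equivalent to deciding whether there exists a curve in $\free_{\delta}^{\Phi_{\min}}(R_1,R_2)$ that is non-decreasing in both coordinates and runs from $(0,0)$ to $(m_1,m_2)$. So it suffices to decide the existence of such a monotone path, and the whole proof reduces to bounding the time to do this.

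First I would compute the free space along every cell boundary. The free space diagram has $m_1 \vdot m_2$ cells, hence $O(m_1 m_2)$ horizontal and vertical cell edges. By the preceding Free Space Cell Boundaries proposition for $\Phi_{\min}$, the free space restricted to a single cell edge is obtained by solving a constant number of instances of Problem~\eqref{opt:MinNew}; for the norms $L_1^{\max}$ or $L_{\infty}$ this is a linear program after doubling variables to linearize the norm, of size $O(S_1^i + S_1^{i+1} + S_2^k + S_2^{k+1}) = O(S_{\max})$. Thus computing each edge boundary (the interval data $\mya^0, \myb^0$ of Figure~\ref{figure:Cell}, and analogously for the other edges) costs $O(\LP(S_{\max}))$, and computing all boundaries costs $O\!\left(m_1 m_2 \vdot \LP(S_{\max})\right)$ since $\LP$ is polynomial and absorbs the constant factor.

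Next I would run a dynamic-programming sweep over the cells in a topological (e.g.\ row-major) order. Here the key enabler is the convexity established in Lemma~\ref{lemma:Convexity}: since $\free_{\delta}^{\Phi_{\min}}$ restricted to each cell is convex, its intersection with any cell edge is an interval, and the set of points on the top and right edges of a cell that are reachable by a non-decreasing curve from the bottom-left is again an interval, determined solely by the reachable intervals on the bottom and left edges together with the free intervals on the top and right edges. Hence for each cell I only need to propagate the reachable sub-interval of each edge, and each cell transition is $O(1)$. Starting from $(0,0)$ and testing reachability of $(m_1,m_2)$ then takes $O(m_1 m_2)$ time, exactly as in~\cite{AltG95}. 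Combining the two phases gives $O\!\left(m_1 m_2 \vdot \LP(S_{\max})\right) + O(m_1 m_2) = O\!\left(m_1 m_2 \vdot \LP(S_{\max})\right)$, the claimed bound.

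The main subtlety, and the step I expect to need the most care, is the reachability propagation. One must verify that the monotone-reachable set on each edge remains a single interval under the cell transitions, and that the transition correctly enforces monotonicity in both coordinates, so that a point on the top edge is reachable only through free points lying weakly below and to its left. This is precisely where Lemma~\ref{lemma:Convexity} is indispensable: it rules out the non-convex ``gaps'' inside a cell (of the kind illustrated for exact reachpipe completions in Figure~\ref{figure:ReachpipeCompletion}) that would otherwise break the interval-propagation invariant and force an explicit computation of the cell interior, destroying the $O(1)$-per-cell transition cost.
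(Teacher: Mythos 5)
Your proposal is correct and follows essentially the same route as the paper: compute the free space restricted to cell boundaries via the LP formulation of Problem~\eqref{opt:MinNew} (two LPs per boundary, each of size $O(S_{\max})$), then use the convexity of Lemma~\ref{lemma:Convexity} to run an Alt--Godau style dynamic-programming reachability sweep propagating reachable intervals along cell edges in $O(1)$ per cell. The paper leaves the interval-propagation invariant implicit, whereas you spell it out, but the decomposition, the key lemmas invoked, and the resulting $O\!\left(m_1 \vdot m_2 \vdot \LP(S_{\max})\right)$ bound are identical.
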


\subsubsection{Free Space Cell Boundaries for $\bm{\Phi_{\max}}$}
\label{subsubsection:PhiMaxOne}

\noindent In this subsection, we compute the  bottom free space boundary of the first
cell (over $[0, 1]\times [0,1]$).
The optimization problem~\eqref{opt:Phi}  
for $\Phi= \Phi_{\max}$ has the same solution as:
{ 
\begin{alignat*}{2} 
  \text{minimize  }\  &\  \lambda\\[-1mm]
  \text{such that }\  &
\begin{array}{l}
\forall \text{ points } \bp\in   \lambda\cdot R_1(0) + (1-\lambda) \cdot R_1(1),\\
\forall \text{ points } \bq\in  R_2(0)\\
\end{array}\\
& \qquad \qquad \text{ we have  } \norm{\bp - \bq} \leq \delta\\
 & \ 0 \leq  \lambda \leq 1
   \end{alignat*}
}
Unfortunately, this cannot be converted into an LP instance
as in the $\Phi_{\min}$ case  because of the
``for all'' quantifier in the constraints.
The above optimization problem can be expressed in the theory of reals which is 
decidable~\cite{Basu2006}.
This gives us a procedure to compute the free space cell boundaries for
$\Phi_{\max}$.
Once we have the free space boundaries, we can use a dynamic programming 
algorithm (as in the $\Phi_{\min}$ case) to obtain the following result.

\begin{proposition}[$\dvarfre^{\dagger}$ Decision Problem]
\label{proposition:DecisionVarOne}
Given  \pprs $R_1, R_2$ represented as $m_1$, $m_2$  polytopes respectively ,
and a $\delta \geq 0$, it is decidable to check
$\dvarfre^{\dagger}(R_1, R_2) \leq \delta$ for both $L_1^{\max}$ and  $L_{\infty}$  norms on $\reals^{d+1}$.
\qed
\end{proposition}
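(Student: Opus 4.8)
The plan is to follow the same high-level pipeline as in the $\Phi_{\min}$ case, but to replace the linear-programming step with an appeal to the decidability of the first-order theory of the reals. By Proposition~\ref{proposition:FreeSpaceCurve}, the question $\dvarfre^{\dagger}(R_1, R_2) \leq \delta$ is equivalent to the existence of a curve that is non-decreasing in both coordinates running from $(0,0)$ to $(m_1, m_2)$ inside $\free_{\delta}^{\Phi_{\max}}(R_1, R_2)$. By Lemma~\ref{lemma:Convexity}, the free space is convex within every cell $[i, i+1] \times [k, k+1]$, so to decide whether such a monotone curve exists it suffices to compute the free-space intervals on the four boundary edges of each cell, equivalently the extreme points $\mya^k, \myb^k$ introduced in Subsection~\ref{subsubsection:ComputeFree}.

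The crux is computing a single cell boundary, say the bottom edge of the first cell, which amounts to solving optimization problem~\eqref{opt:Phi} for $\Phi = \Phi_{\max}$. First I would observe that the universal constraint ``$\forall \bp \in \lambda \cdot R_1(0) + (1-\lambda)\cdot R_1(1),\ \forall \bq \in R_2(0): \norm{\bp - \bq} \leq \delta$'' is a first-order formula over the reals: membership of $\bp$ and $\bq$ in the respective polytopes is a conjunction of linear inequalities with $\lambda$ appearing as a free parameter, and for the norms $L_1^{\max}$ and $L_\infty$ the atom $\norm{\bp - \bq} \leq \delta$ unfolds into a Boolean combination of linear inequalities in the coordinates of $\bp$, $\bq$ and the constant $\delta$. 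Hence the whole constraint is a universally quantified formula with linear (thus polynomial) atoms, and the set of admissible $\lambda \in [0,1]$ it defines is a semialgebraic subset of $[0,1]$, which by the convexity of Lemma~\ref{lemma:Convexity} is in fact an interval $[\lambda^{\min}, \lambda^{\max}]$.

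By the Tarski--Seidenberg theorem and quantifier elimination for real closed fields~\cite{Basu2006}, this formula is decidable and the interval endpoints are computable; extracting the infimum and supremum yields $\lambda^{\min}$ and $\lambda^{\max}$, hence the boundary points $\mya^0, \myb^0$, and the same procedure applied cellwise (the other cells being translations) produces the free-space boundary on every cell edge. Finally I would run the cellwise dynamic-programming reachability procedure exactly as in the $\Phi_{\min}$ case (preceding Proposition~\ref{proposition:DecisionMin}) to test for a monotone path from $(0,0)$ to $(m_1, m_2)$, which settles decidability. The main obstacle, and the reason this proposition asserts only decidability rather than the polynomial-time bound obtained for $\Phi_{\min}$, is precisely the universal quantifier over $\bp, \bq$: unlike the $\Phi_{\min}$ case, where Lemma~\ref{lemma:MinNew} linearized the $\lambda$--point products and reduced everything to an LP, here the ``for all points'' requirement cannot be eliminated by the Minkowski-sum trick, forcing the fallback to general real quantifier elimination. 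The step requiring care is verifying that the quantified variables range over bounded, definable polytopes and that the endpoint-extraction producing $\lambda^{\min}, \lambda^{\max}$ is itself definable in the theory, so that the computed boundaries feed correctly into the reachability step.
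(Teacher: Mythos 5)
Your proposal is correct and follows essentially the same route as the paper: reduce to computing free-space cell boundaries (justified by Proposition~\ref{proposition:FreeSpaceCurve} and the convexity of Lemma~\ref{lemma:Convexity}), observe that the universally quantified constraint defining the boundary is expressible in the decidable first-order theory of the reals, and then run the cellwise dynamic-programming reachability check as in the $\Phi_{\min}$ case. The extra detail you supply on definability of the $L_1^{\max}$ and $L_\infty$ atoms and on extracting $\lambda^{\min},\lambda^{\max}$ via quantifier elimination is a welcome elaboration of what the paper leaves implicit.
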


The check in Proposition~\ref{proposition:DecisionVarOne} uses the theory of reals and
has a high complexity.
We show in the next subsection that under certain assumptions on the \pprs, we can obtain
a polynomial time procedure.

\subsubsection{$\mspace{-14mu}\bm{\Phi_{\max}}\mspace{-2mu}$ Free Space: Polynomial Time Special Case}
\label{subsubsection:PhiMaxPoly}

In this subsection, we obtain a polynomial time algorithm for computing the
free space for $\bm{\Phi_{\max}}$, under mild conditions on the \pprs.

For a \emph{fixed} 
$\lambda$, we can check if 
\[\Phi_{\max}\Big( \lambda\cdot R_1(0) + (1-\lambda) \cdot R_1(1), R_2(0)\Big) \leq \delta.\]
This is done as follows.
Consider the optimization problem
{ 
\begin{equation}
\label{opt:Discretize}
\begin{alignedat}{2} 
  \text{maximize  }\  &\  \Delta\\[-1mm]
  \text{such that }\  &
\norm{\lambda \cdot \bx^0 + (1-\lambda) \cdot \bx^1 - 
\by} \geq \Delta\\
& A_1^0\cdot \bx^0 \leq \bb_1^0\\
& A_1^1\cdot \bx^1 \leq \bb_1^1\\
& A_2\cdot \by \leq \bb_2\\
& 0 \leq  \Delta 
\end{alignedat}
\end{equation}
}
The following cases arise.
\begin{compactitem}
\item 
If the optimal $\Delta$ is strictly bigger than $\delta$, then 
\[
\Phi_{\max}\big( \lambda\cdot R_1(0) + (1-\lambda) \cdot R_1(1),\, R_2(0)\big) > \delta
\]
because in this case the constraints in~\eqref{opt:Discretize} imply that there exist
points $\bx^0\in R_1(0)$ and $ \bx^1\in R_1(1)$ and
 $\by \in R_2(0)$ such that $\norm{\lambda \cdot \bx^0 + (1-\lambda) \cdot \bx^1 - 
\by} \geq \Delta > \delta$.
Hence  $\tuple{\lambda, 0}$ does not belong to the free space.
\item 
If  $\Delta\! \leq \!\delta$, it implies that
 $\Phi_{\max}\big( \lambda\vdot R_1(0) + (1\!-\!\lambda) \vdot R_1(1),\,  R_2(0)\big)
\leq \delta$.
Hence  $\tuple{\lambda, 0}$ belongs to the free space.
\end{compactitem}
Finally, note that the feasible region of
\eqref{opt:Discretize} is never empty since for
$\Delta= 0$ the variables $\bx^0, \bx^1, \by$ can range over
values in $R_1(0), R_1(1), R_2(0)$ respectively;
hence one of the above cases will hold.
Problem~\eqref{opt:Discretize} can be framed as an LP instance by adding additional
variables using the same methods as in the case for $\Phi_{\min}$ for 
$L_1^{\max}$  or $L_{\infty}$ norms.

If we can find \emph{one} $\lambda$ value such that
$\Phi_{\max}\big( \lambda\cdot R_1(0) + (1-\lambda) \cdot R_1(1),\,  R_2(0)\big) \leq \delta$,
then we can do binary search over the interval $[0, \lambda]$
to get $\lambda^{\min}$ (and similarly for $\lambda^{\max}$).
We next present a heuristic to do this in polynomial time.
Fix an integer $K$,
 partition $[0,1]$ into $K$ equal intervals, and  check for
$\lambda = 0, \frac{1}{K}, \frac{2}{K}, \dots, 1$ whether 
$\tuple{\lambda, 0}$ belongs to
the free space.

Once the first $\lambda \in \set{0, \frac{1}{K}, \frac{2}{K}, \dots, 1}$ is found 
such that $\tuple{\lambda, 0}$ belongs to
the free space, we perform a binary search around it over the interval
$(\lambda-1/K,\, \lambda]$ to obtain $\lambda^{\min}$  to a desired degree of
accuracy (which we take to be less than $2^{-cK}$ for a constant $c$ for convenience),
and similarly for $\lambda^{\max}$.
If the binary search fails to obtain a lower or upper boundary, we set the corresponding
lower or upper boundary to $\lambda$.
In total, 
 we solve $O(K)$ instances of  problem~\eqref{opt:Discretize}.
Suppose that the actual free space interval at the  bottom boundary of the
cell is $[\lambda^{\min}, \lambda^{\max}] \times \set{0}$.
If $\lambda^{\max}-\lambda^{\min} < 1/K$, we \emph{may} find an empty
subinterval. 
If $\lambda^{\max}-\lambda^{\min} \geq  1/K$, we are \emph{guaranteed} to find
the interval (to any desired degree of accuracy).
%

Observe that if the bottom boundary of cell $i,j$ is $[\lambda^{\min},\lambda^{\max}]\times\set{j}$, then
it means that the set of \emph{all} optimal reparameterizations $\alpha_1, \alpha_2$  in Equation~\eqref{equation:PhiMaxReframe} in addition satisfy
$\big(\alpha_2(\theta) = j\big) \rightarrow \big(\alpha_1(\theta) \in [\lambda^{\min},\lambda^{\max}]\big)$.
In other words, the polytope at time $\alpha_2(\theta)$ in the \ppr $R_2$ 
can only be mapped to $R_1$ polytopes in between times  $[\lambda^{\min},\lambda^{\max}]$.
The smaller the interval $[\lambda^{\min},\lambda^{\max}]$,
the more restricted the 
allowable timing distortions which
witness $\dvarfre^{\dagger}\!\left(R_1, R_2\right) \leq \delta$, and thus, the smaller the degree of freedom of time-distorting of the time-point $j$ in $R_2$; which in turn means the less
robust the possible reparameterizations..

\begin{proposition}[$\Phi_{\max}$ Free Space in Polyomial time]
\label{proposition:FreeSpaceBoundaryMaxPoly}
Given two \pprs $R_1$ and $R_2$,  the set $\free_{\delta}^{\Phi_{\max}}(R_1,R_2)$ at the
boundaries of cell $i,k$ can be computed to a precision of $O(K)$ bits in time
$O\left(K\cdot \LP\left(S_1^i+S_1^{i+1}+S_2^k+S_2^{k+1}\right)\right)$, 
provided the free space intervals at the cell boundaries, if non-empty, 
are of length at least $\frac{1}{K}$,
where
$S_j^l$ denotes the halfspace representation size of polytope
$R_j(l)$, and $\LP(\cdot)$ is the (polynomial time) upper bound for solving
linear programming.\qed
\end{proposition}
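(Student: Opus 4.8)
The plan is to convert the per-$\lambda$ membership test already set up through Problem~\eqref{opt:Discretize} into a grid-plus-binary-search scheme, and then charge both the running time and the output precision to the number of linear programs solved. First I would record the structural fact that the free space restricted to the bottom boundary of cell $i,k$ is a single (possibly empty) interval: by Lemma~\ref{lemma:Convexity} the set $\free_{\delta}^{\Phi_{\max}}(R_1,R_2)$ is convex inside each cell, and intersecting a convex planar set with the horizontal line $\set{\rho_2 = k}$ yields a convex subset of the reals, i.e.\ an interval $[\lambda^{\min},\lambda^{\max}]\times\set{k}$ (the remaining three boundaries are symmetric, as in the text). This convexity is exactly what makes binary search on the endpoints meaningful: once one feasible $\lambda$ is known, feasibility stays monotone as we move toward either endpoint, so the search cannot jump across a gap.

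Next I would pin down the cost of a single membership query. For a fixed $\lambda$, deciding $\tuple{\lambda, k}\in \free_{\delta}^{\Phi_{\max}}(R_1,R_2)$ is precisely the test ``is the optimal $\Delta$ of Problem~\eqref{opt:Discretize} at most $\delta$?''; and, as noted preceding the statement, for the $L_1^{\max}$ and $L_{\infty}$ norms this can be written as a linear program by doubling variables (the device used for $\Phi_{\min}$, following~\cite{MajumdarP15}). Its constraints only mention the halfspace descriptions of $R_1(i), R_1(i+1), R_2(k)$, and for the top boundary also $R_2(k+1)$, so across all four boundaries a single query costs $\LP(S_1^i + S_1^{i+1} + S_2^k + S_2^{k+1})$.

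Then I would argue correctness of the search under the stated hypothesis. Partitioning $[0,1]$ into $K$ equal pieces gives grid points $0, \tfrac1K, \dots, 1$. If the true interval $[\lambda^{\min},\lambda^{\max}]$ is nonempty and has length at least $\tfrac1K$, it must contain at least one grid point, so the left-to-right sweep of $K+1$ membership queries is guaranteed to land inside it and locate the first feasible $\lambda$. From that witness I would run two binary searches, one on the window $(\lambda-\tfrac1K,\lambda]$ to pin $\lambda^{\min}$ and one on the symmetric right window to pin $\lambda^{\max}$, each relying on the monotonicity supplied by convexity. Reaching accuracy $2^{-cK}$ takes $O(K)$ halving steps per search, hence $O(K)$ further queries and $O(K)$ correct output bits.

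Finally I would total the work: the $K+1$ grid queries together with the $O(K)$ binary-search queries make $O(K)$ calls to Problem~\eqref{opt:Discretize}, each a linear program of size $S_1^i + S_1^{i+1} + S_2^k + S_2^{k+1}$, giving the claimed bound $O\!\left(K\cdot \LP\!\left(S_1^i + S_1^{i+1} + S_2^k + S_2^{k+1}\right)\right)$ together with $O(K)$ bits of precision. The step I expect to be the main obstacle is justifying the length hypothesis rather than the arithmetic: without ``length at least $\tfrac1K$'' the coarse grid can skip over a short feasible interval entirely and wrongly report it empty, so soundness rests on the pigeonhole observation that an interval of length at least $\tfrac1K$ inside $[0,1]$ cannot avoid every multiple of $\tfrac1K$, together with convexity guaranteeing a single interval so that the two binary-search endpoints are well defined.
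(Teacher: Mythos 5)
Your proposal matches the paper's own argument essentially step for step: convexity of the per-cell free space (Lemma~\ref{lemma:Convexity}) reduces each boundary to a single interval, membership at a fixed $\lambda$ is the LP test of Problem~\eqref{opt:Discretize}, the $K$-point grid sweep finds a witness by pigeonhole under the length-$\tfrac{1}{K}$ hypothesis, and two binary searches pin the endpoints to $O(K)$ bits using $O(K)$ LP calls in total. No substantive differences; the argument is correct as written.
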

This gives us the following decision procedure using a dynamic programming algorithm,
and  improves 
Proposition~\ref{proposition:DecisionVarOne}
time complexity if the \pprs satisfy certain
conditions.

\begin{proposition}[$\dvarfre^{\dagger}$ Decision Problem in Polynomial Time]
\label{proposition:DecisionVarPoly}
Given   \pprs $R_1, R_2$ represented by $m_1$, $m_2$  polytopes respectively,
$\delta\! \geq\! 0$, and  integer $K\!> \!0$, 
we can  
decide the question
$\dvarfre^{\dagger}(R_1, R_2) \leq \delta$ under the two conditions:
\begin{compactenum}
\item 
 $\forall\ i\in \set{0.. m_1}$, and $\forall\ j \in \set{0..m_2-1}$, either
(a)~
there exists a sub-interval $[\lambda^{\min}, \lambda^{\max}]  \subseteq [j, j+1]$,
with $\lambda^{\max}- \lambda^{\min} \geq 1/K$,  such that
$\Phi_{\max}\left(R_1(i), R_2(t)\right)\leq \delta $ for all
$t\in [\lambda^{\min}, \lambda^{\max}] $, or
(b)~for all $t\in [j, j+1]$, we have $\Phi_{\max}\left(R_1(i), R_2(t)\right)> \delta $; and
\item 
 $\forall\ j \in  \set{0..m_2}$, and  $\forall\ i \in\set{0.. m_1-1}$, either
(a)~there exists a sub-interval $[\lambda^{\min}, \lambda^{\max}]  \subseteq [i, i+1]$,
with $\lambda^{\max}- \lambda^{\min} \geq 1/K$,
 such that
$\Phi_{\max}\left(R_1(t), R_2(j)\right)\leq \delta $ for all
$t\in [\lambda^{\min}, \lambda^{\max}] $, or
(b)~for all $t\in [i, i+1]$, we have $\Phi_{\max}\left(R_1(t), R_2(j)\right)> \delta $
\end{compactenum}
in time 
$O\left(m_1\vdot m_2\vdot K\vdot \LP(S_{\max})\right)$
for both $L_1^{\max}$,  $L_{\infty}$ norms
where
$S_{\max}$ is the maximum of the halfspace representation sizes of the given
polytopes, and $\LP()$ is the (polynomial time) upper bound for solving
linear programming.
\qed
\end{proposition}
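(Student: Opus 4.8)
The plan is to reduce the decision question $\dvarfre^{\dagger}(R_1,R_2)\leq\delta$ to a monotone reachability problem in the free space diagram, exactly as was done for $\Phi_{\min}$ in Proposition~\ref{proposition:DecisionMin}, and to show that the two stated conditions are precisely what is needed to compute every cell-boundary free space in polynomial time via Proposition~\ref{proposition:FreeSpaceBoundaryMaxPoly}. By Proposition~\ref{proposition:FreeSpaceCurve}, $\dvarfre^{\dagger}(R_1,R_2)\leq\delta$ holds iff there is a non-decreasing curve in $\free_{\delta}^{\Phi_{\max}}(R_1,R_2)$ from $(0,0)$ to $(m_1,m_2)$. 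By the convexity of the per-cell free space (Lemma~\ref{lemma:Convexity}), the existence of such a curve depends only on the free space restricted to the horizontal and vertical cell boundaries: within each cell the free region is convex, so whether a monotone curve can cross from the reachable portion of the left/bottom edges to the top/right edges is determined solely by the boundary intervals, and the cell interior is irrelevant.

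First I would match the two hypotheses to the per-boundary precondition of Proposition~\ref{proposition:FreeSpaceBoundaryMaxPoly}. Condition~(1), which fixes $\rho_1=i$ and lets $\rho_2=t$ range over $[j,j+1]$, is exactly the assertion that the free space along each vertical cell boundary $\{i\}\times[j,j+1]$ is either empty or an interval of length at least $1/K$; condition~(2), which fixes $\rho_2=j$ and lets $\rho_1=t$ range over $[i,i+1]$, is the same assertion for every horizontal cell boundary $[i,i+1]\times\{j\}$. Here I use that for a fixed integer $i$ the map $t\mapsto\Phi_{\max}(R_1(i),R_2(t))$ has a convex sublevel set (the one-dimensional restriction of Lemma~\ref{lemma:Convexity} with a constant single-segment \ppr in the first coordinate), so that each boundary free space is genuinely an interval. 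Since conditions~(1) and~(2) jointly range over all $m_1+1$ vertical and all $m_2+1$ horizontal grid lines, the length-$\tfrac{1}{K}$ precondition of Proposition~\ref{proposition:FreeSpaceBoundaryMaxPoly} holds on every cell boundary, and I can compute the endpoints $\lambda^{\min},\lambda^{\max}$ of each boundary interval to $O(K)$ bits in time $O\!\left(K\cdot\LP\!\left(S_1^i+S_1^{i+1}+S_2^k+S_2^{k+1}\right)\right)$.

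Next I would sum the per-cell cost. Each of the $m_1 m_2$ cells has four boundary segments whose halfspace sizes are bounded by $S_{\max}$, so each boundary computation costs $O(K\cdot\LP(S_{\max}))$ and the total free space computation costs $O(m_1\cdot m_2\cdot K\cdot\LP(S_{\max}))$; the feasibility test underlying each step is the LP framing of problem~\eqref{opt:Discretize}, which applies to both the $L_1^{\max}$ and $L_{\infty}$ norms by the same doubling-of-variables trick already used for $\Phi_{\min}$. With all boundary intervals in hand, I would run the standard Alt--Godau dynamic program~\cite{AltG95} (as in Proposition~\ref{proposition:DecisionMin}): sweeping the cells in row-major order, propagate for each boundary segment the sub-interval reachable by a monotone curve from $(0,0)$, using only the computed boundary intervals together with the convexity of each cell, and answer ``yes'' iff $(m_1,m_2)$ is marked reachable. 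This sweep touches each cell a constant number of times and costs $O(m_1 m_2)$, which is dominated by the boundary computation, yielding the claimed $O(m_1\cdot m_2\cdot K\cdot\LP(S_{\max}))$ bound.

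The main obstacle is correctness in the presence of approximation: the boundary intervals are computed only to finite precision, so I must argue that this never changes the outcome of the reachability check. This is exactly where the $1/K$ lower bound on interval length is essential. Because every non-empty boundary interval has length at least $1/K$, the grid search over $\lambda\in\{0,\tfrac1K,\dots,1\}$ is guaranteed to land inside any non-empty interval and hence detect it, and the subsequent binary search refines the endpoints to accuracy $2^{-cK}\ll 1/K$; consequently a non-empty interval is never reported empty and the computed endpoints never cross or invert, so the reachability propagation driven by the approximate intervals agrees with the one driven by the exact free space. The remaining checks are routine: that cell-interior convexity (Lemma~\ref{lemma:Convexity}) indeed justifies reducing monotone reachability to boundary intervals, and that the interval/endpoint bookkeeping in the sweep is itself monotone. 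Both mirror the $\Phi_{\min}$ development of Proposition~\ref{proposition:DecisionMin} and carry over unchanged, so that under conditions~(1) and~(2) the procedure decides $\dvarfre^{\dagger}(R_1,R_2)\leq\delta$ correctly within the stated time, improving the time complexity of Proposition~\ref{proposition:DecisionVarOne}.
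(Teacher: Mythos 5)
Your proposal is correct and follows essentially the same route as the paper: reduce the decision question to monotone reachability in the free space diagram (Proposition~\ref{proposition:FreeSpaceCurve}), use per-cell convexity (Lemma~\ref{lemma:Convexity}) to restrict attention to cell boundaries, compute each boundary interval by the $K$-grid search plus binary search over LP instances of problem~\eqref{opt:Discretize} under the length-$\tfrac{1}{K}$ guarantee supplied by conditions~(1) and~(2), and finish with the Alt--Godau style dynamic program, giving the stated $O\left(m_1\vdot m_2\vdot K\vdot \LP(S_{\max})\right)$ bound. Your additional remark on why finite-precision endpoints cannot flip the reachability answer is a useful explicit elaboration of a point the paper leaves implicit.
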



%
An analysis of the  dynamic programming reachability algorithm shows that
the two conditions in  Proposition~\ref{proposition:DecisionVarPoly} are
only required for an $i,j$ pair collection  for which  a cell-$i,j$ from the collection
occurs in \emph{every} path
from $0,0$ to $m_1, m_2$ in the free space diagram of the
two \pprs.
As a result, for a sufficiently large $K$, we expect the algorithm of this subsection
to work in all except for certain pathological cases.

Proposition~\ref{proposition:DecisionVarPoly} gives us a conservative
procedure in case the validity of the two stated conditions is not known:
if for a chosen $K >0$, the procedure returns that the distance is less than or equal to $\delta$,
then indeed $\dvarfre^{\dagger}(R_1, R_2) \leq \delta$.
Also note that as $\delta$ increases, the corresponding free space
and the free space boundaries become larger, and when $\delta$ is increases enough,
 the \ppr conditions are satisfied.
Since we intend to use the $\dvarfre^{\dagger}$ distances
of \pprs  as over-approximations of tracepipes, the conservative
nature of Proposition~\ref{proposition:DecisionVarPoly} does not 
break the over-approximation scheme.

\section{Variation  Distance Bounds}
\label{section:FinalAlgo}

We now put everything together,
using the results of the preceding sections to obtain
bounds on the variation distance $\dvarskoro(F_1, F_2)$ for
\pprs $F_1$ and $F_2$.
From 
Propositions~\ref{proposition:OverUnder},~\ref{proposition:SkoroToFrechet},
and Theorems~\ref{theorem:ReachpipeFreVar},~\ref{theorem:ReachpipeFreMin},
and using binary search on the decision algorithms of 
Propositions~\ref{proposition:DecisionMin} and~\ref{proposition:DecisionVarOne}
we get the following theorem.
\begin{theorem}
\label{theorem:Final}
Suppose tracepipes $F_1$ and $ F_2$ correspond to
sampled over-approximate  reach set polytopes $\ceil{\rpipe(F_1)}(t_1^1), \dots, 
\ceil{\rpipe(F_1)}(t_1^{m_1})$ at time-points
$t_1^1, \dots, t_1^{m_1}$, and $\ceil{\rpipe(F_2)}(t_2^1), \dots, 
\ceil{\rpipe(F_2)}(t_2^{m_2})$ at time-points
$t_2^1, \dots, t_2^{m_2}$ respectively.
Let $\ceil{\rpipe(F_1)}$ and $\ceil{\rpipe(F_1)}$ be
corresponding reachpipe completions constructed by  linear interpolation.
We can compute $\beta_{\min},  \beta_{\max}$ with 
\[
\beta_{\min} \leq   \dvarskoro(F_1, F_2) \leq \beta_{\max}
\]
for the Skorokhod trace metric over $L_1, L_{\infty}$ norms on $\reals^d$ such that
\begin{compactitem}
\item $ \beta_{\min} = \dminskoro\!\left(\fpipe\big(\ceil{\rpipe(F_1)}\big), \,
\fpipe\big(\ceil{\rpipe(F_2)}\big)\right)$
and 
\item $  \beta_{\max}$ is an upper-bound of the variation distance
$ \dvarskoro\!\left(\fpipe\big(\ceil{\rpipe(F_1)}\big), \,
\fpipe\big(\ceil{\rpipe(F_2)}\big)\right)$; and is  equal to the
the Skorokhod distance $\dvarskoro^{\dagger}(\ceil{\rpipe(F_1)}, \ceil{\rpipe(F_2)})$
between the reachpipes 
$\ceil{\rpipe(F_1)}$ and $\ceil{\rpipe(F_2)}$
(where $\dvarskoro^{\dagger}$ is defined analogously to $\dvarfre^{\dagger}$).\qed
\end{compactitem}
\end{theorem}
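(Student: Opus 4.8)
The plan is to treat the statement as a composition of the results already established: the genuine content of the theorem is the chaining of the Skorokhod-to-Fr\'echet reduction, the sandwich bounds of Proposition~\ref{proposition:OverUnder}, the reachpipe-distance Theorems~\ref{theorem:ReachpipeFreVar} and~\ref{theorem:ReachpipeFreMin}, and the decision procedures of Propositions~\ref{proposition:DecisionMin} and~\ref{proposition:DecisionVarOne}. The only real work is checking that the time-explicit lift commutes with the pipe operations and that the decision oracles deliver numeric values via binary search.

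First I would pass to the Fr\'echet setting. Applying Proposition~\ref{proposition:SkoroToFrechet} pointwise to each pair of traces and using the identity ${\dist_{\skoro}}_{\var}(F_1,F_2) = {\dist_{\fre}}_{\var}(F_1^*,F_2^*)$ from Section~\ref{section:FrechetPipe}, both the variation and the minimum Skorokhod set-distances equal their Fr\'echet counterparts on the time-explicit tracepipes; the minimum case follows identically since $\inf$ over trace pairs commutes with the pointwise equality $\dist_{\skoro}=\dist_{\fre}$. Writing $R_j^* = \ceil{\rpipe(F_j)}$ for the completed reachpipes viewed as time-explicit \pprs over $\reals^{d+1}$, I would note that because the extra time coordinate is added linearly and the completion is itself linear interpolation (the modification described in Subsection~\ref{subsection:ReachpipeConstruction}), the lift commutes with $\rpipe$, $\fpipe$, and completion, so $\fpipe(R_j^*)$ is exactly the time-explicit form of $\fpipe(\ceil{\rpipe(F_j)})$. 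Here the Skorokhod metric over $L_1$ (resp.\ $L_\infty$) on $\reals^d$ becomes the Fr\'echet metric over $L_1^{\max}$ (resp.\ $L_\infty^{\max}=L_\infty$) on $\reals^{d+1}$.

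Next I would install the two bounds. Proposition~\ref{proposition:OverUnder} sandwiches $\dvarskoro(F_1,F_2)$ between $\dminskoro\big(\fpipe(\ceil{\rpipe(F_1)}),\fpipe(\ceil{\rpipe(F_2)})\big)$ from below and $\dvarskoro\big(\fpipe(\ceil{\rpipe(F_1)}),\fpipe(\ceil{\rpipe(F_2)})\big)$ from above. For the lower bound I take $\beta_{\min}$ to be the former; by the reduction it equals $\dminfre(\fpipe(R_1^*),\fpipe(R_2^*))$, which Theorem~\ref{theorem:ReachpipeFreMin} identifies exactly with $\dminfre^{\dagger}(R_1^*,R_2^*)$. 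For the upper bound I take $\beta_{\max}=\dvarfre^{\dagger}(R_1^*,R_2^*)$, which is precisely the $\dvarskoro^{\dagger}(\ceil{\rpipe(F_1)},\ceil{\rpipe(F_2)})$ of the statement; Theorem~\ref{theorem:ReachpipeFreVar} gives $\dvarfre^{\dagger}(R_1^*,R_2^*)\geq \dvarfre(\fpipe(R_1^*),\fpipe(R_2^*))$, and unwinding the reduction this dominates $\dvarskoro\big(\fpipe(\ceil{\rpipe(F_1)}),\fpipe(\ceil{\rpipe(F_2)})\big)$, hence $\dvarskoro(F_1,F_2)$, as required.

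Finally I would compute the two values numerically. Proposition~\ref{proposition:DecisionMin} decides $\dminfre^{\dagger}(R_1^*,R_2^*)\leq\delta$ in polynomial time, and Proposition~\ref{proposition:DecisionVarOne} decides $\dvarfre^{\dagger}(R_1^*,R_2^*)\leq\delta$, both for the $L_1^{\max}$ and $L_\infty$ norms; running binary search over $\delta$ on these oracles brackets $\beta_{\min}$ and $\beta_{\max}$ to any desired precision. The step I expect to require the most care is the commutation claim of the second paragraph --- verifying that the time-explicit lift distributes over reachpipe completion and over $\fpipe$, so that the reachpipe-level $\dagger$-distances are legitimately the Fr\'echet distances of the same geometric objects --- together with a short remark that the search terminates at the true value, which is immediate since $\beta_{\min}$ is an LP optimum and $\beta_{\max}$ is the optimum of a real-closed-field formula, so both are well-defined thresholds.
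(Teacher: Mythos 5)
Your proposal is correct and follows essentially the same route as the paper, which proves Theorem~\ref{theorem:Final} by exactly this chaining of Proposition~\ref{proposition:SkoroToFrechet}, Proposition~\ref{proposition:OverUnder}, Theorems~\ref{theorem:ReachpipeFreVar} and~\ref{theorem:ReachpipeFreMin}, and binary search over the decision oracles of Propositions~\ref{proposition:DecisionMin} and~\ref{proposition:DecisionVarOne}. The only detail you leave implicit is that the binary search needs an explicit initial upper bound $U$ on $\beta_{\max}$, which the paper supplies separately via a fixed pair of reparameterizations (Lemmas~\ref{lemma:CoarseUpperBound} and~\ref{lemma:PhiMaxCompute}).
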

In order to do binary searches on the decision procedures used in 
Theorem~\ref{theorem:Final}, we need an upper bound $U$ on $\beta_{\max}$.
This upper bound can be obtained as follows (in polynomial time).
We pick  one pair of reparameterizations  and use these to get an upper bound $U$
on $\dvarfre^{\dagger}(R_1, R_2)$ (and thus on  $\dvarskoro^{\dagger}(R_1, R_2)$)
for $R_1= \ceil{\rpipe(F_1)}$, and $R_2= \ceil{\rpipe(F_2)}$.
Assume $m_2\geq m_1$.
Fix $\alpha_1: [0, 1]\rightarrow [0, m_1]$ to be any non-decreasing reparameterization
such that $\alpha_1(\theta) = m_1$ for $\theta \geq 0.5$;
and let  $\alpha_2: [0, 1]\rightarrow [0, m_1]$ be a non-decreasing reparameterization
such that $\alpha_2(\theta) = \alpha_1(\theta)$ for $\theta \leq 0.5$, and
$\alpha_2$ over $[0.5, 1]$ being non-decreasing to $[m_1, m_2]$.
An  upper bound of $\dvarfre^{\dagger}(R_1, R_2)$ is
\begin{equation}
\label{equation:UpperBoundIdentity}
\max_{0\leq \theta \leq 1} \Phi_{\max}\big(
R_1\!\left(\alpha_1(\!\theta\!)\right),\,
R_2\!\left(\alpha_2(\!\theta\!)\right)
\!\!\big)
\end{equation}
The stated reparameterizations are such that $R_1(i)$ is compared to
$R_2(i)$ for $0\leq i\leq m_1$ in $\Phi_{\max}$, and
$R_2(i)$ for $i> m_1$ is compared to $ R_1(m_1)$.
It can be shown that the value of Expression~\eqref{equation:UpperBoundIdentity}
is the maximum of
$ \max_{i\in \set{0, 1,\dots, m_1}}\Phi_{\max}\left(R_1(i), R_2(i)\right)$
and
$ \max_{j \in  \set{m_1,\dots, m_2}}\Phi_{\max}\left(R_1(m_1), R_2(j)\right)$.
These two maximums can be computed in polynomial time by computing $\Phi_{\max}\left(R_1(i), R_2(j)\right)$
for required $i,j$ pairs using linear
programming
(Lemmas~\ref{lemma:CoarseUpperBound}, and~\ref{lemma:PhiMaxCompute}
in the Appendix).
Once the upper bound $U$ is obtained, we can compute
$\beta_{\min}$ in $O\left(\left(\lg(U) + B\right)\cdot
 m_1\cdot m_2  \cdot\LP(S_{\max})\right)$ time, where
$B$ is the number of desired bits of the fractional part in $ \beta_{\min}$, and
$S_{\max}$ is the maximum of the halfspace representation sizes of the given
polytopes, and $\LP(\cdot)$ is the (polynomial time) upper bound for solving
linear programming.

\smallskip\noindent\textbf{Polynomial Time Case for $\beta_{\max}$.}
 Theorem~\ref{theorem:Final} uses the theory of reals to
obtain $\beta_{\max}$.
In case an upper bound $U$ on $\beta_{\max}$ is given and the \pprs and $\delta< U$ are such that the conditions of
Proposition~\ref{proposition:DecisionVarPoly} are satisfied,  we can employ the 
polynomial time algorithm of the proposition in the decision question
queries for  obtaining $\beta_{\max}$.
This procedure runs in  $O\left(\left(\lg(U) + B\right)\cdot
 m_1\cdot m_2 \cdot K \cdot\LP(S_{\max})\right)$ time, where
 $K$ is an integer governing the robustness of retiming functions
 (in the sense discussed above Proposition~\ref{proposition:FreeSpaceBoundaryMaxPoly}).
Note that if the  \pprs do not satisfy the  the conditions of
Proposition~\ref{proposition:DecisionVarPoly}, then this procedure will still give an upper
bound on $ \dvar\!\left(\fpipe\big(\ceil{\rpipe(F_1)}\big), \,
\fpipe\big(\ceil{\rpipe(F_2)}\big)\right)$,
but it may be larger than the Skorokhod distance $\dvarskoro^{\dagger}(\ceil{\rpipe(F_1)}, \ceil{\rpipe(F_2)})$ between the reachpipes 
$\ceil{\rpipe(F_1)}$ and $\ceil{\rpipe(F_2)}$.

%




\smallskip\noindent\textbf{Using Sliding Windows.}
The Skorokhod metric 
allows matching an $F_1$ trace segment
in between times $t_1^0, t_1^1$ to  $F_2$ trace segments in between
times  $t_2^{m_2-1}, t_2^{m_2}$, \ie, the retimings put no limit on the timing distortions.
In practice, we have bounds on timing distortions.
As a result, we can  restrict the retimings to be in
a window $W$:
we require that trace segment $j$ of one trace only be matched
to portions of other traces consisting of segments
$j\!-\!W$ though $j\!+\!W$.
Under this restriction, the algorithm of Theorem~\ref{theorem:Final} can
be improved to run in time $O\left( (\left(\lg(U) + B\right)\cdot m\cdot W\cdot  K \cdot\LP(S_{\max})\right)$,
where $m\!=\!\max(m_1, m_2)$.
Usually $W, B$ and $K$ can be taken to be constants, 
thus we get a practical running time of
$O\left( m  \cdot \lg(U)\cdot\LP(S_{\max})\right)$, which is linear in the number of
given polytope reachsets, and linear in the LP solving time involving the largest
given polytope representation.

\section{Conclusions}

We have considered the problem of determining the distance between two tracepipes.
Such  problems arise in the analysis of dynamical systems under the presence of uncertainties
and noise.
Our starting point was the polynomial-time algorithm 
to compute the Skorokhod metric between individual traces \cite{MajumdarP15}.
Our algorithm takes as input discrete sequences of polyhedral approximations to the reach set, such as those provided
by symbolic tools such as SpaceEx \cite{GirardGM06,FrehseGDCRLRGDM11}. 
Our main result shows polynomial time algorithms to approximate the distance from above and from below.

\smallskip\noindent\textbf{Acknowledgements.}
The authors thank Fernando Pereira for helpful discussions; and
Raimund Seidel for pointing out the interpretation of
reachpipes as set-valued traces for applying the free-space technique.

{
\bibliographystyle{plain}
\bibliography{skoro-pipe}
}
\renewcommand{\baselinestretch}{1.0}

\section{Appendix}

\begin{proof}[\textbf{Proof of Theorem~\ref{theorem:ReachpipeFreMin}}]

We prove inequalities in both directions.

\medskip\noindent
(1)~$\dminfre\left(\fpipe(R_1), \fpipe(R_2)\right) \geq \dminfre^{\dagger}\!\left(R_1, R_2\right) $.\\
Consider any $f_1\in  \fpipe(R_1)$, and any  $f_2\in  \fpipe(R_2)$.
We have 
\[
\mspace{-20mu}
\dist_{\fre}(f_1, f_2) =  \inf_{\substack{
\alpha_1: [0,1] \rightarrow [0, m_1] \\
\alpha_2: [0,1] \rightarrow [0, m_2] }}\ 
\max_{0\leq \theta \leq 1} \norm {f_1\left(\alpha_1(\theta)\right) - f_2\left(\alpha_2(\theta)\right)}
\]
As in the proof of Theorem~\ref{theorem:ReachpipeFreVar},
we have that 
for every $\alpha_1, \alpha_2, \theta$, 
\[
\norm {f_1\left(\alpha_1(\theta)\right) - f_2\left(\alpha_2(\theta)\right)}
\geq 
 \Phi_{\min}\left(
R_1\left(\alpha_1(\theta)\right), \, R_2\left(\alpha_2(\theta)\right)
\right)
\]
Thus, for every $f_1\in  \fpipe(R_1)$, and  $f_2\in  \fpipe(R_2)$, we have
\[
\dist_{\fre}(f_1, f_2) \geq  \inf_{\substack{
\alpha_1: [0,1] \rightarrow [0, m_1] \\
\alpha_2: [0,1] \rightarrow [0, m_2] }}\ 
\max_{0\leq \theta \leq 1}
\Phi_{\min}\Big(
R_1\!\left(\alpha_1(\!\theta\!)\right),\,
R_2\!\left(\alpha_2(\!\theta\!)\right)
\Big)
\]
i.e., $\dist_{\fre}(f_1, f_2)  \geq \dminfre^{\dagger}\!\left(R_1, R_2\right)$.
This implies that 
$\inf_{f_1\in \fpipe(R_1), f_2\in \fpipe(R_2)} \dist_{\fre}(f_1, f_2)
\geq \dminfre^{\dagger}\!\left(R_1, R_2\right)$.
This completes the proof of the first direction.

\medskip\noindent
(2)~$\dminfre\left(\fpipe(R_1), \fpipe(R_2)\right) \leq \dminfre^{\dagger}\!\left(R_1, R_2\right) $.\\
Recall that $\dminfre\left(\fpipe(R_1), \fpipe(R_2)\right)  =$
\[
 \inf_{f_1\in \fpipe(R_1), f_2\in \fpipe(R_2)}
 \inf_{\substack{
\alpha_1: [0,1] \rightarrow [0, m_1] \\
\alpha_2: [0,1] \rightarrow [0, m_2] }}\ 
\max_{0\leq \theta \leq 1} \norm {f_1\left(\alpha_1(\theta)\right) - f_2\left(\alpha_2(\theta)\right)}.
\]
This equals (switching the $\inf$ order):
\[
 \inf_{\substack{
\alpha_1: [0,1] \rightarrow [0, m_1] \\
\alpha_2: [0,1] \rightarrow [0, m_2] }}\ 
 \inf_{f_1\in \fpipe(R_1), f_2\in \fpipe(R_2)}\ 
\max_{0\leq \theta \leq 1} \norm {f_1\left(\alpha_1(\theta)\right) - f_2\left(\alpha_2(\theta)\right)}.
\]
We need to show that the above expression is $\leq$ than:
\[
\inf_{\substack{
\alpha_1: [0,1] \rightarrow [0, m_1] \\
\alpha_2: [0,1] \rightarrow [0, m_2] }}\ 
\max_{0\leq \theta \leq 1}\  \Phi_{\min}\Big(
R_1\!\left(\alpha_1(\!\theta\!)\right),\,
R_2\!\left(\alpha_2(\!\theta\!)\right)
\Big)
\]
To show this direction of the inequality, it suffices to show that
for every pair of valid reparameterizations $\alpha_1, \alpha_2$, we have:
\begin{multline}
\label{equation:MinInqTheoremProof}
\inf_{f_1\in \fpipe(R_1), f_2\in \fpipe(R_2)}\ 
\max_{0\leq \theta \leq 1} \norm {f_1\left(\alpha_1(\theta)\right) - f_2\left(\alpha_2(\theta)\right)}\\[-1.5mm]
\leq\\[-1mm]
\max_{0\leq \theta \leq 1}\  \Phi_{\min}\Big(
R_1\!\left(\alpha_1(\!\theta\!)\right),\,
R_2\!\left(\alpha_2(\!\theta\!)\right)
\Big)
\end{multline}
The formal proof of the above inequality is technical.
%
%
We sketch the main ideas.
Fix $\alpha_1, \alpha_2$ reparameterizations.
Define the function $\minpairs$  from $ [0,1] $ to
subsets of $\reals^{d+1} \times \reals^{d+1}$ as
 $\minpairs(\theta) =$
\[
\left\{
\tuple{\bp_1, \bp_2} \ \left\arrowvert\ 
\begin{array}{l}
\bp_1\in R_1\!\left(\alpha_1(\theta)\right),
\text{ and } 
\bp_2\in R_2\!\left(\alpha_2(\theta)\right),\\
\text{ and }
\norm{\bp_1-\bp_2} \leq \Phi_{\min}
\Big(
R_1\!\left(\alpha_1(\!\theta\!)\right),\,
R_2\!\left(\alpha_2(\!\theta\!)\right)
\Big)
\end{array}
\right.\mspace{-10mu}\right\}
\]  
That is, $\minpairs(\theta)$ contains point pairs
$\tuple{\bp_1, \bp_2}$ with $\bp_1\in R_1\!\left(\alpha_1(\theta)\right)$,
and 
$\bp_2\in R_2\!\left(\alpha_2(\theta)\right)$ such that
$\bp_1, \bp_2$ are the closest points in the corresponding polytopes
$ R_1\!\left(\alpha_1(\theta)\right)$ and 
$R_2\!\left(\alpha_2(\theta)\right)$
(there may be several such pairs for the two polytopes).
It can be shown that for each $\theta$, we can pick a single point tuple
from $\minpairs(\theta) $, namely
$\tuple{\bp_1^{\theta}, \bp_2^{\theta}}$ such that
the functions $\curve_1\left(\alpha_1(\theta)\right) = \bp_1^{\theta}$ and
$\curve_2\left(\alpha_2(\theta)\right) = \bp_2^{\theta}$ are continuous functions from 
$[0, m_1]$ and $[0, m_2]$ to $\reals^{d+1}$, ie. they are continuous traces.
This can be done due to the fact that $R_1$ and $R_2$ are \pprs and thus the
polygons $ R_1\!\left(\alpha_1(\theta)\right)$ and 
$R_2\!\left(\alpha_2(\theta)\right)$ change smoothly with respect to $\theta$.

Observe that the curves $\curve_1$ and $\curve_2$ are such that
\begin{multline*}
\max_{0\leq \theta \leq 1} \norm {\curve_1\!\left(\alpha_1(\theta)\right) - \curve_2\!\left(\alpha_2(\theta)\right)}\\[-1.5mm]
\leq\\[-1mm]
\max_{0\leq \theta \leq 1}\  \Phi_{\min}\Big(
R_1\!\left(\alpha_1(\!\theta\!)\right),\,
R_2\!\left(\alpha_2(\!\theta\!)\right)
\Big)
\end{multline*}
This prove Inequality~\ref{equation:MinInqTheoremProof}.
This concludes the second part of the  theorem proof.
\qedhere
\end{proof}

\medskip
\begin{proof}[\textbf{Proof of Lemma~\ref{lemma:MinNew}}]

The basic idea is that we introduce variables
$\bz^0= \lambda\cdot \bx^0$ and 
$\bz^1= (1-\lambda)\cdot \bx^0$, and we multiply both sides of
 $A_1^0\cdot \bx^0 \leq \bb_1^0$ by $\lambda$, and
of $A_1^1\cdot \bx^1 \leq \bb_1^1$ by $1-\lambda$.
For the two optimization problems to be the same, it suffices to show that
for any $0\leq \lambda\leq 1$, and for
any $\by$ satisfying $A_2\cdot \by \leq \bb_2$, 
{
\begin{equation}
\label{opt:XDomain}
\begin{alignedat}{2} 
 \text{there exist } \bx^0, \bx^1
\text{  such that: }\quad & \norm{\lambda \cdot \bx^0 + (1-\lambda) \cdot \bx^1 - 
\by} \leq \delta\\
  \text{with }\  &
 A_1^0\cdot \bx^0 \leq \bb_1^0\\
& A_1^1\cdot \bx^1 \leq \bb_1^1
\end{alignedat}
\end{equation}
}
iff 
{
\begin{equation}
\label{opt:ZDomain}
\begin{alignedat}{2} 
 \text{there exist } \bz^0, \bz^1
\text{  such that: }\quad &  \norm{\bz^0 + \bz^1  - 
\by} \leq \delta\\ 
  \text{with  }\  &
 A_1^0\cdot \bz^0 \leq \lambda\cdot \bb_1^0\\
& A_1^1\cdot \bz^1 \leq (1-\lambda)\cdot \bb_1^1
\end{alignedat}
\end{equation}
}
Fix a $\lambda$, and a $\by$ vector.
We show the above equivalence.

\noindent
\textbf{``Only if''.}
Suppose there exist $\bx^0 , \bx^1$ satisfying constraints~\ref{opt:XDomain}.
Let $\bz^0= \lambda\cdot \bx^0$ and 
$\bz^1= (1-\lambda)\cdot \bx^0$.
Observe that  $ \bz^0, \bz^1$ satisfy the conditions of the second system, and also
$\norm{\bz^0 + \bz^1  - 
\by} \leq \delta$ as $\norm{\lambda \cdot \bx^0 + (1-\lambda) \cdot \bx^1 - 
\by} \leq \delta$.
This concludes the proof of the ``Only if'' direction.

\noindent
\textbf{``If''.}
Suppose there exist $\bz^0 , \bz^1$ satisfying constraints~\ref{opt:ZDomain}.
If $\lambda \neq 0$ and $\lambda \neq 1$, then take 
$\bx^0 = \frac{1}{\lambda}\cdot \bz^0$, and 
$\bx^1 = \frac{1}{1-\lambda}\cdot \bz^1$.
It can be checked that $\bx^0, \bx^1$ satisfy constraints~\ref{opt:XDomain}.

Now suppose $\lambda = 0$.
The point $\bz^0 , \bz^1$ thus also satisfy:
{
\begin{equation*}
\begin{alignedat}{2} 
  \norm{\bz^0 + \bz^1  - 
\by} \leq \delta\\ 
  \qquad \qquad  \text{ with  }\ &
 A_1^0\cdot \bz^0 \leq \bm{0}\\
& A_1^1\cdot \bz^1 \leq  \bb_1^1
\end{alignedat}
\end{equation*}
}
If $\bb^0=\bm{0}$, then $\bx^0 = \bz^0$ and $\bx^1 = \bz^1$
satisfy constraints~\ref{opt:XDomain}.

Suppose $\bb^0\neq \bm{0}$.
From Lemma~\ref{lemma:PolytopeZero}, since $A_1^0\cdot \bz^0 \leq \bm{0}$,
we must have that $\bz^0 = \bm{0}$.
Thus,  we have $\norm{ \bz^1  - 
\by} \leq \delta$ with $A_1^1\cdot \bz^1 \leq  \bb_1^1$.
Now we let $\bx^0$ be any point in the polytope 
$A_1^0\cdot \bx^0 \leq \bb_1^0$, and $\bx^1 = \bz^1$.
It can be seen that these $ \bx^0, \bx^1$ satisfy
{
\begin{equation*}
\begin{alignedat}{2} 
   \norm{ \bx^1  - 
\by} \leq \delta\\ 
 \qquad\qquad   \text{ with  }\  &
 A_1^0\cdot \bx^0 \leq \bb^0\\
& A_1^1\cdot \bx^1 \leq  \bb_1^1
\end{alignedat}
\end{equation*}
}
The case of $\lambda = 1$ is similar.
This concludes the proof of he ``If'' part, and thus also the proof of the lemma.
\end{proof}

\begin{lemma}
\label{lemma:CoarseUpperBound}
Let $R_1, R_2$ be \pprs represented by $m_1, m_2$ polytopes respectively with
$m_2\geq m_1$.
Fix $\alpha_1: [0, 1]\rightarrow [0, m_1]$ to be any non-decreasing reparameterization
such that $\alpha_1(\theta) = m_1$ for $\theta \geq 0.5$;
and let  $\alpha_2: [0, 1]\rightarrow [0, m_1]$ be a non-decreasing reparameterization
such that $\alpha_2(\theta) = \alpha_1(\theta)$ for $\theta \leq 0.5$, and
$\alpha_2$ over $[0.5, 1]$ being non-decreasing to $[m_1, m_2]$.
The value of   $\dvarfre^{\dagger}(R_1, R_2)$ is at most
the maximum of
$ \max_{i\in \set{0, 1,\dots, m_1}}\Phi_{\max}\left(R_1(i), R_2(i)\right)$
and
$ \max_{j \in  \set{m_1,\dots, m_2}}\Phi_{\max}\left(R_1(m_1), R_2(j)\right)$.
\end{lemma}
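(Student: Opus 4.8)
The plan is to exploit that $\dvarfre^{\dagger}(R_1,R_2)$ is defined as an \emph{infimum} over admissible reparameterization pairs (Equation~\eqref{equation:PhiMaxReframe}), so that evaluating the objective at \emph{any} one admissible pair already furnishes an upper bound. First I would verify that the specified pair $(\alpha_1,\alpha_2)$ is admissible: each map is continuous and non-decreasing, $\alpha_1$ is surjective onto $[0,m_1]$ (it sweeps $[0,m_1]$ on $[0,0.5]$ and is constant at $m_1$ thereafter), and $\alpha_2$ is surjective onto $[0,m_2]$ (it agrees with $\alpha_1$ on $[0,0.5]$, reaching $m_1$, then sweeps $[m_1,m_2]$ on $[0.5,1]$). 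Since neither is strictly increasing, I would invoke the fact, noted after Proposition~\ref{proposition:FreeSpaceCurve}, that allowing non-decreasing reparameterizations does not change the value of $\dvarfre^{\dagger}$. Plugging $(\alpha_1,\alpha_2)$ into Equation~\eqref{equation:PhiMaxReframe} then gives
\[
\dvarfre^{\dagger}(R_1,R_2)\ \leq\ \max_{0\leq\theta\leq1}\Phi_{\max}\big(R_1(\alpha_1(\theta)),R_2(\alpha_2(\theta))\big).
\]

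Next I would evaluate the right-hand maximum by splitting the parameter range at $\theta=0.5$. On $[0,0.5]$ we have $\alpha_1(\theta)=\alpha_2(\theta)=:t$ ranging over $[0,m_1]$, so this regime contributes the ``diagonal'' comparison $\max_{t\in[0,m_1]}\Phi_{\max}(R_1(t),R_2(t))$. On $[0.5,1]$ we have $\alpha_1(\theta)=m_1$ fixed while $\alpha_2(\theta)=:t$ sweeps $[m_1,m_2]$, contributing $\max_{t\in[m_1,m_2]}\Phi_{\max}(R_1(m_1),R_2(t))$. It then remains to show that each of these \emph{continuous} maxima equals the corresponding maximum over integer parameter values, which is precisely the form asserted by the lemma.

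The crux is this reduction to integer endpoints, and it is where I expect the real work. Restrict attention to a single cell $[k,k+1]$ and write $t=k+s$ with $s\in[0,1]$. Because the \pprs interpolate via scaling and Minkowski sums, any point $p_1\in R_1(t)$ has the form $p_1=(1-s)a+sb$ with $a\in R_1(k),\,b\in R_1(k+1)$, and similarly $p_2=(1-s)c+sd$ with $c\in R_2(k),\,d\in R_2(k+1)$. Hence $p_1-p_2=(1-s)(a-c)+s(b-d)$, and by the triangle inequality and positive homogeneity of $\norm{\cdot}_L$ (the same norm-convexity used in Lemma~\ref{lemma:Convexity}),
\[
\norm{p_1-p_2}_L\ \leq\ (1-s)\norm{a-c}_L+s\norm{b-d}_L.
\]
Maximizing over the generating points and then over $s$ yields
\[
\Phi_{\max}(R_1(t),R_2(t))\ \leq\ \max\big\{\Phi_{\max}(R_1(k),R_2(k)),\,\Phi_{\max}(R_1(k+1),R_2(k+1))\big\},
\]
so on each cell the supremum is dominated by an integer endpoint. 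The identical argument, with the fixed polytope $R_1(m_1)$ in place of $R_1(k),R_1(k+1)$, handles the second regime.

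Finally, ranging over all cells of $[0,m_1]$ and of $[m_1,m_2]$ respectively, and using that the integer values are themselves attained (so the discrete maxima are also lower bounds), I would conclude that the two continuous maxima coincide with $\max_{i\in\set{0,\dots,m_1}}\Phi_{\max}(R_1(i),R_2(i))$ and $\max_{j\in\set{m_1,\dots,m_2}}\Phi_{\max}(R_1(m_1),R_2(j))$. Taking the larger of the two regimes then delivers the stated bound on $\dvarfre^{\dagger}(R_1,R_2)$. The only genuinely technical point is the per-cell convexity estimate, but it is a direct consequence of norm subadditivity together with the linear (Minkowski) interpolation scheme, so no new machinery beyond Lemma~\ref{lemma:Convexity} is needed.
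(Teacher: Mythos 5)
Your proposal is correct and follows essentially the same route as the paper: evaluate the infimum at the given admissible pair $(\alpha_1,\alpha_2)$, and reduce the resulting continuous maximum to the integer sample points via the per-cell estimate $\norm{p_1-p_2}_L\leq(1-s)\norm{a-c}_L+s\norm{b-d}_L$, which is exactly the convexity computation the paper uses. The only difference is presentational — you make the split at $\theta=0.5$ and the admissibility check explicit where the paper leaves them implicit.
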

\begin{proof}
Since $\alpha_1, \alpha_2$ are valid non-decreasing reparameterizations,
we  have
\[
\dvarfre^{\dagger}(R_1, R_2) \ \leq
\ 
\max_{0\leq \theta \leq 1} \Phi_{\max}\big(
R_1\!\left(\alpha_1(\!\theta\!)\right),\,
R_2\!\left(\alpha_2(\!\theta\!)\right)
\!\!\big)
\]
It is clear that
$\max_{0\leq \theta \leq 1} \Phi_{\max}\big(
R_1\!\left(\alpha_1(\!\theta\!)\right),\,
R_2\!\left(\alpha_2(\!\theta\!)\right)
\!\!\big)$ cannot be smaller than the
maximum of $ \max_{i\in \set{0, 1,\dots, m_1}}\Phi_{\max}\left(R_1(i), R_2(i)\right)$
and
$ \max_{j \in  \set{m_1,\dots, m_2}}\Phi_{\max}\left(R_1(m_1), R_2(j)\right)$.
We prove that the  two quantities are equal.
To prove this, it suffices to show that
if 
(a)~$\Phi_{\max}\left(R_1(i), R_2(i)\right) \leq \delta$, and
(b)~$\Phi_{\max}\left(R_1(i+1), R_2(i+1)\right) \leq \delta$,
then
for all $0\leq\lambda \leq 1$,
we have 
\[
\Phi_{\max}\left(
\begin{array}{l}
\lambda\cdot R_1(i) + (1-\lambda)\cdot R_1(i+1),\\
\lambda\cdot R_2(i) + (1-\lambda)\cdot R_2(i+1)
\end{array}
\right) 
\leq \delta.
\]
We prove the above as follows.
Assume (a)~$\Phi_{\max}\left(R_1(i), R_2(i)\right) \leq \delta$, and
(b)~$\Phi_{\max}\left(R_1(i+1), R_2(i+1)\right) \leq \delta$.
Let $\bp_1^{i}\in R_1(i)$, and $\bp_1^{i+1}\in R_1(i+1)$, and
$\bp_2^{i}\in R_2(i)$, and $\bp_2^{i+1}\in R_2(i+1)$.
We have
\begin{multline*}
\norm{\lambda \bp_1^{i} + (1-\lambda) \bp_1^{i+1}
-
\left(\lambda \bp_2^{i} + (1-\lambda) \bp_2^{i+1}\right)
}\\
\leq  \norm{\lambda \left(\bp_1^{i} -  \bp_2^{i}\right)}
+ \norm{(1-\lambda) \left(\bp_1^{i+1} -  \bp_2^{i+1}\right)}\\
\leq \lambda \delta + (1-\lambda) \delta = \delta
\end{multline*}
This concludes the proof.
\end{proof}

\begin{lemma}
\label{lemma:PhiMaxCompute}
Let $Q_1$ and $Q_2$ be  polytopes in  $\ptopereals$.
The value $\Phi_{\max}(Q_1, Q_2) $ can be computed in
time $O\left( \LP(|Q_1| + |Q_2|)\right)$ where 
$|Q_1|$ and $|Q_2|$ denote the halfspace representation sizes
of the respective polytopes, 
and
 $\LP()$ is the (polynomial time) upper bound for solving
linear programming.
\end{lemma}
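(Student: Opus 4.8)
The plan is to view $\Phi_{\max}(Q_1,Q_2)=\max_{\bp\in Q_1,\,\bq\in Q_2}\norm{\bp-\bq}_L$ as the maximization of the (convex) norm over the product polytope $Q_1\times Q_2\subseteq\reals^{2(d+1)}$, and then to reduce this convex maximization to a small, fixed number of linear programs by exploiting the piecewise-linear structure of the norms $L\in\{L_\infty,\,L_1^{\max}\}$ under consideration. The first step I would take is to rewrite the norm through its support-function (dual-norm) description, $\norm{\bv}_L=\max\{\,\bw^{\transpose}\bv \mid \norm{\bw}_{L^*}\le 1\,\}$, where $L^*$ is the dual norm. Substituting this and exchanging the two maximizations gives
\[
\Phi_{\max}(Q_1, Q_2) = \max_{\norm{\bw}_{L^*}\le 1}\ \max_{\bp\in Q_1,\, \bq\in Q_2}\ \bw^{\transpose}(\bp - \bq).
\]
For a fixed direction $\bw$ the inner problem is linear in $(\bp,\bq)$, and because the constraints defining $Q_1$ and $Q_2$ are independent it decouples into two separate linear programs, $\max_{\bp\in Q_1}\bw^{\transpose}\bp$ and $-\min_{\bq\in Q_2}\bw^{\transpose}\bq$, each an LP of size $O(\abs{Q_1}+\abs{Q_2})$.

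Next I would remove the outer maximization over the dual ball. Since $\bw\mapsto\max_{\bp,\bq}\bw^{\transpose}(\bp-\bq)$ is itself convex, its maximum over the compact convex set $\{\norm{\bw}_{L^*}\le1\}$ is attained at a vertex of that set; hence it suffices to let $\bw$ range over the vertices of the dual unit ball. For $L=L_\infty$ the dual norm is $L_1$, whose unit ball is the cross-polytope with the $2(d+1)$ vertices $\pm\bm{e}_1,\dots,\pm\bm{e}_{d+1}$, so the computation collapses to the coordinatewise formula
\[
\Phi_{\max}(Q_1,Q_2) = \max_{1\le k\le d+1}\ \max\!\Big(\max_{\bp\in Q_1}(\bp)_k - \min_{\bq\in Q_2}(\bq)_k,\ \max_{\bq\in Q_2}(\bq)_k - \min_{\bp\in Q_1}(\bp)_k\Big).
\]
Thus $2(d+1)$ coordinate optimizations over $Q_1$ and $Q_2$, each an LP of size $O(\abs{Q_1}+\abs{Q_2})$ and solvable by the variable-doubling technique of \cite{MajumdarP15}, suffice; the $L_1^{\max}$ case is treated identically through the vertices of its dual ball. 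Taking the maximum over the finitely many directions yields $\Phi_{\max}$, and absorbing the $O(d)$ factor into the polynomial LP bound gives the claimed $O(\LP(\abs{Q_1}+\abs{Q_2}))$ running time.

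The main obstacle is exactly that $\Phi_{\max}$ is the maximization of a convex function over a polytope, which is \emph{not} directly expressible as a single LP (in contrast with the constraint $\norm{\cdot}\le\delta$ used for $\Phi_{\min}$, which is convex). The whole argument therefore rests on reducing the problem to linear optimization along the extreme directions of the dual unit ball, and the complexity hinges on that ball having only a controlled number of vertices. This is immediate for $L_\infty$, where the dual ball contributes the $2(d+1)$ axis directions above; for $L_1^{\max}$ I would verify separately that the directions needed stay within the stated budget, since a blind enumeration of the dual-ball vertices is the delicate point one must guard against.
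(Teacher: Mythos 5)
Your route is genuinely different from the paper's. The paper simply writes down the single program ``maximize $\Delta$ subject to $\norm{\bx^1-\bx^2}\geq\Delta$, $\bx^1\in Q_1$, $\bx^2\in Q_2$'' and asserts it can be solved by linear programming via the variable-doubling device used for $\Phi_{\min}$; you instead pass to the dual norm, write
$\Phi_{\max}(Q_1,Q_2)=\max_{\norm{\bw}_{L^*}\le 1}\left(\max_{\bp\in Q_1}\bw^{\transpose}\bp-\min_{\bq\in Q_2}\bw^{\transpose}\bq\right)$,
and enumerate the extreme directions of the dual unit ball. Your argument is the more careful one: the constraint $\norm{\bx^1-\bx^2}\geq\Delta$ is reverse-convex, so the feasible region is not convex and the problem is not literally an LP; what actually makes it tractable is exactly the case split over dual-ball vertices that you make explicit. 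For $L_\infty$ your proof is complete and correct ($2(d+1)$ directions, hence $O(d)$ LP calls --- a factor $d$ that the paper's $O\left(\LP(\abs{Q_1}+\abs{Q_2})\right)$ bound quietly drops, but harmless for a polynomial-time claim).

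The gap is the $L_1^{\max}$ case, which you flag but do not close, and which cannot be ``treated identically.'' The dual of $\norm{\tuple{\bp,t}}_{L_1^{\max}}=\max\left(\norm{\bp}_{L_1},\abs{t}\right)$ is $\norm{\bw}_{L_\infty}+\abs{s}$, whose unit ball is a bipyramid over the hypercube with the $2^d+2$ vertices $(\pm1,\dots,\pm1,0)$ and $(0,\dots,0,\pm1)$; blind vertex enumeration is therefore exponential in $d$, not within the stated budget. This is not merely presentational: maximizing an $L_1$-type norm over a polytope given by halfspaces is NP-hard in general, so no polynomial enumeration of directions can work without exploiting extra structure. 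Note, however, that the paper's own one-line claim --- that the $\geq\Delta$ constraint is handled ``by the same methods'' as the convex $\leq\delta$ constraint for $\Phi_{\min}$ --- elides the same difficulty (the substitution $v_k=v_k^+-v_k^-$ is only sound in the convex direction), so your hesitation identifies a real weakness in the lemma as stated for $L_1^{\max}$, not just in your own argument.
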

\begin{proof}
Let $Q_1$ have the halfspace representation $A_1^0\cdot \bx^1 \leq \bb_1$,
and let $Q_2$ be $A_2\cdot \bx^2 \leq \bb_2$ for
$\bx^1$ and $\bx^2$ column vectors  of $d+1$ variables taking values in $\reals$.
The value of $\Phi_{\max}(Q_1, Q_2) $ is the solution to the
following constraint problem:
{ 
\begin{equation}
\label{opt:PhiMaxCompute}
\begin{alignedat}{2} 
  \text{maximize  }\  &\  \Delta\\ 
  \text{such that }\  &
\norm{\bx^1- \bx^2} \geq \Delta\\
& A_1^0\cdot \bx^1 \leq \bb_1\\
& A_2\cdot \bx^2 \leq \bb_2\\
& 0 \leq  \Delta 
\end{alignedat}
\end{equation}
}
The optimization problem~\eqref{opt:PhiMaxCompute} can be solved using
linear programming.
Suppose the solution of the  optimization problem~\eqref{opt:PhiMaxCompute}
is $\delta$.
It means that
(i)~there are no points $\bp^1\in  Q_1$, and  $\bp^2\in  Q_2$ such that
$\norm{\bp^1- \bp^2} > \delta$, and
(ii)~there exist points $\bp^1\in  Q_1$, and  $\bp^2\in  Q_2$ 
such that
$\norm{\bp^1- \bp^2} \leq \delta$.
These two facts imply $\Phi_{\max}(Q_1, Q_2) = \delta$.
\end{proof}



\end{document}